\newtheorem{mydef}{Definition}
\newtheorem{assumption}{Assumption}
\newtheorem{lemma}{Lemma}
\newtheorem{mythm}{Theorem}
\newtheorem{remark}{Remark}
\newtheorem{example}{Example}
\newtheorem{corollary}{Corollary}
\newenvironment{proof}{{\noindent \textbf{Proof:}}}{\hfill $\square$ \par}
\begin{document}
%
\title{Single-Leader-Multiple-Followers Stackelberg Security Game with Hypergame Framework}
%
%
%

\author{Zhaoyang~Cheng,
        Guanpu~Chen,
        and~Yiguang~Hong,~\IEEEmembership{Fellow,~IEEE}
 \thanks{This work was supported by Shanghai Municipal Science and
Technology Major Project (No. 2021SHZDZX0100), and by the
National Natural Science Foundation of China (Nos. 62173250 and 61733018).}
\thanks{Z. Cheng is with Key Laboratory of Systems and Control, Academy of Mathematics and Systems Science, Beijing, 100190, China, and is also with School of Mathematical Sciences, University of Chinese Academy of Sciences,Beijing, 100190, China 
(e-mail: chengzhaoyang@amss.ac.cn).
}
\thanks{G. Chen is with JD Explore Academy, Beijing, 100176, China, and is also with Key Laboratory of Systems and Control, Academy of Mathematics and Systems Science, Beijing, China
(e-mail: chengp@amss.ac.cn).
}
\thanks{Y. Hong is with Department of Control Science and Engineering \&
Shanghai Research Institute for Intelligent Autonomous Systems, Tongji
University, Shanghai, 201804, China, and is also with Key Laboratory of Systems
and Control, Academy of Mathematics and Systems Science, Beijing, 100190, China  (e-mail: yghong@iss.ac.cn).}
\thanks{}}

%
%

\markboth{ }%
{}
%



\maketitle

\begin{abstract}

In this paper, we employ a hypergame framework to analyze the single-leader-multiple-followers (SLMF)  Stackelberg security game with two typical misinformed situations: misperception and deception. We provide a stability criterion with the help of hyper Nash equilibrium (HNE) to investigate both strategic stability and cognitive stability of equilibria in SLMF games with misinformation.  In fact,  we find mild  stable  conditions such that the equilibria with misperception and deception can become HNE. Moreover,  we discuss the robustness  of the equilibria to reveal whether players have the ability to keep their profits under the influence of some misinformation. 

\end{abstract}

\begin{IEEEkeywords}
Stackelberg Security Game,  Hypergame, Misperception, Deception, Cognition, Stability.
\end{IEEEkeywords}

%
\IEEEpeerreviewmaketitle

\section{Introduction}
%
%
%
%


\IEEEPARstart{S}{ecurity} games describe  the situation between defenders and  attackers, with applications in many fields such as cyber-physical system (CPS), infrastructure protection, and  counterterrorism problems  \cite{korzhyk2011stackelberg,8017519,kar2017trends,9036917,hausken2011defending}.
The Stackelberg security game is one of the significant categories to characterize  practical conflict \cite{korzhyk2011stackelberg,8017519,kar2017trends}. As a  fundamental model in \cite{korzhyk2011stackelberg}, the leader is a defender to prevent invading, while the follower is an attacker to implement malicious behaviors after observing the leader's action.
In addition,  security models with multiple followers are also important since followers can not be usually treated as a  monolithic party, considering that they may  have different preferences, capabilities, and operational strategies \cite{9036917}.

Misinformation occurs in lots of security games  \cite{schlenker2018deceiving,chen2020learning,albanese2016deceiving,8705326}, which may lead to  players' different observations. Specifically, misperception and deception are two typical misinformed situations \cite{8750848}. On the one hand, misperception is usually caused by  external disturbances, bounded rationality,  or accidental errors  \cite{5199443,8691466,chen2021distributed}. For instance,  limited attention  of players in   the Internet of Things (IoT)  increases cyber risks of the community \cite{8691466}. Accordingly, the equilibrium of Stackelberg security games with misperception can be described by the misperception strong Stackelberg equilibrium (MSSE) \cite{8705326}, where no player can change others' observations.
On the other hand, deception usually  results from    belief manipulation,  concealment,   or  camouflages \cite{bakker2020hypergames,zhang2018online,6587320}.  For instance,   in CPS,   the network administrator might change the system's TCP/IP stack and obfuscate the services running on the port, while the hacker is probing the system \cite{schlenker2018deceiving}. Accordingly, the equilibrium of Stackelberg security games with deception can be defined as the deception strong Stackelberg equilibrium (DSSE) \cite{ijcai2019-75}, and the hoaxer can manipulate others' observations, which is different from  the situation with misperception.

In fact, both DSSE and MSSE reflect players' strategic stability, that is, each player has no will to change its own strategy unilaterally. However, due to misinformation, players' cognitive stability \cite{sasaki2015modeling,sasaki2017generalized} is crucial for whether players trust their observations of the game. Actually, players' suspicions of their cognitions may ruin the balance or even lead to the collapse of the model. For example, players may realize the biased misperception and intend to explore the truth \cite{cranford2020toward,8691466}, or the hoaxer does not prefer the current deception along with unsatisfactory benefits \cite{ortmann2002costs,HELLER2019223}. The most existent works have not paid enough attention to cognitive stability of equilibrium analysis in security games, including players changing their communication neighbors \cite{manshaei2013game}, keeping their current cooperators \cite{saad2011coalitional}, or maintaining power systems \cite{an2020stackelberg}.

Fortunately, hypergames provide an effective tool to analyze both strategic and cognitive stability of games with misinformation. Roughly speaking, hypergames describe complex situations when players have different understandings by decomposing a game into multiple subjective games \cite{kovach2015hypergame}.
Hyper Nash equilibrium (HNE) \cite{sasaki2008preservation} is a core concept in hypergames, which represents the best response in each player's subjective game. Once achieving a HNE, each player not only rejects changing its strategy unilaterally, but also trusts its observation of the game since others' strategies are consistent with its own cognition. Such analogous discussions on cognitive stability with HNE have been applied in various circumstances, including resource allocation, military conflicts, and economics \cite{kovach2015hypergame,HELLER2019223}.

Therefore, the motivation of this paper is to employ  a hypergame framework and HNE to investigate the strategic stability and cognitive stability of MSSE and DSSE  in   SLMF Stackelberg security games with misinformation.

 \noindent\textbf{Related Works:} Security games with misinformation have been widely investigated. 
Misperception is one of the typical situations with imperfect observations among players, such as in the infrastructure protection with disturbed invaders \cite{korzhyk2011stackelberg,hausken2011protecting,zhang2019modeling}, and in CPS with small random fluctuations \cite{5199443,kraemer2007human}.
Deception is another typical situation where some players mislead others’ observations. In online negotiations, participants aim to interfere others' beliefs \cite{8573896,albanese2016deceiving}, while in cyber deception games, the network administrator obfuscates the
services running on the port to deceive hackers \cite{schlenker2018deceiving,albanese2016deceiving,8705326}. Additionally, in signaling games, a defender sends deceptive signals to a receiver \cite{feng2017signaling,8573896}.

Different properties of equilibria appear in security games. 
As we know, the Nash equilibrium (NE) is one of the well-known concepts, and  \cite{manshaei2013game,saad2011coalitional} analyzed the conditions when players prefer to keep their current cooperators by NE in wireless
security problems. With the Stackelberg equilibrium (SE),  \cite{ijcai2019-75,ijcai2017-516,feng2019using} analyzed counterterrorism problems with imitative criminals or different attack types in the single-leader-single-follower (SLSF) security game, while \cite{an2020stackelberg} used the cost-based SE for the
optimal allocation of players’ investment resources in power systems. Moreover,  robustness in equilibria with misinformation was also widely discussed in security games. \cite{7577775} considered the sensor networks' robustness capacity for resisting interference signals in Denial-of-Service (DoS) attack problems, while \cite{bakker2020hypergames} analyzed the defender's robustness against the attacker's manipulation in CPS.

Besides strategic stability of equilibria,  cognitive stability has also attracted extensive attention. \cite{sasaki2017generalized} utilized cognitive stability to describe whether each player trusts its own cognition in financial markets without awareness. \cite{ortmann2002costs,HELLER2019223} showed that the misinformation might ruin the balance among players' cognitions of the game, and players' preference changes with the deception in economics and psychology with deception. Also,  \cite{sasaki2008preservation,sasaki2015modeling} revealed a relationship between HNE and cognitive stability in human interactive situations.

 \noindent\textbf{Contributions:} The main contributions  are  as  follows:

\begin{itemize}
\item  We provide a novel second-level hypergame model for     SLMF Stackelberg security games with misperception and deception, and present   a  stability evaluation criterion by HNE. Moreover, compared with  the current stability analysis in security games \cite{manshaei2013game,an2020stackelberg,saad2011coalitional},  the stability criterion based on HNE reflects that players not only avoid changing their strategies unilaterally, but also tend to believe their own cognitions of the game.

\item We   show two different stable conditions such that  MSSE and DSSE can become HNE. In such stable conditions, a HNE as an evaluation criterion covers the stable states  when players do not realize the inherent misperception \cite{sasaki2008preservation} or the hoaxer has no will to change its manipulation under deception \cite{HELLER2019223}. Furthermore, we   show the broad applicability of the obtained stable conditions by verifying them in  typical circumstances \cite{korzhyk2011stackelberg,hausken2011protecting,zhang2019modeling}.

\item With the help of HNE, we also investigate the robustness of  MSSE and  DSSE   to reveal players' capacities to keep their profits.  We give  lower bounds of  MSSE and   DSSE  to describe whether  players can safely ignore the misperception and easily implement deception in different misinformed situations, respectively.

\end{itemize}

The rest of this paper is organized as follows: Section \uppercase\expandafter{\romannumeral2} introduces the SLMF Stackelberg security game with  misperception and deception, and also describes a second-level Stackelberg hypergame model. Section \uppercase\expandafter{\romannumeral3} provides equilibrium analysis with a stability evaluation criterion.  Then Section \uppercase\expandafter{\romannumeral4} gives sufficient conditions such that MSSE and DSSE are   HNE, while  Section \uppercase\expandafter{\romannumeral5}   analyzes the robustness of
the derived equilibria. Additionally, Section \uppercase\expandafter{\romannumeral6} gives   numerical simulations to illustrate our results. Finally, Section \uppercase\expandafter{\romannumeral7} summarizes this paper. A summary of important notations is provided in  Table \ref{ta::1}.

\begin{table}[htbp]
\renewcommand{\arraystretch}{1.3}
\caption{IMPORTANT NOTATIONS}
\centering
\begin{tabular}{|l|l|}
\hline{Notations} & Description \\
\hline$n$ & Number of followers. \\
\hline$l$ & The leader. \\
\hline$\mathbf{P}$ & Set of followers. \\
\hline${\Omega}_l$ & Strategy set of the leader. \\
\hline$\mathbf{\Omega}_f$ & Strategy set of followers. \\
\hline$U_l$ & Utility function of the leader. \\
\hline$\mathbf{U}_f$ & Set of followers' utility functions. \\
\hline$\mathcal{G}$ & SLMF  Stackelberg security game.\\
\hline$\theta_0$ & True value of  a parameter in $\mathcal{G}$. \\
\hline$\theta'$ & Misinformation of $\theta_0$. \\
\hline$\Theta$ & Set of all possible $\theta'$. \\
\hline$\theta^*$ & Optimal deception of $\theta'$ by the leader. \\

\hline$\mathcal{H}^1$ & First-level hypergame. \\
\hline$\mathcal{H}^2(\theta')$ & Second-level hypergame with misperception $\theta'$. \\
\hline$\mathcal{H}^2(\Theta)$ & Second-level hypergame with deception in $\Theta$. \\
\hline $\text{BR}_i$ & Best response of the $i$th follower. \\
\hline$\mathbf{1}_n$ & Row vector with all elements of one. \\
\hline$I_n$ & $n\times n$ identity matrix. \\
\hline$\text{int}(\cdot)$ & Interior of the set. \\
\hline$\chi(\cdot)$ & Indicative function. \\

\hline$\parallel\cdot\parallel$ &  Euclidean norm. \\
\hline
\end{tabular}

\end{table}\label{ta::1}

\section{Security Game and Hypergame}

In this section, we formulate two kinds  of SLMF Stackelberg security games with misperception and deception,  respectively, and we provide a second-level Stackelberg hypergame for SLMF  games with misinformation.

\subsection{SLMF Stackelberg Security Game}

In the Stackelberg security game model \cite{korzhyk2011stackelberg}, the leader is a defender, and the follower is an attacker to attack $K$ targets. The attacker chooses to attack a certain target, while the defender tries to prevent attacks by covering targets with resources from a feasible set. In practice, there may be multiple followers with different preferences, capabilities, and strategies. For instance, there are   multiple layers \cite{8017519} and edge caching devices \cite{9036917} in wireless networks. Also, governments may be confronted with numerous  attackers in counterterrorism problems \cite{hausken2011defending}.

Define the SLMF Stackelberg security game by $\mathcal{G}=\big\{\{l\}\cup\mathbf{P}, \Omega_l\times\mathbf{\Omega}_f, \{U_l\}\cup\mathbf{U}_f\big\}$, where $l$ is the leader and $\mathbf{P}=\{1,\dots,n\}$ is the set of followers. $\Omega_l\subseteq\mathbb{R}^K$ is the strategy set of the leader and $\mathbf{\Omega}_f=\Omega_1\times \dots \times \Omega_n$, where  $\Omega_i\subseteq \mathbb{R}^K$ is the strategy set of the $i$th follower.  Also, $U_l: \Omega_l\times\boldsymbol{\Omega}_f \to \mathbb{R}$ is the leader's utility function and $\mathbf{U}_f=\{U_1,\dots,U_n\}$, where  $U_i:\Omega_l\times \Omega_i\to \mathbb{R}$ is the utility function of the $i$th follower. The leader's utility is influenced by all players actions, while each follower's utility only relies on its own action and the leader's action. On this basis, each player aims at maximizing its own utility function. Denote
the target set as $T=\{t_1,\dots,t_K\}$, which each follower (attacker) aims at attacking but the leader (defender) tries to protect. Then the SLMF game model $\mathcal{G}$ can be presented concisely in Fig. \ref{fi::31}. Suppose that the leader has a resource $R_l\in\mathbb{R}$ to assign on each target, \textit{i.e.}, $x^{k}$ on target $k$ with $\sum\limits_{k=1}^Kx^k= R_l$. Then the leader's strategy is $x=[x^1,\dots,x^K]^T$ and its strategy set is denoted by
\begin{equation}\label{eq::Omega_l}
\begin{aligned}
\Omega_l=\{x| \sum\limits_{k=1}^Kx^k= R_l, x^k\geqslant 0,\forall k=1\dots,K \}.
\end{aligned}
\end{equation}
Similarly, the $i$th follower has a resource $R_i\in \mathbb{R}$ with the strategy $y_i=[y_i^1,\dots,y_i^K]^T$, where $\sum\limits_{k=1}^Ky_i^k= R_i$. Then the strategy set is denoted by
\begin{equation}\label{eq::Omega_i}
\begin{aligned}
\Omega_i=\{y_i| \sum\limits_{k=1}^Ky_i^k= R_i, y_i^k\geqslant 0,\forall k=1\dots,K \},\forall i\in \mathbf{P}.
\end{aligned}
\end{equation}
Moreover, let $\boldsymbol{y}=[y_1^T,\dots,y_n^T]^T$ be the strategy profile of all followers.

\begin{remark}
Many attack-defense mechanisms can be modeled by the SLMF Stackelberg security game $\mathcal{G}$. For example, in CPS \cite{chen2009game}, the invasion type is the DoS attack, while the intrusion detection system (IDS), as a defender, monitors the network with a probability distribution. In counterterrorism problems \cite{zhang2019modeling}, terrorists select different attacking options, like assassination, armed assault, or hijacking, while the government, as a defender, allocates budgets on cities to defend against terrorists.
\end{remark}

\begin{figure}[h]
\centering
\includegraphics[width = 5.5cm ]{./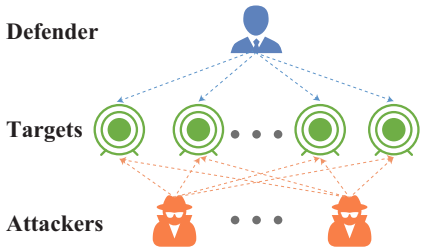}
\caption{SLMF Stackelberg security game model.}
\label{fi::31}
\end{figure}

As  discussed  in \cite{korzhyk2011stackelberg}, let $ U_l^c(t_k)$ be the leader's utility when the leader allocates per unit of resource to target $t_k$ with per unit attacking resource. $ U_l^u(t_k)$ is the leader's utility when the leader does not allocate per unit of resources to target $t_k$ with per unit attacking resource.  Given a strategy profile $(x,\boldsymbol{y})$, the leader's utility function is
\begin{equation}\label{eq::LFUl}
\begin{aligned}
U_l(x,\boldsymbol{y})\!=\!\sum\limits_{k=1}^K \!(\sum\limits_{i=1}^n  y^k_i)\left(x^k U_l^c(t_k)\!+\!(R_l-x^k) U_l^u(t_k)\right),
\end{aligned}
\end{equation}
where $\sum\limits_{i=1}^ny^k_i$ reflects the influence of all followers on target $k$. Similarly, the $i$th follower's utility  consists of $U_i^c(t_k)$ and $ U_i^u(t_k)$. Given the strategy profile $(x,\boldsymbol{y})$, the $i$th follower's utility function is
\begin{equation}\label{eq::LFUi}
\begin{aligned}
U_{i}(x,y_i)=\sum\limits_{k=1}^K y_i^k\left(x^k U_i^c(t_k)+(R_l-x^k) U_i^u(t_k)\right).
\end{aligned}
\end{equation}


If the followers cannot observe the actions of the leader and all players make decisions simultaneously, then we can consider  the Nash Equilibrium (NE) \cite{korzhyk2011stackelberg,DBLP:conf/ijcai/KorzhykCP11}.

\begin{mydef}
A strategy profile $(x^*,\boldsymbol{y}^*)$ is said to be a NE of the SLMF  game $\mathcal{G}$ if
$$
\begin{aligned}
&x^*\in\mathop{\text{\emph{argmax}}}\limits_{x\in \Omega_l} U_l(x,y^*), \\
& y_i^*\in\mathop{\text{\emph{argmax}}}\limits_{y_i\in \Omega_i} U_i(x^*,y_i), \forall i\in \mathbf{P}.
\end{aligned}
$$
\end{mydef}
On the other hand,  when the leader implements an allocation,  followers   determine their strategies after observing the leader's strategy.
Denote the $i$th follower's best response to the leader's strategy $x$ by
$$
\begin{aligned}
\text{BR}_i(x)= \mathop{\text{\text{argmax}}}\limits_{y_i\in \Omega_i}U_i(x,y_i),\forall  i\in \mathbf{P},
\end{aligned}
$$
and $\boldsymbol{\textbf{BR}}(x)=\text{BR}_1(x)\times \dots \times \text{BR}_n(x)$. Without loss of generality,  followers can break ties optimally for the leader if there are multiple best responses.
In this case, we introduce the Strong Stackelberg Equilibrium (SSE) \cite{von2010leadership}.

\begin{mydef}
A strategy profile $(x^*,\boldsymbol{y}^*)$ is said to be a SSE of the SLMF  game $\mathcal{G}$ if
$$
\begin{aligned}
&(x^*, \boldsymbol{y}^*)\in\mathop{\emph{argmax}}\limits_{x\in \Omega_l, \boldsymbol{y}\in \boldsymbol{\textbf{\emph{BR}}}(x)}U_l(x,\boldsymbol{y} ).
\end{aligned}
$$
\end{mydef}

\begin{remark}
According to \cite{ijcai2019-75}, the SSE in the 
SLSF security game can be reducible to  $\{x^*,k^*\}$, where the leader's SSE strategy is $x^*$ and the follower attacks target $t_{k^*}$. However,  multiple followers may attack different targets, and their SSE strategies can not be   reduced to a single target. Additionally, in the SLMF game, each follower's strategy set is a subset of $\mathbb{R}^K$, and is more complex than a subset of $\mathbb{R}$ in \cite{8017519}. Although our model has   similar utility functions as \cite{zhang2019modeling}, we focus  on that each follower   has its own resource, \textit{i.e.}, $\sum\limits_{k=1}^Ky_i^k= R_i$, instead of that all followers allocate a total resource $\sum\limits_{i=1}^n\sum\limits_{k=1}^Ky_i^k= R$.
\end{remark}

Then we  discuss  SLMF games with misinformation, when  players have different cognitions. Specifically, misperception and deception are two typical misinformed situations.

\subsection{Misperception and Deception}
We first consider misperception  for a situation when there are  imperfect or  prejudiced observations/understandings of the game among  players. It is  caused by passive factors with players' biased cognitions. For example, in communication channels such as sensor systems, external disturbances may cause imperfect observations \cite{5199443}, while players with bounded rationality may have  prejudiced observations in the  IoT \cite{8691466}. Moreover, in the computer and information security, accidental errors from small random fluctuations also bring players imprecise observations \cite{kraemer2007human}.

Consider that the followers have prejudiced observations of the security game $\mathcal{G}$, while the leader realizes the situation. To describe different observations,  we consider a parameter $\theta_0\in\mathbb{R}^m$ in the SLMF game $\mathcal{G}$, and followers' prejudiced observations of $\theta_0$ are $\theta'\in\mathbb{R}^m$, where $\theta_0$ and $\theta'$ only affect followers' utility functions. Then the SLMF game with  misperception can be denoted by
$\mathcal{G}_M(\theta_0,\theta')=\big\{\{l\}\cup\mathbf{P}, \Omega_l\times\mathbf{\Omega}_f, \{U_l\}\cup\mathbf{U}_f,\{\theta_0,\theta'\} \big\}$.  Specifically,  $\mathbf{U}_f=\{U_1,\dots,U_n\}$, where $U_i:\Omega_l\times \Omega_i\times\{\theta_0,\theta'\}\to \mathbb{R}$ is the utility function of the $i$th follower. Here  we rewrite the security model without misperception as $\mathcal{G}=\mathcal{G}_M(\theta_0,\theta_0).$

In addition, given the strategy profile $(x,\boldsymbol{y})$, the leader's actual utility function is $U_l(x,\boldsymbol{y})$, which is the same as (\ref{eq::LFUl}). Also the $i$th follower's actual utility function is $U_{i}(x,y_i,\theta_0)$, which is exactly (\ref{eq::LFUi}).  However, known to the leader, the $i$th follower believes that its own utility function is  as follow:
\begin{equation}\label{eq::misperception-utility-2}
\begin{aligned}
U_{i}(x,y_i,\theta')&=\sum\limits_{k=1}^Ky_i^k\left(x^k U_i^c(\theta',t_k)+(R_l-x^k) U_i^u(\theta',t_k)\right)\!.
\end{aligned}
\end{equation}

Correspondingly,  the $i$th follower's best response to the leader strategy $x$ under the prejudiced observation $\theta'$ is
$$
\begin{aligned}
\text{BR}_i(x,\theta')= \mathop{\text{argmax}}\limits_{y_i\in \Omega_i}U_i(x,y_i,\theta'), \forall  i\in\mathbf{P},
\end{aligned}
$$
and $\boldsymbol{\textbf{BR}}(x,\theta')=\text{BR}_1(x,\theta')\times \dots \times \text{BR}_n(x,\theta')$. Similar to SSE, the leader implements an allocation, and afterward, followers determine their strategies after observing the leader's strategy. Therefore, following the security game with misperception \cite{8705326}, the equilibrium with misperception can be denoted by the Misperception Strong Stackelberg Equilibrium (MSSE).

\begin{mydef}
A strategy profile $(x^*, \boldsymbol{y}^*)$ is said to be a MSSE of the SLMF  game with misperception $\mathcal{G}_M(\theta_0,\theta')$   if
$$
\begin{aligned}
(x^*, \boldsymbol{y}^*)\in\mathop{\emph{argmax}}\limits_{x\in \Omega_l,\boldsymbol{y}\in \boldsymbol{\textbf{\emph{BR}}}(x,\theta')} U_l(x,\boldsymbol{y}).
\end{aligned}
$$
\end{mydef}

Next, we address deception  for another situation when some players mislead others' cognitions with selfish or malevolent  motivation.  Unlike misperception, deception is caused by active factors among players  with players’ manipulated cognitions. For instance, each player is explicitly interested in convincing the others to hold some particular beliefs such as  in authentication protocols and online negotiations \cite{bakker2020hypergames,8573896}. Moreover, the leader may tend to deceive followers like a network administrator (leader) and a hacker (follower) in CPS \cite{schlenker2018deceiving}, while the network administrator might change a system's TCP/IP stack and obfuscate the services running on the port \cite{albanese2016deceiving,8705326}.

In this situation, the leader can manipulate  the followers' observation, while followers are not aware. Set $\theta_0\in\mathbb{R}^m$ as the true value of the parameter in $\mathcal{G}$. Take  $\Theta\subseteq\mathbb{R}^m$ as the deceptive set, while  the leader  manipulates   followers' observations of the parameter as $\theta'\in\Theta$ to maximize its own utility function. Denote the SLMF Stackelberg security game with deception by $\mathcal{G}_D(\Theta)=\big\{\{l\}\cup\mathbf{P}, \Omega_l\times\mathbf{\Omega}_f\times\Theta, \{U_l\}\cup\mathbf{U}_f ,\theta_0\big\}$. Specifically,
$\mathbf{U}_f=\{U_1,\dots,U_n\}$, where $U_i:\Omega_l\times \Omega_i\times\Theta\to \mathbb{R}$ is the utility function of the $i$th follower. Here  we rewrite the security model without deception as $\mathcal{G}=\mathcal{G}_D(\{\theta_0\}).$

Given the strategy profile $(x,\boldsymbol{y},\theta')$, players' actual utility functions  are $U_l(x,\boldsymbol{y})$ and $U_{i}(x,y_i,\theta_0)$, $\forall i\in \mathbf{P}$. Since followers' observations are manipulated as $\theta'$, the $i$th follower regard its own utility function  as $U_{i}(x,y_i,\theta')$, which is generated by (\ref{eq::misperception-utility-2}) and the domain of $U_i$ contains $\Theta$ instead of $\{\theta_0,\theta'\}$.

Therefore, the leader manipulates followers' observations of the parameter as $\theta^*\in\Theta$ to maximize its own utility function at first. Then, similar to  SSE and  MSSE, the leader provides its own strategy, and afterward, each follower acts according to the observation $\theta^*$ and the leader's strategy.
Thus, based on the SLSF security game with deception \cite{schlenker2018deceiving,ijcai2019-75},  the equilibrium with deception can be defined as the Deception Strong Stackelberg Equilibrium (DSSE).

\begin{mydef}
A strategy profile $(x^*, \boldsymbol{y}^*,\theta^*)$ is said to be a DSSE of the SLMF  game with deception $\mathcal{G}_D(\Theta)$ if
$$
\begin{aligned}
&(x^*, \boldsymbol{y}^*)\in\mathop{\text{\emph{argmax}}}\limits_{x\in \Omega_l,\boldsymbol{y}\in \boldsymbol{\textbf{\emph{BR}}}(x,\theta^*)}{U_l(x,\boldsymbol{y} )} ,
\end{aligned}
$$
where $ \theta^* \in\mathop{\text{\emph{argmax}}}\limits_{\theta' \in \Theta}\max\limits_{x\in \Omega_l, y\in \boldsymbol{\textbf{\emph{BR}}}(x,\theta')}U_l(x,\boldsymbol{y} )$ is the optimal deception of the leader.
\end{mydef}
Different from MSSE, DSSE describes a decision with players' manipulated cognitions, and the leader can manipulate followers' observations of the parameter $\theta_0$. The following assumptions have been  widely employed in security games with deception  \cite{korzhyk2011stackelberg,schlenker2018deceiving,ijcai2017-516,hu2013existence,ijcai2019-75,zhang2017strategic,zhang2020game,feng2017signaling,von2013information,chen2009game}.

\begin{assumption}\label{as::ass4}
$\Theta$ is compact and convex, while $\emph{int}(\Theta)$ is nonempty and $\theta_0\in\Theta$.
\end{assumption}
\begin{assumption}\label{as::ass5}

For $ i\in\mathbf{P},k=1\dots, K$, $U_i^c(\theta',t_k)$ and $U_i^c(\theta',t_k)$ are differentiable in $\theta'\in \Theta$.
\end{assumption}

 \begin{assumption}\label{as::ass1}
  For $k= 1,\dots , K$, $U_l^c(t_k)> U_l^u(t_k)$.
\end{assumption}
\begin{assumption}\label{as::ass3}
For  $\theta'\in \Theta, i\in\mathbf{P},k=1\dots, K$,  $U_i^c(\theta',t_k)<U_i^u(\theta',t_k)$.
\end{assumption}

\begin{assumption}\label{as::ass6}
There exists $k$, such that for $i\in\mathbf{P}, l\neq k$, $U_i^c(\theta_0,t_k)\geqslant U_i^u(\theta_0,t_l)$.
\end{assumption}

Assumptions \ref{as::ass4} and \ref{as::ass5}   guarantee the existence of a DSSE of $\mathcal{G}_D(\Theta)$ \cite{hu2013existence,schlenker2018deceiving}, which are  also adopted in real-world security problems such as unmanned aerial vehicles (UAVs) security games \cite{zhang2017strategic,zhang2020game} and moving target defense (MTD) problems \cite{feng2017signaling}. Furthermore, Assumption \ref{as::ass1} indicates that, for the leader, the unit utility for defending a target is larger than that without defending. Assumption \ref{as::ass3} indicates that, for each follower, the unit utility for attacking a target is larger than that without attacking. They are consistent with the fact that the leader tends to resist attacks and followers tend to implement invasions \cite{korzhyk2011stackelberg,ijcai2019-75,ijcai2017-516}. Moreover, Assumption \ref{as::ass6} refers to the situation when there exists a most attractive  target for followers  \cite{von2013information,chen2009game}, which describes a relationship among different targets, different from Assumption \ref{as::ass3}.

\begin{remark}
In  many practical situations,  the leader has direct access to others' cognition. 
For example, in  CPS, a network administrator can obfuscate the
services running on the port, while the administrator knows all the information of the services  \cite{albanese2016deceiving,8705326}. In UAVs security problems, the defender may show wrong targets' locations to UAVs, where both true and wrong locations are detected by the defender  \cite{zhang2017strategic}. Additionally, in an industrial control system, Stuxnet, as a leader, can directly obtain the access to the system and feed fake data to disguise malicious actions \cite{bakker2020hypergames}.
\end{remark}

\subsection{Hypergame}

The hypergame theory describes different cognitions among players for the strategic interactions in situations with misinformation.
It covers misperception or deception, where players may have biased misperception or can manipulate others' observations.  The main idea of the hypergame is to decompose complex situations with misinformation  into multiple subjective games.  
 According to \cite{kovach2015hypergame}, each player in a hypergame may 
\begin{itemize}
\item have a misled or false understanding of other players' preferences or utility functions;
\item have an incorrect comprehension of other players' strategy sets;
\item be not aware of   every one of all players;
\item have any combination of the above.
\end{itemize}
In fact, the hypergame theory has been applied in various circumstances, such as CPS and  economic behaviors \cite{HELLER2019223,bakker2020hypergames,sasaki2015modeling}. Benefiting from the subjective games decomposed by hypergames, the relationship among players' strategies, incorporation of opponents' cognitions, and fears of being outguessed  are further explored in situations with misinformation.

\begin{remark}

Standard Stackelberg games describe players' different acting sequences, where the leader implements strategy first and followers act after observing the leader's actions. Hypergames focus on players' different cognitions with multiple subjective games. Players play different subjective games and may also know others' cognition.
Actually, standard Stackelberg games may be regarded as a special hypergame, where followers know the leader's game and follow its action, and the leader also knows this fact. Since the hypergame is good at describing complex situations with misinformation, it helps us analyze both the strategic and cognitive stability of equilibria with misperception and deception.

\end{remark}

There are different levels for describing different cognitive environments \cite{wang1988modeling}. 
For instance, the first-level hypergame describes the situation when players are playing different games, but no one realizes the fact. Correspondingly, the second-level hypergame occurs when at least one player is aware that different games are played.
Then we aim at employing the second-level Stackelberg hypergame to analyze SLMF games with misperception and deception and providing a criterion for evaluating the stability of the equilibrium with misinformation.

As we know, both misperception and deception can cause observation errors of the parameter $\theta_0$ in the game. Take  $\Theta \subseteq\mathbb{R}^m$ as the cognitive set of all followers' possible observations and $\theta_0\in\mathbb{R}^m$ as the true value of the parameter in $\mathcal{G}$.  Denote the game under the observation $\theta'\in\Theta$ by $\mathcal{G}(\theta')=\big\{\{l\}\cup\mathbf{P}, \Omega_l\times\mathbf{\Omega}_f, \{U_l\}\cup\mathbf{U}_f,\theta' \big\}$  with  $\mathbf{U}_f=\{U_1,\dots,U_n\}$, where $U_i:\Omega_l\times \Omega_i\times\Theta\to \mathbb{R}$ is the utility function of the $i$th follower.  In addition, given the strategy profile $(x,\boldsymbol{y})$, players' actual utility functions are $U_l(x,\boldsymbol{y})$ and $U_{i}(x,y_i,\theta_0)$, $\forall i\in \mathbf{P}$. However, in all players' views,  the $i$th follower's utility function    is $U_{i}(x,y_i,\theta')$. Here we rewrite the SLMF  game model without any observation error in Section \uppercase\expandafter{\romannumeral2} as $\mathcal{G}=\mathcal{G}(\theta_0)$.

Consider the first-level hypergame to describe a complex   situation in the  SLMF game $\mathcal{G}$ when there are observed differences among players, but no one is aware.
Concretely, suppose that all followers' observations of the parameter are $\theta'\in\Theta$, while  the leader's observation is $\theta_0$.  For the leader, denote  $\mathcal{G}_l=\mathcal{G}(\theta_0)$ as the game of the leader's self-cognition. For $i\in \mathbf{P}$, denote  $\mathcal{G}_i=\mathcal{G}(\theta')$ as the game of the $i$th follower's self-cognition. Then  the situation can be defined as  $\mathcal{H}^1=\{\mathcal{G}(\theta_0),\big(\mathcal{G}(\theta')\big)_{i\in \mathbf{P}}\}$, which is  a first-level hypergame as shown in Fig. \ref{fi::hyper1level}.

\begin{figure}[h]
\centering
\includegraphics[width =5.7cm ]{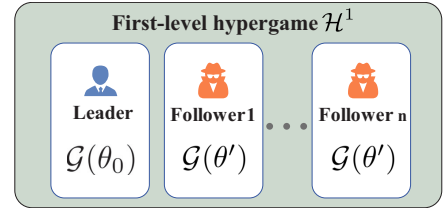}
\caption{First-level hypergame $\mathcal{H}^1$.}
\label{fi::hyper1level}
\end{figure}

Moreover, we employ the second-level hypergame to describe   different misinformed situations when all followers do not realize  the  observed differences and  the leader is aware of the fact.  On the one hand, in the leader's view, for $ i\in\mathbf{P}$, denote $\mathcal{G}_{il}$ as the $i$th follower's game under the leader's perception. Thus, $\mathcal{G}_{il}=\mathcal{G}(\theta') $ since the leader knows the $i$th follower's observation is $\theta'$. Also, denote $\mathcal{G}_{ll}$ as the leader's game under its own perception. Then $\mathcal{G}_{ll}= \mathcal{G}(\theta_0)$ since the leader's own observation is $\theta_0$. Thus, $\mathcal{H}_l^1=\{\mathcal{G}_{ll},(\mathcal{G}_{il})_{i\in \mathbf{P}}\}$ is a novel first-level hypergame perceived by the leader. On the other hand, in the view of the $i$th follower, denote $\mathcal{G}_{li}$ as the leader's game  and $\mathcal{G}_{ii}$ as its own game under the $i$th follower's perception.
Thus, $\mathcal{G}_{li}= \mathcal{G}(\theta')$ and $\mathcal{G}_{ii}= \mathcal{G}(\theta')$ since the $i$th follower is not conscious with $\theta'\neq\theta_0$. Thus, $\mathcal{H}^1_i= \{\mathcal{G}_{li},\mathcal{G}_{ii}\}$  is another first-level hypergame perceived by the $i$th follower.
Notice that, for all $i\in\mathbf{P}$, $\mathcal{H}^1_i$ and $\mathcal{H}_l^1$ are different since the leader notices the cognitive set $\Theta$ but followers do not. Therefore, the different first-level hypergames perceived by all players form a Stackelberg  hypergame $\mathcal{H}^2(\Theta)=\{\mathcal{H}^1_l,(\mathcal{H}^1_i)_{i\in\mathbf{P}}\}$, which is also a second-level  hypergame as shown in Fig. \ref{fi::hyper}.
\begin{figure}[h]
\centering
\includegraphics[width =8.5cm ]{./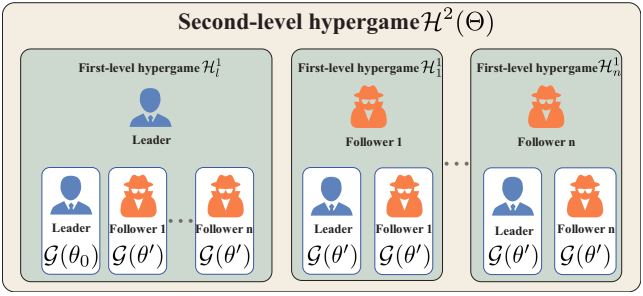}
\caption{Second-level hypergame $\mathcal{H}^2(\Theta)$.}
\label{fi::hyper}
\end{figure}

Clearly, $\mathcal{H}^2(\Theta)$ can describe SLMF games with both deception and misperception. For instance, regarding $\Theta$ as the deceptive set,  we can  rewrite the SLMF  game  with deception in Section \uppercase\expandafter{\romannumeral2} as $\mathcal{G}_D(\Theta)=\mathcal{H}^2(\Theta)$. Especially, let $\Theta=\{\theta'\}$, and then the SLMF  game  with misperception in Section  \uppercase\expandafter{\romannumeral2} can be written  as $\mathcal{G}_M(\theta_0,\theta')=\mathcal{H}^2(\theta')$.

In all players' views, the leader  chooses a strategy with the utility function $U_l(x,\boldsymbol{y})$, while the $i$th follower   makes a decision with the utility function $U_i(x,y_i,\theta')$, which leads to a concept called Hyper Nash Equilibrium (HNE) \cite{sasaki2008preservation}.

\begin{mydef}
A strategy profile $(x^*,\boldsymbol{y}^*)$ is said to be a HNE of $\mathcal{H}^2(\Theta)$ with any fixed $\theta'\in\Theta$ if
$$
\begin{aligned}
&x^*\in\mathop{\emph{argmax}}\limits_{x\in \Omega_l} U_l(x,\boldsymbol{y}^*), \\
& y_i^*\in\mathop{\emph{argmax}}\limits_{y_i\in \Omega_i} U_i(y_i,x^*,\theta' ), \forall i \in \mathbf{P}.
\end{aligned}
$$
\end{mydef}

In fact,   HNE is such a strategy profile that is the best response strategy in everyone's subjective game. Each player does not change its strategy unilaterally if all the players play HNE strategies. It is the same as NE if there is no cognitive difference. In addition, at   HNE, each player can not realize that its cognition is different from others, since others' strategies are consistent with its own anticipation. Then all players have  no incentive to update their observations of the parameter $\theta_0$. Hence, the HNE in our hypergame is a desired equilibrium with cognitive stability, which was similarly described in \cite{sasaki2015modeling,sasaki2017generalized}.  Compared with previous discussions on equilibrium stability in security games \cite{manshaei2013game,an2020stackelberg,saad2011coalitional},    HNE helps  analyze both the strategic stability and cognitive stability  of SLMF games with misinformation, and provides a unified framework for misperception and deception.

\begin{remark}
The signaling game is a leader-follower game with deception, where the leader sends deceptive signals to followers \cite{feng2017signaling,8573896}.
It usually focuses on whether players can achieve the equilibrium with deception, which is actually a learning process to update cognition for followers.
Different from the signaling game, the hypergame concerns the cognitive stability of games with misinformation \cite{sasaki2015modeling,sasaki2017generalized}. In the cases of misperception or deception, the cognitive stability plays an important role since it shows the conditions that each player trusts its own current cognition. Otherwise, the player's anticipation may not be consistent with others’ strategies, and thus, the player may be suspicious about its cognition. For instance, the hoaxer may not believe in the current deception along with benefits, while the victim (follower) may be aware of the biased cognition if the opponents' best response strategies are ``impractical''. Thus, we adopt the hypergame to analyze the cognitive stability of the SLMF game with misinformation.
\end{remark}

\section{Equilibrium Analysis}

In this section, we analyze the equilibria of the proposed formulations in security games.

With complete information, the following lemma verifies the existence of  NE and SSE in the SLMF game $\mathcal{G}$.
\begin{lemma}\label{le::NESSE-exists}
There exists a NE and a SSE of $\mathcal{G}$.
\end{lemma}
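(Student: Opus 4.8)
The plan is to exploit two structural facts that hold throughout and then invoke standard fixed-point and compactness arguments. First, each strategy set $\Omega_l$ in (\ref{eq::Omega_l}) and $\Omega_i$ in (\ref{eq::Omega_i}) is a scaled probability simplex, hence nonempty, compact, and convex. Second, the utilities $U_l$ in (\ref{eq::LFUl}) and $U_i$ in (\ref{eq::LFUi}) are multilinear in the strategy profile, hence jointly continuous; in particular $U_l$ is affine in $x$ for each fixed $\boldsymbol{y}$, and $U_i$ is affine in $y_i$ for each fixed $x$, so each player's payoff is concave (thus quasiconcave) in its own variable. These two observations supply exactly the hypotheses needed below.

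For the NE, I would verify the conditions of the Debreu--Glicksberg--Fan existence theorem. Since the strategy sets are nonempty, compact, and convex and the payoffs are continuous and quasiconcave in each player's own variable, the theorem directly yields a pure-strategy NE. Equivalently, one may argue by hand: each individual best-response correspondence $\mathop{\operatorname{argmax}}_{x\in\Omega_l}U_l(x,\boldsymbol{y})$ and $\text{BR}_i(\cdot)$ is nonempty, compact- and convex-valued (convexity coming from affineness of the payoff in the own variable) and upper hemicontinuous by Berge's maximum theorem; Kakutani's fixed-point theorem applied to the product best-response map then produces a fixed point, which is a NE.

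For the SSE, the idea is to recast the leader's bilevel problem as a single maximization over a compact set. Define the graph of the joint best-response correspondence
$$\Gamma=\{(x,\boldsymbol{y}):x\in\Omega_l,\ \boldsymbol{y}\in\boldsymbol{\textbf{BR}}(x)\}.$$
Each $\text{BR}_i(x)=\mathop{\operatorname{argmax}}_{y_i\in\Omega_i}U_i(x,y_i)$ is nonempty because a continuous objective is maximized over the compact set $\Omega_i$, and it has closed graph by the upper hemicontinuity from Berge's theorem together with compactness. Intersecting these finitely many closed conditions shows that $\Gamma$ is a closed subset of the compact set $\Omega_l\times\boldsymbol{\Omega}_f$, hence $\Gamma$ is nonempty and compact. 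Since $U_l$ is continuous on $\Gamma$, it attains its maximum there, and any maximizer is, by Definition~2, an SSE.

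I expect the delicate point to be the attainment of the leader's supremum in the SSE, rather than the NE. Best-response correspondences are generically only upper hemicontinuous and may jump with $x$, so one cannot argue through continuity of $\boldsymbol{\textbf{BR}}$. The resolution is that upper hemicontinuity with compact values delivers a closed, hence compact, graph $\Gamma$, which is precisely what guarantees the maximum is achieved; moreover, the optimistic tie-breaking convention (followers breaking ties in the leader's favor) is automatically encoded by maximizing $U_l$ over the entire graph $\Gamma$ instead of committing to a particular selection from $\boldsymbol{\textbf{BR}}(x)$.
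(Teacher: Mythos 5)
Your proposal is correct and follows essentially the same route as the paper: the NE existence rests on compact convex simplices plus payoffs linear (hence quasiconcave) in each player's own variable, invoking a standard existence theorem, and the SSE existence rests on compactness of the best-response graph together with continuity of $U_l$ (the paper phrases this via a maximizing sequence and Berge-type upper hemicontinuity of $\boldsymbol{\textbf{BR}}$, which is equivalent to your closed-graph argument). Your remark that the optimistic tie-breaking is encoded by maximizing over the whole graph $\Gamma$ is a nice clarification but does not change the substance.
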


\begin{proof}
Recalling (\ref{eq::Omega_l}) and (\ref{eq::Omega_i}), for  $i\in \mathbf{P}$, $\Omega_l$ and $\Omega_i$ are compact  and convex. $U_l(\cdot, \boldsymbol{y})$ and $U_i(x,\cdot)$ are linear for fixed $\boldsymbol{y}\in \mathbf{\Omega}_f$ and   = $x\in \Omega_l$, respectively. 
By Theorem 2.1 in  \cite{carmona2012existence}, there exists a NE of $\mathcal{G}$. 
Also, based on \cite{korzhyk2011stackelberg}, since $\boldsymbol{\textbf{BR}}(x)$ is  compact and $U_l$ and $U_i$ is continuous, there exists a SSE of $\mathcal{G}$.
\end{proof}
With misperception, the following lemma shows the existence of a MSSE of $\mathcal{H}^2(\theta')$, whose proof can be given by  replacing $U_i(x,y_i)$ with $U_i(x,y_i,\theta')$  in Lemma \ref{le::NESSE-exists}.
\begin{lemma}\label{th::MSSE-exists-section2}
For any $\theta'\in \mathbb{R}^m$, there exists   a MSSE  of $\mathcal{H}^2(\theta')$.
\end{lemma}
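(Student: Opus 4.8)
The plan is to mirror the existence argument for the SSE in Lemma \ref{le::NESSE-exists}, since the definition of an MSSE is structurally identical to that of an SSE save that each follower now maximizes the misperceived utility $U_i(x,y_i,\theta')$ of (\ref{eq::misperception-utility-2}) in place of the true $U_i(x,y_i)$. The crucial observation is that, for any fixed $\theta'\in\mathbb{R}^m$, the value $\theta'$ enters (\ref{eq::misperception-utility-2}) only through the constant coefficients $U_i^c(\theta',t_k)$ and $U_i^u(\theta',t_k)$; it therefore acts as a fixed external parameter and does not alter any of the structural properties of $U_i$ on which the proof of Lemma \ref{le::NESSE-exists} relies.

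First I would check that the feasible sets are unchanged: $\Omega_l$ and $\Omega_i$ in (\ref{eq::Omega_l}) and (\ref{eq::Omega_i}) remain compact and convex. Next, for fixed $x\in\Omega_l$ and fixed $\theta'$, the map $y_i\mapsto U_i(x,y_i,\theta')$ is linear, hence continuous and concave, so that $\text{BR}_i(x,\theta')=\mathop{\text{argmax}}_{y_i\in\Omega_i}U_i(x,y_i,\theta')$ is a nonempty compact subset of $\Omega_i$; moreover $U_i(\cdot,\cdot,\theta')$ is jointly continuous in $(x,y_i)$. The leader's actual utility $U_l(x,\boldsymbol{y})$ in (\ref{eq::LFUl}) is untouched by the misperception and remains continuous. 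These are precisely the hypotheses invoked in Lemma \ref{le::NESSE-exists}.

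With these properties in hand, existence follows by the same compactness argument. By the maximum theorem the correspondence $x\mapsto\boldsymbol{\textbf{BR}}(x,\theta')=\text{BR}_1(x,\theta')\times\dots\times\text{BR}_n(x,\theta')$ is upper semicontinuous with nonempty compact values, so its graph $\{(x,\boldsymbol{y}):x\in\Omega_l,\boldsymbol{y}\in\boldsymbol{\textbf{BR}}(x,\theta')\}$ is compact. Taking a maximizing sequence of $U_l$ over this graph and extracting a convergent subsequence yields a candidate $(x^*,\boldsymbol{y}^*)$; upper semicontinuity of the best-response correspondence guarantees that the subsequential limit still lies in $\boldsymbol{\textbf{BR}}(x^*,\theta')$, and continuity of $U_l$ guarantees that the supremum is attained at $(x^*,\boldsymbol{y}^*)$, which is therefore an MSSE.

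The only point demanding genuine care, and the sole place where the argument is more than bookkeeping, is the closedness of the best-response graph, i.e. that the subsequential limit of best responses is again a best response; this is exactly the content of Lemma 17.30 in \cite{aliprantis2006infinite} applied to the parametric family $U_i(\cdot,\cdot,\theta')$. Because $\theta'$ is held fixed throughout, it introduces no additional obstacle, and the whole proof reduces to substituting $U_i(x,y_i,\theta')$ for $U_i(x,y_i)$ in Lemma \ref{le::NESSE-exists}, as anticipated in the statement.
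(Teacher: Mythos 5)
Your proposal is correct and follows exactly the route the paper takes: the paper's own justification is precisely that the proof of Lemma \ref{le::NESSE-exists} goes through verbatim after replacing $U_i(x,y_i)$ with $U_i(x,y_i,\theta')$, since the fixed parameter $\theta'$ only changes the constant coefficients and preserves linearity, compactness, and the closedness of the best-response graph (via Lemma 17.30 in \cite{aliprantis2006infinite}). Your write-up simply makes explicit the same substitution-and-compactness argument the paper invokes.
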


Moreover, with deception, the next lemma shows the  existence of a DSSE in $\mathcal{H}^2(\Theta)$.
\begin{lemma}\label{th::DSSE-existence-section-3}
Under Assumptions  \ref{as::ass4} and \ref{as::ass5}, there exists   a DSSE  of  $\mathcal{H}^2(\Theta)$.
\end{lemma}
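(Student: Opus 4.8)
The plan is to recast the DSSE as a two-stage optimization and to show that its outer stage, the maximization over $\theta'\in\Theta$, attains a maximizer. For each fixed $\theta'$ define the leader's optimal deception value
\[
V(\theta')=\max_{x\in\Omega_l,\;\boldsymbol{y}\in\boldsymbol{\textbf{BR}}(x,\theta')}U_l(x,\boldsymbol{y}).
\]
By Lemma \ref{th::MSSE-exists-section2} the inner maximum is attained for every $\theta'\in\mathbb{R}^m$, so $V$ is a well-defined real-valued function on $\Theta$. A profile $(x^*,\boldsymbol{y}^*,\theta^*)$ is a DSSE precisely when $\theta^*\in\arg\max_{\theta'\in\Theta}V(\theta')$ and $(x^*,\boldsymbol{y}^*)$ attains the inner maximum at $\theta^*$; since Lemma \ref{th::MSSE-exists-section2} supplies the inner maximizer once $\theta^*$ is fixed, the whole problem reduces to showing that $V$ attains its maximum over $\Theta$. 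Because $\Theta$ is compact by Assumption \ref{as::ass4}, it suffices to prove that $V$ is upper semicontinuous on $\Theta$, since an upper semicontinuous function on a compact set attains its supremum.

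Next I would carry out a parametric-stability analysis of the inner (best-response) problem. By Assumption \ref{as::ass5} the coefficients $U_i^c(\theta',t_k)$ and $U_i^u(\theta',t_k)$ are continuous in $\theta'$, and combined with the bilinear form of (\ref{eq::misperception-utility-2}) in $(x,y_i)$ this makes each $U_i(x,y_i,\theta')$ jointly continuous in $(x,y_i,\theta')$. Since each $\Omega_i$ is compact and does not depend on the parameters, Berge's maximum theorem implies that $\text{BR}_i(x,\theta')=\arg\max_{y_i\in\Omega_i}U_i(x,y_i,\theta')$ is compact-valued and upper hemicontinuous in $(x,\theta')$, hence so is the product correspondence $\boldsymbol{\textbf{BR}}(x,\theta')$. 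An upper hemicontinuous compact-valued correspondence with values in the compact set $\mathbf{\Omega}_f$ has a closed graph, so the constraint correspondence $\Phi(\theta')=\{(x,\boldsymbol{y}):x\in\Omega_l,\ \boldsymbol{y}\in\boldsymbol{\textbf{BR}}(x,\theta')\}$ is compact-valued with closed graph, and therefore upper hemicontinuous on $\Theta$.

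Finally, note that $U_l(x,\boldsymbol{y})$ is continuous and does not depend on $\theta'$, so $V$ depends on $\theta'$ only through $\Phi$. The upper-semicontinuity half of the maximum theorem then applies: with $U_l$ continuous and $\Phi$ upper hemicontinuous and compact-valued, $V(\theta')=\max_{(x,\boldsymbol{y})\in\Phi(\theta')}U_l(x,\boldsymbol{y})$ is upper semicontinuous on $\Theta$. Compactness of $\Theta$ yields a maximizer $\theta^*$, and Lemma \ref{th::MSSE-exists-section2} provides the accompanying $(x^*,\boldsymbol{y}^*)$, giving a DSSE. I expect the main obstacle to be the second step, namely the closed-graph / upper-hemicontinuity property of $\boldsymbol{\textbf{BR}}(x,\theta')$ as $\theta'$ varies, because best responses can jump when ties among targets form or dissolve; here the strong tie-breaking convention built into the DSSE (followers break ties in the leader's favor) is exactly what keeps the outer value $V$ upper semicontinuous across such jumps, whereas a weak convention could destroy attainment of the outer maximum.
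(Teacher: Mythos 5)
Your proof is correct and follows essentially the same route as the paper's: both reduce existence to attainment of the outer maximum over the compact set $\Theta$ and both rest on the closed-graph/upper-hemicontinuity property of $\boldsymbol{\textbf{BR}}(x,\theta')$ in $(x,\theta')$, which the paper obtains via a maximizing-sequence argument together with Lemma 17.30 of Aliprantis--Border and you obtain via the upper-semicontinuity half of Berge's maximum theorem. Your closing observation that the optimistic tie-breaking built into the DSSE is what keeps the value function $V$ upper semicontinuous is a point the paper's proof leaves implicit.
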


\begin{proof}
Denote
$
\big(x(\theta'),\boldsymbol{y}(\theta')\big) = \mathop{\text{argmax}}\limits_{x\in\Omega_l, \boldsymbol{y}\in \boldsymbol{\textbf{BR}}(x,\theta')}U_l (x,\boldsymbol{y})
$ for any $\theta' \in \Theta$.
Take $L(\theta')= U_l \big(x(\theta'),\boldsymbol{y}(\theta')\big)$ and  $L^*=\mathop{sup}\limits_{\theta'\in\Theta} L (\theta')$. Then there exists a sequence $\{\theta_j\}_{j=1}^\infty$ such that $L(\theta_j)>L^*-\frac{1}{j}$. Since $\Theta $ is compact, there exists  a convergent  subsequence $\{\theta_{j_m}\}_{m=1}^\infty$, where $\lim\limits_{m\to\infty} \theta_{j_m}=\theta^* \in\Theta$. Thus,
$
L^*\geqslant U_l\big(x(\theta_{j_m}),\boldsymbol{y}(\theta_{j_m})\big)>L^*-\frac{1}{j_m}.
$
By the continuity of $U_l$, $\lim\limits_{m\to \infty}U_l\big(x(\theta_{j_m}),\boldsymbol{y}(\theta_{j_m})\big)=L^*$.
Also, there is a convergent subsequence $\{(x(\theta_{j_{m_q}}),(\theta_{j_{m_q}}))\}_{q=1}^\infty$, where
$ \lim\limits_{q\to\infty} (x(\theta_{j_{m_q}}),(\theta_{j_{m_q}}))=(x^*, \boldsymbol{y}^*)$.
Then
$U_l(x^*,\boldsymbol{y}^*)= \max\limits_{\theta' \in \Theta}\max\limits_{x\in \Omega_l, \boldsymbol{y}\in \boldsymbol{\textbf{BR}}(x,\theta')}U_l(x,\boldsymbol{y} )
$. By  Lemma 17.30 in \cite{aliprantis2006infinite},
$\boldsymbol{y}^*\in \boldsymbol{\textbf{BR}}(x^*,\theta^*)$. Then $(x^*,\boldsymbol{y}^*,\theta^*)$ is a DSSE of $\mathcal{H}^2(\Theta)$.
\end{proof}
The following example indicates that Assumptions  \ref{as::ass4} and \ref{as::ass5} are fundamental in Lemma \ref{th::DSSE-existence-section-3}, since there may be no existence of DSSE without Assumptions  \ref{as::ass4} and \ref{as::ass5}. 

\begin{example}\label{ex::1}
Consider a SLSF game with $R_l=R_1=1$, $K=2$, and $\Theta=(0,1)$. Take $U_l^c(t_1)=1$, $U_1^c(\theta',t_1)=\theta'-1$, $U_1^u(\theta',t_1)=\theta'$, and $U_l^u(t_1)=U_l^c(t_2)=U_l^u(t_2)=U_1^c(\theta',t_2)=U_1^u(\theta',t_2)=0$. Then for any $\theta'\in \Theta$, players take $x=[\theta',1-\theta']^T$ and $y_1=[1,0]^T$. Then, the leader's profit is $U_l(x,y_1)=\theta'$. Since $\Theta$ is not closed, there is no DSSE.
\end{example}


Furthermore, the following theorem shows the existence of a HNE in $\mathcal{H}^2(\Theta)$.
\begin{mythm}\label{le::HNE-exists}
Under Assumptions  \ref{as::ass4} and \ref{as::ass5}, there exists   a HNE   of $\mathcal{H}^2(\Theta)$.
\end{mythm}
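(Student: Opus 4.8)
The plan is to observe that, for a fixed observation $\theta' \in \Theta$, the two conditions defining a HNE are precisely the equilibrium conditions of a simultaneous-move game in which the leader maximizes its true utility $U_l(x,\boldsymbol{y})$ while each follower $i$ maximizes its perceived utility $U_i(x,y_i,\theta')$ from \ref{eq::misperception-utility-2}. Consequently, establishing the existence of a HNE reduces to a Nash-equilibrium existence statement for this subjective game, which I would settle with the same machinery used in Lemma \ref{le::NESSE-exists}.

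First, I would invoke Assumption \ref{as::ass4} to ensure $\Theta$ is nonempty (its interior is nonempty and $\theta_0 \in \Theta$), so that an admissible observation $\theta'$ can be fixed; if one wants the HNE to sit at the leader's optimal deception, one may take $\theta' = \theta^*$ furnished by Lemma \ref{th::DSSE-existence-section-3} under Assumptions \ref{as::ass4} and \ref{as::ass5}. With $\theta'$ fixed, I would verify the standard hypotheses: by \ref{eq::Omega_l} and \ref{eq::Omega_i}, $\Omega_l$ and each $\Omega_i$ are compact and convex; and since $U_l(\cdot,\boldsymbol{y})$ is linear in $x$ for fixed $\boldsymbol{y}$ while $U_i(\cdot,x,\theta')$ is linear in $y_i$ for fixed $x$ and $\theta'$, each payoff is continuous and quasi-concave in the corresponding player's own strategy.

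Having checked these properties, I would apply Theorem 2.1 in \cite{carmona2012existence}, exactly as in Lemma \ref{le::NESSE-exists}, to obtain a Nash equilibrium $(x^*,\boldsymbol{y}^*)$ of the fixed-$\theta'$ subjective game; by construction this profile satisfies both best-response conditions in the HNE definition and is therefore a HNE of $\mathcal{H}^2(\Theta)$. The main obstacle is conceptual rather than computational: one must correctly recognize the HNE as the Nash equilibrium of the subjective game in which the followers use their perceived payoffs while the leader uses its true payoff, and confirm that replacing $U_i(x,y_i)$ by $U_i(x,y_i,\theta')$ leaves the linearity in $y_i$, and hence the applicability of the existence theorem, intact. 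Once this reduction is in place, the argument is essentially a transcription of Lemma \ref{le::NESSE-exists}, the only new ingredient being the nonemptiness of $\Theta$ guaranteed by the assumptions.
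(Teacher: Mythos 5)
Your proposal is correct and follows essentially the same route as the paper: fix $\theta'\in\Theta$, recognize the HNE as a Nash equilibrium of the subjective game in which the leader uses $U_l(x,\boldsymbol{y})$ and each follower uses $U_i(x,y_i,\theta')$, and conclude existence from compactness and convexity of the simplices together with the bilinearity (hence continuity and quasi-concavity) of the payoffs. The only cosmetic difference is that you invoke the packaged Nash-existence theorem (Theorem 2.1 in \cite{carmona2012existence}, as in Lemma \ref{le::NESSE-exists}), whereas the paper re-derives the fixed point explicitly by checking the closed graph of the best-response correspondence and applying Kakutani's theorem; both rest on the same underlying argument.
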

\begin{proof}
For any fixed $\theta'\in \Theta$, in the leader's view, the $i$th follower acts with $U_i(x,y_i,\theta')$. For any $x\in \Omega_l$ and $\boldsymbol{y}\in \mathbf{\Omega}_f$, denote $F(x, \boldsymbol{y})=\{(\hat{x},\hat{\boldsymbol{y}})|\hat{x}\in\mathop{\emph{\rm{argmax}}}\limits_{x'\in \Omega_l} U_l(x',\boldsymbol{y}), \hat{\boldsymbol{y}}\in\textbf{BR}(x,\theta')\} $.  Take $(x_j,\boldsymbol{y}_j)$ as a covergent sequence, where $
\lim\limits_{j\to\infty}\big(x_{j},(\boldsymbol{y})_{j}\big)=(x^{*},\boldsymbol{y}^*)$. There exists $(\hat{x}_j,\hat{\boldsymbol{y}}_j)\in F(x_j, (\boldsymbol{y})_j)$. Since $\Omega_l\times\mathbf{\Omega}_f$ is compact, there exists a convergent subsequence $\{\big(\hat{x}_{j_m},(\hat{\boldsymbol{y}})_{j_m}\big)\}_{m=1}^\infty$, where $
\lim\limits_{m\to\infty}\big(\hat{x}_{j_m},(\hat{\boldsymbol{y}})_{j_m}\big)=(\hat{x}^{*},\hat{\boldsymbol{y}}^*)$. By the continuity of $U_l$, 
$$
\begin{aligned}
U_l(\hat{x}^*,\!\boldsymbol{y}^*)\!=\!\lim\limits_{m \to \infty}U_l\big(\hat{x}_{j_m},\!({\boldsymbol{y}})_{j_m}\big)\!=\!\lim\limits_{m \to \infty} \max\limits_{x'\in\Omega_l}U_l\big(x',\!({\boldsymbol{y}})_{j_m}\big).
\end{aligned}
$$
According to  Lemma 17.30 in \cite{aliprantis2006infinite},
$$
\begin{aligned}
U_l(\hat{x}^*,\boldsymbol{y}^*)= \max\limits_{x'\in\Omega_l}\lim\limits_{m \to \infty}U_l\big(x',(\hat{\boldsymbol{y}})_{j_m}\big)= \max\limits_{x'\in\Omega_l}U_l\big( x',({\boldsymbol{y}})^*\big).
\end{aligned}
$$
Thus, $\hat{x}^*\in\mathop{\emph{\rm{argmax}}}\limits_{x'\in \Omega_l} U_l(x',({\boldsymbol{y}})^*)$. Similarly, $\hat{\boldsymbol{y}}^*\in\textbf{BR}(x^*,\theta')$. Then $(\hat{x}^*,\hat{\boldsymbol{y}}^*)\in F(x^*, \boldsymbol{y}^*)$. According to Theorem A.14 in  \cite{carmona2012existence}, there exists $(x', \boldsymbol{y}')$ such that $(x', \boldsymbol{y}')\in F(x', \boldsymbol{y}')$. Then $(x', \boldsymbol{y}')$ is the  best response strategy for everyone in the leader's view. Also, in the $i$th follower's view, $x^*\in\mathop{\emph{argmax}}\limits_{x\in \Omega_l} U_l(x,y^*)$, and  $  y'_i\in\text{BR}_i(x',\theta')$. Thus, $(x', \boldsymbol{y}')$ is a HNE of $\mathcal{H}^2(\Theta)$.
\end{proof}

The following example indicates that misinformation may lead to some players' suspicions on the observation of the game, since others' strategies do not match their cognitions.

\begin{example}\label{ex::2}
Consider a SLMF game with $n=K=2$, $R_l=R_1=R_2=1$, $\Theta=\{0,1\}$, and $\theta_0=0$. Take $U_l^u(t_1)=2$, $U_l^u(t_2)=3$, $U_2^u(\theta',t_1)=U_2^u(\theta',t_2)=0$, and $U_l^c(t_1)=U_l^c(t_2)=U_2^c(\theta',t_1)=U_2^c(\theta',t_2)=1$. Denote $U_1^c(\theta',t_1)=U_1^u(\theta',t_1)=\theta'$ and $U_1^c(\theta',t_2)=U_1^u(\theta',t_2)=1-\theta'$. Then the leader's optimal deception is $\theta^*=1$, and players take $x^*=[0,1]^T,y^*_1=[1,0]^T$, and $y^*_2=[0,1]^T$. Notice that the first follower may not observe the leader's strategy before attacking \cite{korzhyk2011stackelberg}. In the first follower's view, $x^*$ is not the best response strategy to $\boldsymbol{y}^*$, and players should have taken  $x'=[0,1]^T,y'_1=[0,1]^T$ with $\theta_0=0$. Clearly, $\theta^*$ brings more benefits to the leader since $U_l(x^*,y_1^*,y_2^*)>U_l(x',y'_1,y^*_2)$. Then the first follower realizes the misinformation and tends to update its observation.
\end{example}
According to  Example \ref{ex::2},  players’ suspicions on their cognitions may cause them to update their observations, and even make the game model collapse. Thus, cognitive stability is crucial for SLMF  games with misinformation. To this end, we aim at analyzing the cognitive stability and strategic stability of MSSE and DSSE with the help of HNE.


\section{Stability Analysis}

It is known that HNE   describes a stable state    when each player does not update  its own cognitions and strategies. In this section, we explore  conditions to reveal how the MSSE and   DSSE can become a HNE in the   Stackelberg hypergame.


\subsection{Stable Conditions}

With misperception, MSSE is called stable when players do not realize the inherent misperception, while, with deception, DSSE is called stable when the leader has no will to change its manipulation on followers' cognitions.

First, in order to   evaluate the stability of  MSSE,   given $\boldsymbol{y}\in \mathbf{\Omega}_f , \theta' \in \mathbb{R}^m$, define 
$$
\begin{aligned}
\text{SOL}(\boldsymbol{y},\theta')=\{\boldsymbol{y}'\in \mathbf{\Omega}_f, \lambda>0|A_1(\theta') \boldsymbol{y}'= \lambda B\boldsymbol{y},A_2 \boldsymbol{y}' =0\},
\end{aligned}
$$ 
where 
$$
\begin{aligned}
&A_1(\theta')=[A_1(\theta',1),\dots, A_1(\theta',n)],\\
&A_1\!(\theta'\!, \!i)\!\!=\!\!\text{diag}\!\!\left(\!\frac{U_1^u\!(\theta'\!,t_1\!)\!\!-\!\!U_1^c\!(\theta'\!,t_1\!)}{U_l^c\!(t_1\!)\!-\!U_l^u\!(t_1\!)}\!,\! \dots\!,\! \frac{U_1^u\!(\theta'\!,t_K\!)\!\!-\!\!U_1^c\!(\theta'\!,t_K\!)}{U_l^c\!(t_K\!)\!-\!U_l^u\!(t_K\!)}\!\right)\!,\\
&A_2= \text{diag}\big(\mathbf{1}_{nK}\!-\!\chi(y)\big),\quad B=[I_{K}, I_{K}, \dots, I_{K}].
\end{aligned}
$$
Notice that $A_2\boldsymbol{y}'=0$ is equivalent to $(\boldsymbol{y}')_i^k=0$ if $y_i^k=0$ for any $i\in\mathbf{P}, k=1,\dots, K$.  Here, $\chi(\cdot)$ is the indicative function where $\chi(x )=0$ iff $x=0$.

Let $(x_{\textit{\tiny MSSE}}, \boldsymbol{y}_{\textit{\tiny MSSE}})$ be a MSSE  of $\mathcal{H}^2(\theta')$. In the following, we give a result about the   MSSE of $\mathcal{H}^2(\theta')$,  whose proof   can be found in Appendix \ref{ap::th::h1}.

\begin{mythm}\label{th::h1}
Under Assumptions \ref{as::ass1} and \ref{as::ass3},  if $\text{\emph{SOL}}(\boldsymbol{y}_{\textit{\tiny MSSE}},\theta')$ is nonempty, then $(x_{\textit{\tiny MSSE}}, \boldsymbol{y}_{\textit{\tiny MSSE}})$ is also a HNE.
\end{mythm}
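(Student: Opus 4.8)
The plan is to check the two defining conditions of a HNE separately for $(x_{\textit{\tiny MSSE}}, \boldsymbol{y}_{\textit{\tiny MSSE}})$ and to notice that only one of them has content. The follower requirements $y_{i,\textit{\tiny MSSE}} \in \arg\max_{y_i \in \Omega_i} U_i(x_{\textit{\tiny MSSE}}, y_i, \theta')$ hold for free: by the definition of MSSE we already have $\boldsymbol{y}_{\textit{\tiny MSSE}} \in \boldsymbol{\textbf{BR}}(x_{\textit{\tiny MSSE}}, \theta')$, which is exactly the per-follower HNE condition. Hence the whole statement collapses to the leader's condition $x_{\textit{\tiny MSSE}} \in \arg\max_{x \in \Omega_l} U_l(x, \boldsymbol{y}_{\textit{\tiny MSSE}})$, i.e. that the MSSE leader strategy stays optimal when the followers are \emph{frozen} at $\boldsymbol{y}_{\textit{\tiny MSSE}}$ instead of being allowed to react.

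Next I would exploit the linearity of $U_l$. Writing $Y^k = \sum_i y_{i,\textit{\tiny MSSE}}^k$ and using (\ref{eq::LFUl}), the map $x \mapsto U_l(x, \boldsymbol{y}_{\textit{\tiny MSSE}})$ is affine on the simplex $\Omega_l$ with $x^k$-coefficient $c^k := Y^k\big(U_l^c(t_k) - U_l^u(t_k)\big)$. Under Assumption \ref{as::ass1} each $c^k$ is nonnegative, so the frozen-follower problem is a linear program whose maximizers are precisely the $x\in\Omega_l$ supported on $\arg\max_k c^k$. The leader condition thus reduces to the combinatorial claim that $\text{supp}(x_{\textit{\tiny MSSE}}) \subseteq \arg\max_k c^k$.

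The hypothesis is what certifies this claim. Given $(\boldsymbol{y}', \lambda) \in \text{SOL}(\boldsymbol{y}_{\textit{\tiny MSSE}}, \theta')$, the constraint $A_2 \boldsymbol{y}' = 0$ forces $\text{supp}(\boldsymbol{y}') \subseteq \text{supp}(\boldsymbol{y}_{\textit{\tiny MSSE}})$; since each follower is indifferent across the targets in the support of its MSSE best response, any such $\boldsymbol{y}' \in \mathbf{\Omega}_f$ is again a best response to $x_{\textit{\tiny MSSE}}$, so $(x_{\textit{\tiny MSSE}}, \boldsymbol{y}')$ is feasible for the MSSE program and $U_l(x_{\textit{\tiny MSSE}}, \boldsymbol{y}') \le U_l(x_{\textit{\tiny MSSE}}, \boldsymbol{y}_{\textit{\tiny MSSE}})$. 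Multiplying the equation $A_1(\theta') \boldsymbol{y}' = \lambda B \boldsymbol{y}_{\textit{\tiny MSSE}}$ through by the positive factors $U_l^c(t_k) - U_l^u(t_k)$ turns it into $\sum_i \big(U_i^u(\theta',t_k) - U_i^c(\theta',t_k)\big)(y_i')^k = \lambda c^k$ for every $k$, where Assumption \ref{as::ass3} makes the weights $U_i^u - U_i^c$ strictly positive. I would then combine this identity with the follower indifference relations (on the common support, $x_{\textit{\tiny MSSE}}^k\big(U_i^c(\theta',t_k)-U_i^u(\theta',t_k)\big) + R_l U_i^u(\theta',t_k)$ is a constant $\mu_i$) to convert the bilevel optimality of $(x_{\textit{\tiny MSSE}}, \boldsymbol{y}_{\textit{\tiny MSSE}})$ into a complementarity condition on $c$, with $\boldsymbol{y}'$ serving as the follower-reaction certificate that makes the adverse reaction term cancel, leaving $c^k$ constant and maximal on $\text{supp}(x_{\textit{\tiny MSSE}})$.

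The main obstacle is exactly this last translation: isolating the \emph{direct} effect of a leader deviation (which governs the frozen-follower LP) from the \emph{reaction} effect (which the MSSE internalizes but the HNE does not). A naive deviation shifting leader mass onto a higher-$c^k$ target can be neutralized by followers moving away from it, so MSSE optimality by itself does not yield frozen-follower optimality; the real content of SOL-nonemptiness is that a best-response redistribution $\boldsymbol{y}'$ exists whose sensitivity-weighted aggregate is proportional to $c$, and the delicate step will be to show that this proportionality renders the reaction term first-order neutral, so that $U_l(x_{\textit{\tiny MSSE}}, \boldsymbol{y}') \le U_l(x_{\textit{\tiny MSSE}}, \boldsymbol{y}_{\textit{\tiny MSSE}})$ degenerates into the required support condition. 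I would treat the degenerate targets ($Y^k = 0$, hence $c^k = 0$) separately, verifying that the support constraint makes the corresponding components of $\boldsymbol{y}'$ vanish consistently.
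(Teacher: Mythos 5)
Your reduction is sound as far as it goes: the follower half of the HNE definition is indeed automatic from $\boldsymbol{y}_{\textit{\tiny MSSE}}\in\boldsymbol{\textbf{BR}}(x_{\textit{\tiny MSSE}},\theta')$; the leader half is equivalent to $\mathrm{supp}(x_{\textit{\tiny MSSE}})\subseteq\arg\max_k c^k$ for the affine coefficients $c^k=\big(\sum_i(\boldsymbol{y}_{\textit{\tiny MSSE}})_i^k\big)\big(U_l^c(t_k)-U_l^u(t_k)\big)$; and the SOL certificate does give both that $\boldsymbol{y}'$ remains a best response to $x_{\textit{\tiny MSSE}}$ and the identity $\sum_i\big(U_i^u(\theta',t_k)-U_i^c(\theta',t_k)\big)(\boldsymbol{y}')_i^k=\lambda c^k$. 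But the argument stops exactly where the content begins. The step you defer --- showing that the proportionality makes $U_l(x_{\textit{\tiny MSSE}},\boldsymbol{y}')\leqslant U_l(x_{\textit{\tiny MSSE}},\boldsymbol{y}_{\textit{\tiny MSSE}})$ ``degenerate into the required support condition'' --- is not a computation you have actually set up: that inequality compares two follower profiles at the \emph{same} leader strategy (it only reflects optimistic tie-breaking) and carries no information about the leader's optimality over $\Omega_l$, so no manipulation of it will produce the support condition. You correctly observe that MSSE optimality alone does not control frozen-follower deviations because follower reactions can neutralize them, but you then assert rather than prove that SOL-nonemptiness repairs this.

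The paper closes this gap with two ingredients absent from your plan. First, the SOL identity is used functionally, not pointwise: for all $x,x'\in\Omega_l$, $U_l(x,\boldsymbol{y}_{\textit{\tiny MSSE}})-U_l(x',\boldsymbol{y}_{\textit{\tiny MSSE}})=\tfrac{1}{\lambda}\big(\sum_iU_i(x',\boldsymbol{y}'_i,\theta')-\sum_iU_i(x,\boldsymbol{y}'_i,\theta')\big)$, so the frozen-follower leader problem is equivalent to \emph{minimizing} the aggregate follower utility evaluated at $\boldsymbol{y}'$. Second, one proves by contradiction (using the fact from \cite{korzhyk2011stackelberg} that a minimax allocation can be perturbed while keeping $\boldsymbol{y}_{\textit{\tiny MSSE}}$ a best response) that $x_{\textit{\tiny MSSE}}$ minimizes $E(x,\theta')=\sum_i\max_{y_i\in\Omega_i}U_i(x,y_i,\theta')$, and then invokes the minimax theorem for the induced zero-sum game between the leader and the follower aggregate: a minimax strategy paired with a best response $\boldsymbol{y}'$ is a saddle point, hence $x_{\textit{\tiny MSSE}}$ minimizes $\sum_iU_i(\cdot,\boldsymbol{y}'_i,\theta')$, which by the functional identity is exactly the leader condition. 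Your ``complementarity on $c$'' is a plausible destination, but reaching it requires precisely this minimax/saddle-point argument (or an equivalent duality certificate), and without it the proposal is a correct framing of the problem rather than a proof.
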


Theorem \ref{th::h1} implies that a MSSE strategy is stable in such a  condition,  since  such a decision-making process prevents players from realizing the inherent
misperception. Concretely, for each follower, the leader's strategy is consistent with each follower's anticipation, and its MSSE strategy is also the best response strategy in its own subjective game.  Thus,  they can not be aware of their cognitive errors in $\mathcal{H}^2(\theta')$, which also conforms with the two-players game model in \cite{sasaki2008preservation}. Additionally,  for the leader, Theorem \ref{th::h1} also indicates that, even if followers cannot observe the consequences of the leader's strategy, the leader can safely play a MSSE strategy since it is still the best response strategy. Furthermore, no matter how many followers move simultaneously, the  conclusion holds in the SLMF game with misperception $\mathcal{H}^2(\theta')$, which covers the situation  in \cite{korzhyk2011stackelberg}.

Second, in the view of deception, our major concern is the stability of  DSSE. The following theorem gives a sufficient condition to guarantee that a  DSSE of $\mathcal{H}^2(\Theta)$ is a HNE,  whose proof   can be found in Appendix \ref{ap::th::h2}.

\begin{figure*}
\centering
\subfigure[$n=5$]{
\label{fi::n5}
 \includegraphics[width=4.8cm]{./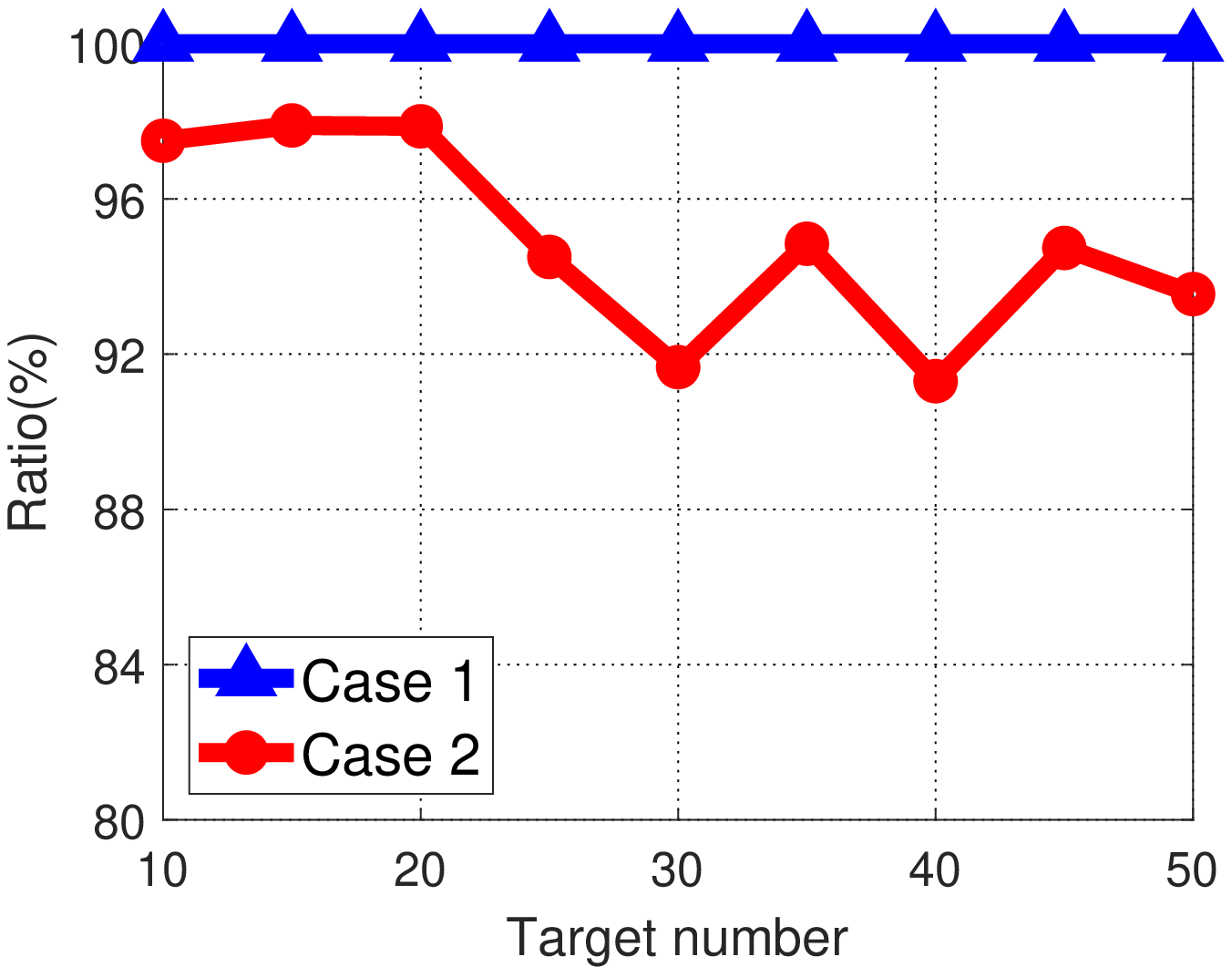}}
 \subfigure[$n=10$]
 { \label{fi::n10}
  \includegraphics[width=4.8cm]{./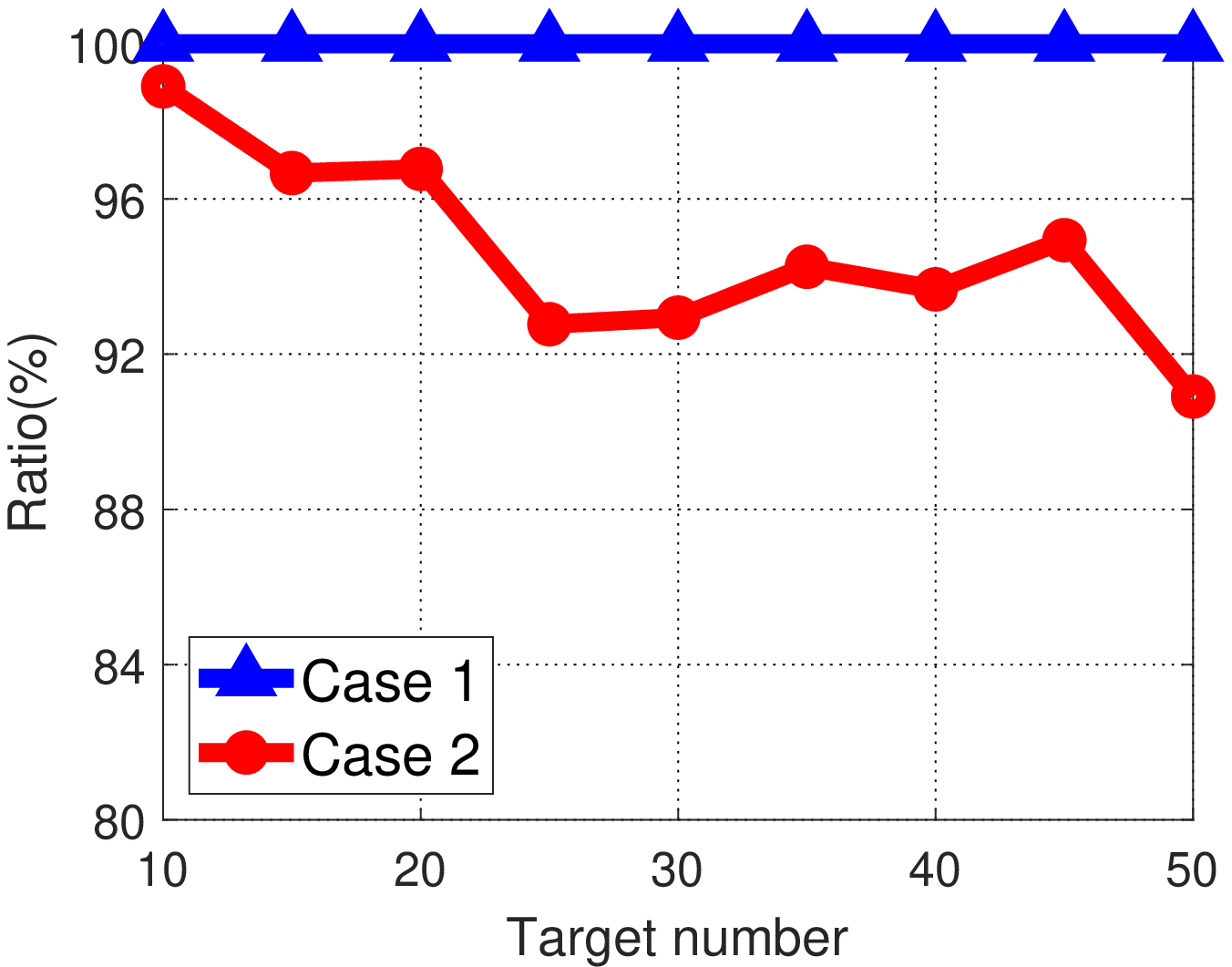}}
   \subfigure[$n=15$]
 { \label{fi::n15}
  \includegraphics[width=4.8cm]{./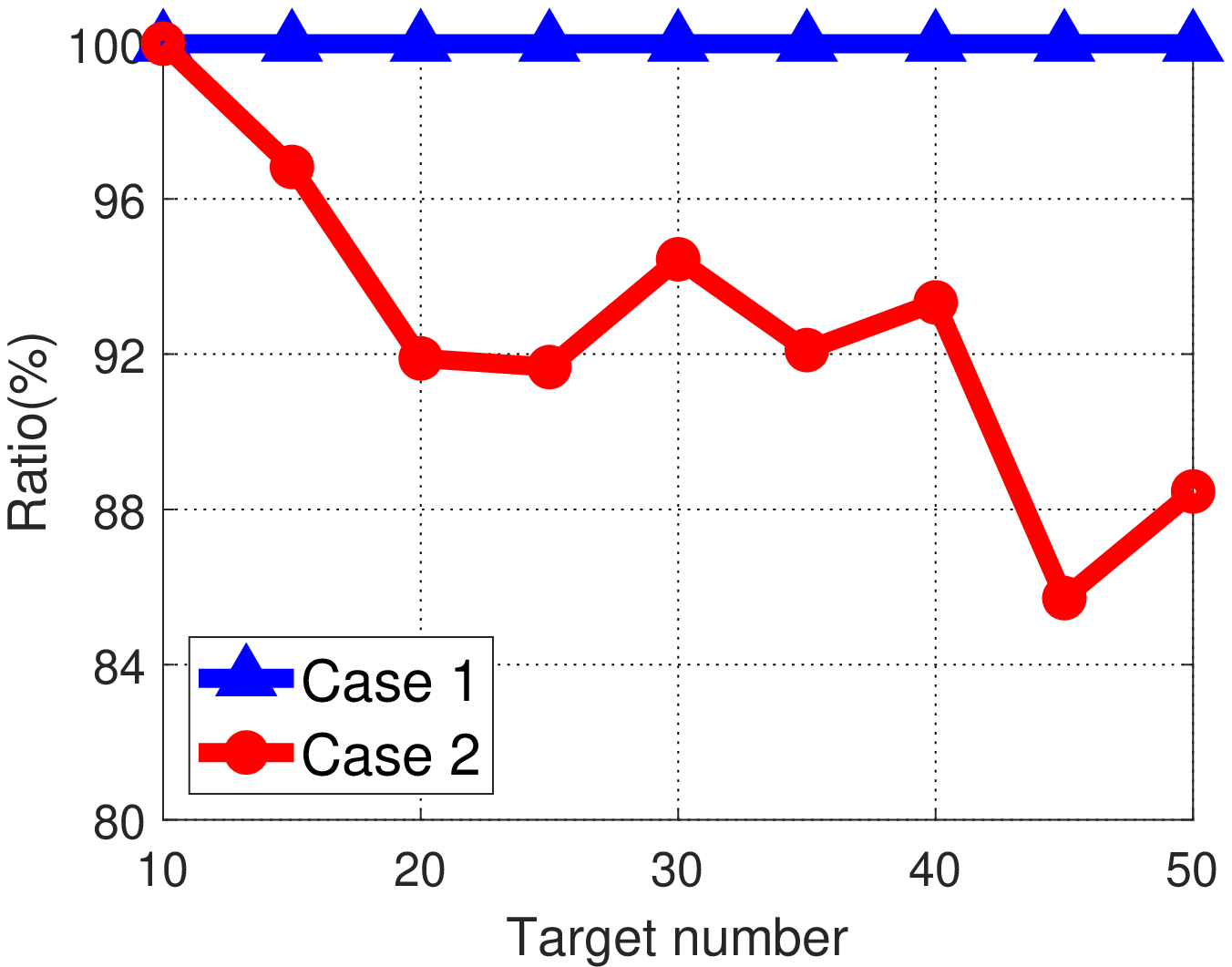}}
       \subfigure[$K=5$]
 { \label{fi::K5}
  \includegraphics[width=4.8cm]{./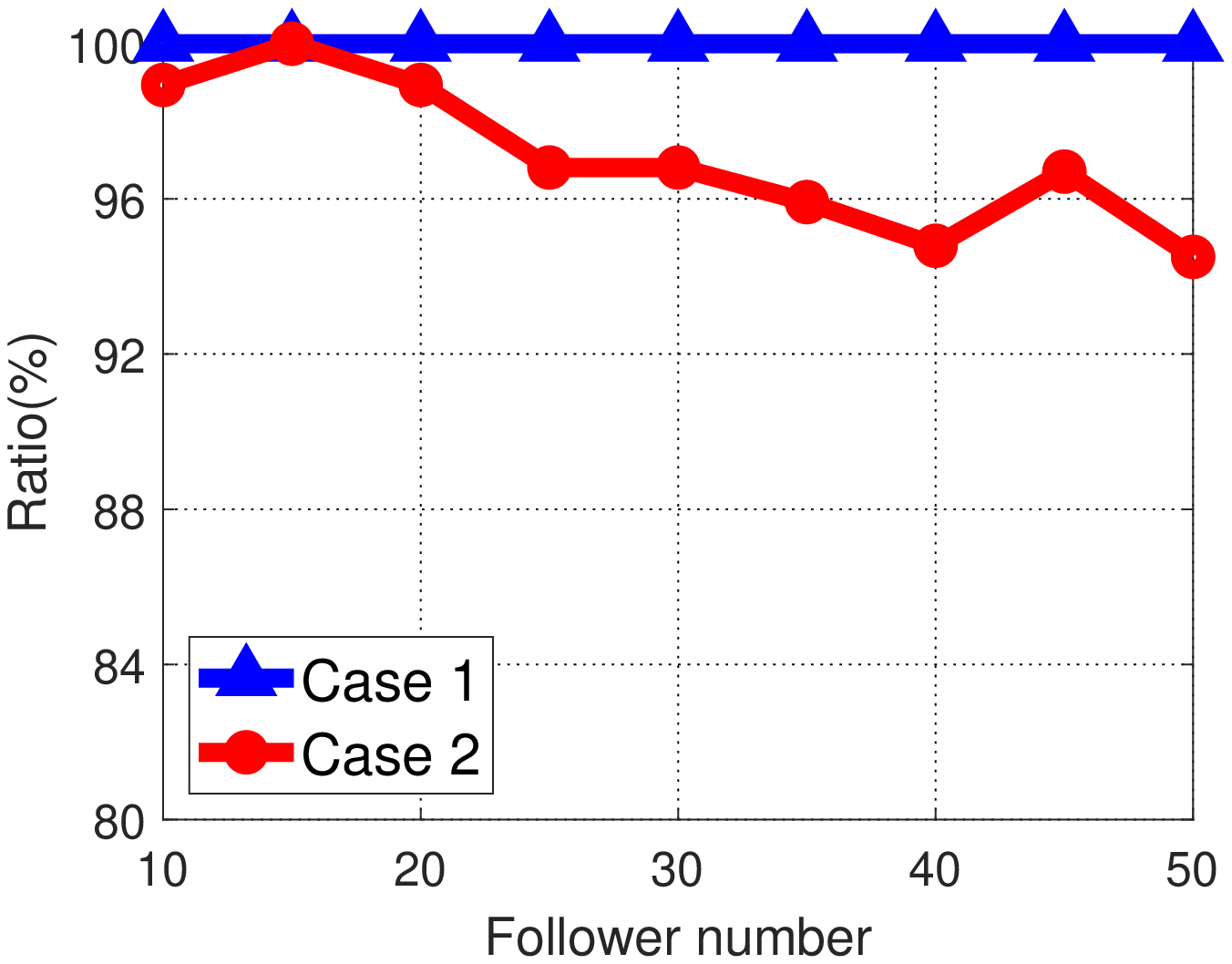}}
         \subfigure[$K=10$]
 { \label{fi::K10}
  \includegraphics[width=4.8cm]{./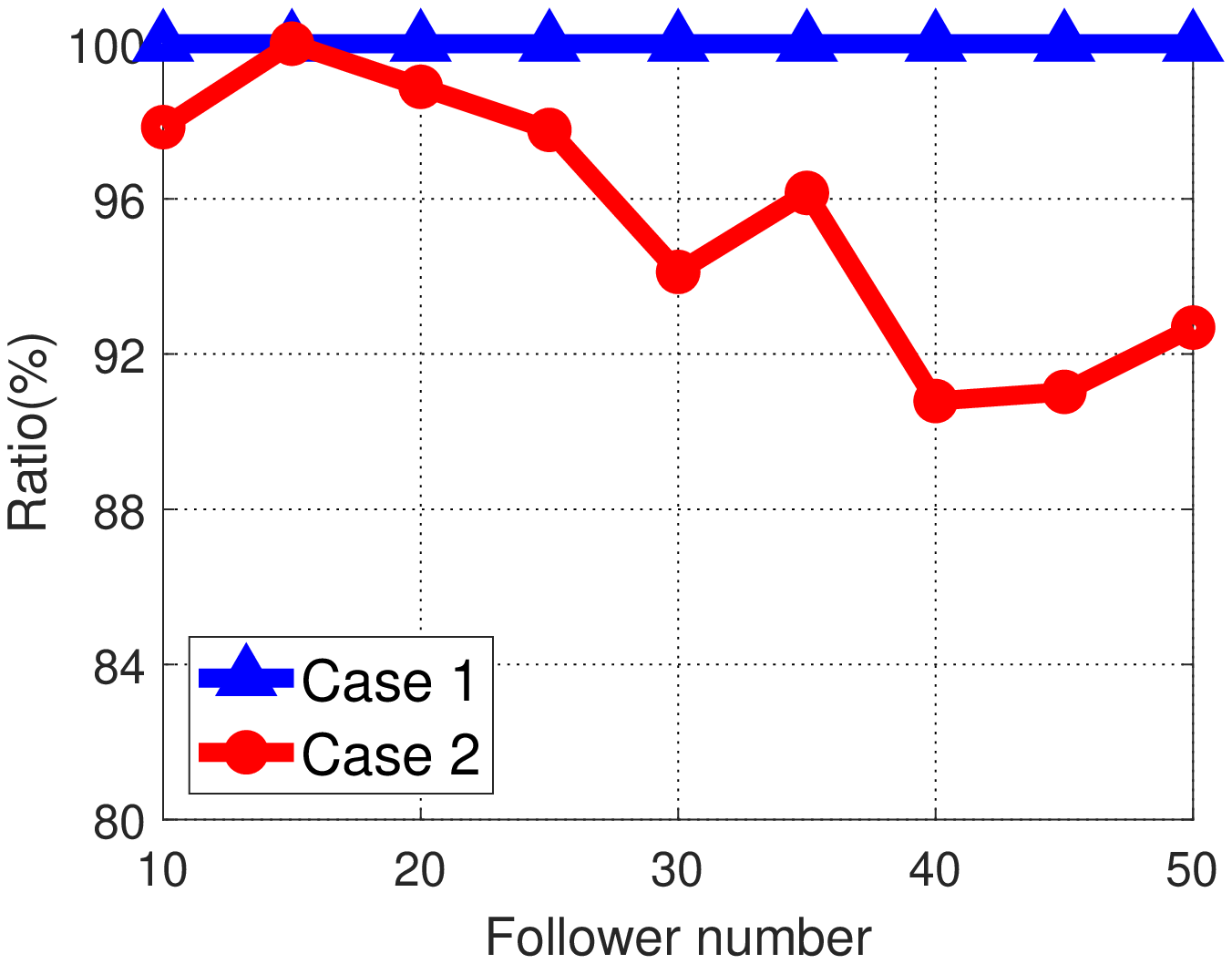}}
     \subfigure[$K=15$]
 { \label{fi::K15}
  \includegraphics[width=4.8cm]{./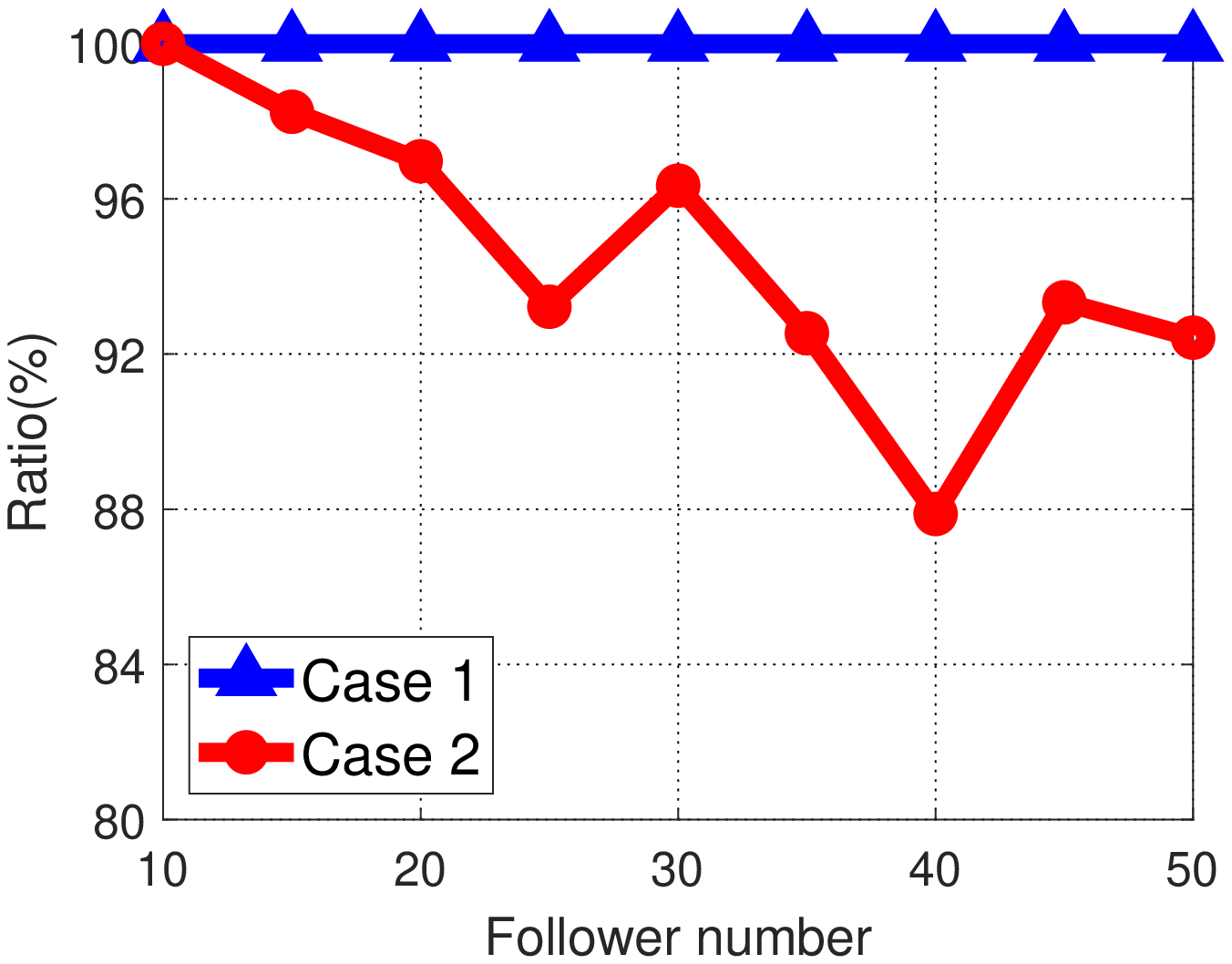}}
 \caption{Ratios of the two cases in misperception. 
The $x$-axis is for the target number in (a)-(c), and for the follower number in (d)-(f). $y$-axis is for ratios of \textit{Case 1} and \textit{Case 2} in 100 instances, which are depicted in blue and red, respectively.}
 \label{fig:subfig}
 \end{figure*}

\begin{mythm}\label{th::h2}
Under Assumptions \ref{as::ass4}-\ref{as::ass3}, if $K_{max}\in \mathop{\text{\emph{argmax}}}\limits_{k=1,\dots,K} U_l^c(t_k)$ and the leader is able to trick followers into attacking target $t_{K_{max}}$, then the corresponding DSSE  of $\mathcal{H}^2(\Theta)$ is a HNE.
\end{mythm}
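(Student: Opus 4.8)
The plan is to verify the two defining inequalities of a HNE directly for the DSSE triple $(x^*, \boldsymbol{y}^*, \theta^*)$, taking $\theta' = \theta^*$ in the HNE definition. The follower condition $y_i^* \in \mathop{\text{argmax}}_{y_i \in \Omega_i} U_i(x^*, y_i, \theta^*)$ is immediate, because $\boldsymbol{y}^* \in \boldsymbol{\textbf{BR}}(x^*, \theta^*)$ by the very definition of DSSE. The whole content therefore lies in the leader condition $x^* \in \mathop{\text{argmax}}_{x \in \Omega_l} U_l(x, \boldsymbol{y}^*)$, i.e. in showing that the leader's Stackelberg commitment is simultaneously a Nash best response to the induced follower profile.

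First I would compute the leader's best response to an arbitrary fixed $\boldsymbol{y}$. Writing $Y^k = \sum_{i=1}^n y_i^k$ and collecting the terms of (\ref{eq::LFUl}) gives
$$ U_l(x,\boldsymbol{y}) = \sum_{k=1}^K Y^k\big(U_l^c(t_k)-U_l^u(t_k)\big)\,x^k + R_l\sum_{k=1}^K Y^k U_l^u(t_k), $$
where the second sum does not depend on $x$. Maximizing over the simplex $\Omega_l$ is thus a linear program whose coefficients $Y^k(U_l^c(t_k)-U_l^u(t_k))$ are nonnegative by Assumption \ref{as::ass1}, so every optimizer places all of $R_l$ on the targets attaining $\max_k Y^k(U_l^c(t_k)-U_l^u(t_k))$.

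Next I would pin down the DSSE under the trick hypothesis. Write $e_k \in \mathbb{R}^K$ for the $k$-th standard basis vector. Since the leader tricks every follower into attacking $t_{K_{max}}$, the induced profile obeys $Y^{K_{max}} = \sum_{i=1}^n R_i$ and $Y^k = 0$ for $k \neq K_{max}$, whence the unique leader best response of the preceding paragraph is the full protection $R_l e_{K_{max}}$. It then remains to confirm that the DSSE strategy itself equals $R_l e_{K_{max}}$. For this I would use, for each $k$ and each feasible $x$, the bound
$$ x^k U_l^c(t_k) + (R_l - x^k)U_l^u(t_k) = R_l U_l^c(t_k) - (R_l - x^k)\big(U_l^c(t_k)-U_l^u(t_k)\big) \le R_l U_l^c(t_{K_{max}}), $$
valid by Assumption \ref{as::ass1} and the definition of $K_{max}$; summed against the weights $Y^k$ this yields $U_l(x,\boldsymbol{y}) \le R_l U_l^c(t_{K_{max}}) \sum_{i=1}^n R_i$ for every feasible pair. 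Equality forces full coverage of a leader-optimal target, and---because under $\theta^*$ a covered target is otherwise the least attractive to the followers by Assumption \ref{as::ass3}, while the ``able to trick'' hypothesis keeps $t_{K_{max}}$ their most attractive target even at full coverage---the configuration $x^* = R_l e_{K_{max}}$, $y_i^* = R_i e_{K_{max}}$ both attains this bound and remains a genuine follower best response. Hence the DSSE, which by definition maximizes the leader's payoff, must realize exactly this configuration, so $x^* = R_l e_{K_{max}}$ coincides with the Nash best response to $\boldsymbol{y}^*$ and the leader condition holds.

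The step I expect to be the main obstacle is precisely this last alignment. In a generic Stackelberg interaction the leader's commitment is deliberately not a best response to the follower's equilibrium action---the leader ordinarily under-defends the attacked target to keep the attacker committed to it---so an SSE or DSSE strategy need not satisfy the Nash inequality demanded by a HNE. The crux is to argue that deception dissolves this tension: once $\theta^*$ makes $t_{K_{max}}$ attractive enough to survive full coverage, the leader no longer has to sacrifice coverage to retain the followers' attack, so the Stackelberg-optimal and Nash-optimal defenses coincide. Making the informal ``able to trick'' hypothesis precise as the statement that $t_{K_{max}}$ stays the followers' best target under $\theta^*$ at full coverage, and checking its consistency with tightness of the upper bound, is where I would spend the most care.
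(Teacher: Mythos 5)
Your proposal is correct and follows essentially the same route as the paper: both arguments hinge on the global upper bound $U_l(x,\boldsymbol{y})\leqslant R_l U_l^c(t_{K_{max}})\sum_{i=1}^n R_i$ (derived from Assumption \ref{as::ass1} and the definition of $K_{max}$), show that the configuration $x^*=R_l e_{K_{max}}$, $y_i^*=R_i e_{K_{max}}$ under the tricking deception $\theta^*$ attains it, and conclude simultaneously that this is the DSSE and that $x^*$ is a Nash best response to $\boldsymbol{y}^*$, the follower condition being automatic from $\boldsymbol{y}^*\in\boldsymbol{\textbf{BR}}(x^*,\theta^*)$. The only cosmetic difference is that the paper obtains the leader's best-response property by exhibiting an explicit element of $\text{SOL}(\boldsymbol{y}^*,\theta^*)$ and invoking Theorem \ref{th::h1}, whereas you compute the leader's linear-programming best response to $\boldsymbol{y}^*$ directly; both are immediate and the paper itself also notes that the best-response property already follows from the global bound.
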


Theorem \ref{th::h2} indicates that a DSSE strategy is stable in such a sufficient condition, since the leader has no will to change its manipulation on followers' cognitions. The deception brings the leader the most benefit since its DSSE strategy is the best response strategy. Also, similar to Theorem \ref{th::h1}, followers are not able to find that their observations of $\theta_0$ are misled, because the leader acts as they expect. Thus, followers do not update their observations, and the leader can safely deceive followers without being found out. Then the deception result is stable, as the discussion in \cite{HELLER2019223}. Additionally, no matter how many followers cannot observe the consequence of the leader's strategy, the leader can select a DSSE strategy because it is at least a HNE strategy for itself. Thus, the leader has no need to worry whether followers can observe its decision, which also consists with the analysis of two-players games in \cite{Bondi_Oh_Xu_Fang_Dilkina_Tambe_2020}.

\begin{remark}
Clearly, MSSE describes a situation caused by passive factors with players' biased cognitions, while  DSSE describes another situation caused by active factors among players with players' manipulated cognitions. Additionally, it is called stable with misperception when players do not realize the inherent misperception, while   it is called stable with deception when the leader has no will to change its manipulation on followers' cognitions. On the other hand, the   evaluation processes of   stability with misperception and deception  are different. Since deception happens in a cognitive set, we use   the deception's influences on players to investigate  the stability   of  DSSE, while for misperception, $\text{SOL}(\boldsymbol{y},\theta')$ gives how the  misperception affects players' strategies.
\end{remark}

\subsection{Typical Cases}
Here we investigate several typical cases to further explain the proposed stable conditions in   Theorems \ref{th::h1} and \ref{th::h2}, in order to show    them can be widely applied to many practical problems.

\subsubsection{Players' Perspective}

Consider the SLSF  game  in  \cite{korzhyk2011stackelberg}. The follower takes strategy $y_1\in {\Omega}_1$, and its utility function is $U_1(x,y,\theta')$. 
For any $\theta'\in\mathbb{R}^m$,   $\text{SOL}(y,\theta')$ can be converted to
$
\text{SOL}({y},\theta')=\left\{{y}'\in \mathbf{\Omega}_f, \lambda>0| A_1(\theta',1) \boldsymbol{y}'= \lambda \boldsymbol{y}\right\}.
$
It is easy to verify that  $\text{SOL}({y},\theta')$ is always nonempty. Then we have  the following result,  regarded as an extension of  the Theorem 3.9 in \cite{korzhyk2011stackelberg}.

\begin{corollary}\label{co::h1-2}
Under Assumptions \ref{as::ass1}  and \ref{as::ass3}, for any $\theta'\in \mathbb{R}^m$, any MSSE  is  a HNE of the SLSF game $\mathcal{H}^2(\theta')$. 
\end{corollary}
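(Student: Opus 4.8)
The plan is to read Corollary \ref{co::h1-2} as the single-follower specialization of Theorem \ref{th::h1}: with $n=1$ the SLMF game $\mathcal{H}^2(\theta')$ collapses to the SLSF game, so the whole statement follows once I verify the hypothesis of Theorem \ref{th::h1}, namely that $\text{SOL}(\boldsymbol{y}_{\textit{\tiny MSSE}},\theta')$ is nonempty for the given MSSE. I would therefore devote the proof to establishing this nonemptiness and then invoke Theorem \ref{th::h1} verbatim to conclude that the MSSE is a HNE.

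First I would unwind $\text{SOL}$ in the case $n=1$. Here $B=I_K$ and $A_1(\theta')=A_1(\theta',1)$ is the $K\times K$ diagonal matrix with $k$th entry $a_k=\frac{U_1^u(\theta',t_k)-U_1^c(\theta',t_k)}{U_l^c(t_k)-U_l^u(t_k)}$. The key observation is that Assumption \ref{as::ass1} makes each denominator positive and Assumption \ref{as::ass3} makes each numerator positive, so every $a_k>0$ and $A_1(\theta',1)$ is invertible; this is exactly the sign control the two assumptions supply. Then I would seek $y'$ and $\lambda>0$ solving $A_1(\theta',1)y'=\lambda y$ componentwise, i.e. $a_k(y')^k=\lambda y^k$, which forces $(y')^k=\lambda y^k/a_k$. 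Nonnegativity is immediate since $a_k,\lambda>0$ and $y^k\geqslant 0$; to land inside the simplex $\Omega_1$ I would fix $\lambda=R_1/\sum_{k}(y^k/a_k)$, which is well defined and strictly positive because $\sum_k y^k=R_1>0$ guarantees at least one $y^k>0$. This candidate also satisfies the suppressed constraint $A_2 y'=0$ automatically, since $(y')^k$ is proportional to $y^k$ and hence vanishes wherever $y^k=0$; this is precisely why the displayed reduction of $\text{SOL}$ in the SLSF case drops that equation. Consequently $\text{SOL}(y,\theta')$ is nonempty for every $y\in\Omega_1$, in particular for $\boldsymbol{y}_{\textit{\tiny MSSE}}$.

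The only substantive step is this nonemptiness check, and its crux is the joint positivity of the diagonal of $A_1(\theta',1)$ granted by Assumptions \ref{as::ass1} and \ref{as::ass3}; without that sign information the scaling $\lambda$ could fail to be positive or the constructed $y'$ could fall outside $\Omega_1$, so the explicit normalization would break down. Everything after that is a direct appeal to Theorem \ref{th::h1}, which I would state as the closing line. I do not anticipate any genuine obstacle beyond tracking the (positive) signs, since the SLSF structure makes the system $A_1(\theta',1)y'=\lambda y$ diagonal and hence solvable in closed form.
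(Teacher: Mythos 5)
Your proposal is correct and matches the paper's own argument: the paper likewise reduces $\text{SOL}(y,\theta')$ in the SLSF case to the diagonal system $A_1(\theta',1)y'=\lambda y$ and exhibits exactly the same explicit solution, $\lambda = R_1\big/\sum_k y^k\frac{U_l^c(t_k)-U_l^u(t_k)}{U_1^u(\theta',t_k)-U_1^c(\theta',t_k)}$ with $(y')^k$ proportional to $y^k$, before invoking Theorem~\ref{th::h1}. Your additional remarks on why Assumptions~\ref{as::ass1} and~\ref{as::ass3} give the needed sign control and why $A_2y'=0$ is automatic are consistent with, and slightly more explicit than, the paper's treatment.
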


In addition, players' DSSE strategies are with a certain $\theta^*$, and the following result follows directly. 
\begin{corollary}\label{co::h1-3}
Under Assumptions \ref{as::ass4}-\ref{as::ass3},  any DSSE   is a HNE of the SLSF  game $\mathcal{H}^2(\Theta)$. 
\end{corollary}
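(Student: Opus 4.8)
The plan is to reduce the deception statement to the already-established misperception result by freezing the optimal deception parameter, rather than routing through Theorem \ref{th::h2}. First I would invoke Lemma \ref{th::DSSE-existence-section-3}, which under Assumptions \ref{as::ass4} and \ref{as::ass5} guarantees a DSSE $(x^*, y^*, \theta^*)$ of the SLSF game $\mathcal{H}^2(\Theta)$, where $\theta^*$ is the leader's optimal deception, i.e., $\theta^*$ maximizes $\max_{x\in\Omega_l,\,y\in\text{BR}(x,\theta')}U_l(x,y)$ over $\theta'\in\Theta$. This fixes a single observation $\theta^*$ that I will treat as given for the rest of the argument.

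The crucial observation is that, once $\theta^*$ is frozen, the DSSE condition collapses exactly onto the MSSE condition for the single observation $\theta^*$: the pair $(x^*, y^*)$ satisfies $(x^*, y^*)\in\arg\max_{x\in\Omega_l,\,y\in\text{BR}(x,\theta^*)}U_l(x,y)$, which is precisely the defining property of a MSSE of $\mathcal{H}^2(\theta^*)$. Hence every DSSE of $\mathcal{H}^2(\Theta)$ yields a MSSE of the misperception game indexed by its own optimal deception, and the additional outer maximization over $\theta'$ plays no further role.

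Next I would apply Corollary \ref{co::h1-2}. In the SLSF setting the set $\text{SOL}(y,\theta')$ was shown to be always nonempty, so Theorem \ref{th::h1} applies unconditionally under Assumptions \ref{as::ass1} and \ref{as::ass3}, both of which are contained in the hypotheses Assumptions \ref{as::ass4}--\ref{as::ass3}. Consequently $(x^*, y^*)$, being a MSSE of $\mathcal{H}^2(\theta^*)$, is a HNE of $\mathcal{H}^2(\theta^*)$. Since the HNE of $\mathcal{H}^2(\Theta)$ is by definition stated for an arbitrary fixed $\theta'\in\Theta$, a HNE of $\mathcal{H}^2(\theta^*)$ is exactly a HNE of $\mathcal{H}^2(\Theta)$ with the fixed observation $\theta^*$, which closes the argument.

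The step deserving the most care is not any computation but the conceptual check that no analogue of the extra hypothesis in Theorem \ref{th::h2} (that the leader can lure the follower onto $t_{K_{max}}$) is required here. The reason is that in the single-follower case the nonemptiness of $\text{SOL}(y,\theta')$ is automatic, so the passage from MSSE to HNE never fails; this is precisely what lets the corollary assert that \emph{any} DSSE, and not merely a specially structured one, is a HNE. I would therefore emphasize in the write-up that the strength of the SLSF conclusion stems entirely from this unconditional solvability, which is unavailable in the general SLMF game.
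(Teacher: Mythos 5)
Your proposal is correct and follows essentially the same route the paper intends: the paper states the corollary ``follows directly'' precisely because a DSSE with its optimal deception $\theta^*$ frozen is a MSSE of $\mathcal{H}^2(\theta^*)$, and Corollary~\ref{co::h1-2} (via the unconditional nonemptiness of $\text{SOL}(y,\theta')$ in the SLSF case) then yields the HNE property. Your added remark on why no analogue of the $t_{K_{max}}$ hypothesis of Theorem~\ref{th::h2} is needed is a faithful elaboration of the same argument, not a different one.
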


\subsubsection{Targets' Perspective}

Consider the case that all followers prefer to attack the same target. Notice that our SLMF game is with independent targets. 
Attacks on one target do not affect others.  If followers attack the same target independently, $A_1(\theta') \boldsymbol{y}'= \lambda B\boldsymbol{y}$ covers the solution to $A_2 \boldsymbol{y}' =0$. Thus, $\text{SOL}(\boldsymbol{y},\theta')$ can be converted to
\begin{equation}\label{eq::case2-now}
\begin{aligned}
\text{SOL}(\boldsymbol{y},\theta')=\left\{\boldsymbol{y}'\in \mathbf{\Omega}_f, \lambda>0| A_1(\theta') \boldsymbol{y}'= \lambda B\boldsymbol{y}\right\}.
\end{aligned}
\end{equation}
Also, it is easy to see that  (\ref{eq::case2-now}) is always nonempty. Then we get the following result,   which  consists with the `near decomposability'  \cite{hausken2011protecting}.

\begin{corollary}\label{co::h2-2}
Under Assumptions \ref{as::ass1} and \ref{as::ass3}, for any $\theta'\in \mathbb{R}^m$,  a MSSE  of $\mathcal{H}^2(\theta')$ is also a HNE if all followers attack the same target.
\end{corollary}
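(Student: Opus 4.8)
The plan is to reduce everything to Theorem \ref{th::h1}, which already guarantees that a MSSE is a HNE as soon as the set $\text{SOL}(\boldsymbol{y}_{\textit{\tiny MSSE}},\theta')$ is nonempty under Assumptions \ref{as::ass1} and \ref{as::ass3}. Hence the entire task is to produce one admissible pair $(\boldsymbol{y}',\lambda)$ in $\text{SOL}(\boldsymbol{y}_{\textit{\tiny MSSE}},\theta')$ when all followers concentrate on a single target, and then apply that theorem verbatim.

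First I would record the structural consequence of the hypothesis. Writing $t_{k_0}$ for the commonly attacked target, the single-target assumption means $y_i^{k_0}=R_i$ and $y_i^{k}=0$ for every $i\in\mathbf{P}$ and every $k\neq k_0$; that is, $\boldsymbol{y}_{\textit{\tiny MSSE}}$ is block-wise supported only on the $k_0$-th coordinate. Consequently $B\boldsymbol{y}_{\textit{\tiny MSSE}}=\sum_{i=1}^n y_i$ has its $k_0$-th entry equal to $\sum_{i=1}^n R_i>0$ and all other entries zero. I would also note here that under Assumption \ref{as::ass1} every denominator $U_l^c(t_k)-U_l^u(t_k)$ is strictly positive, while under Assumption \ref{as::ass3} every numerator $U_i^u(\theta',t_k)-U_i^c(\theta',t_k)$ is strictly positive, so each diagonal block $A_1(\theta',i)$ has strictly positive diagonal entries. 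This positivity is the key fact on which the whole argument rests.

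Next I would exhibit the explicit witness $\boldsymbol{y}'=\boldsymbol{y}_{\textit{\tiny MSSE}}$ together with
\[
\lambda=\frac{\sum_{i=1}^n R_i\,\dfrac{U_i^u(\theta',t_{k_0})-U_i^c(\theta',t_{k_0})}{U_l^c(t_{k_0})-U_l^u(t_{k_0})}}{\sum_{i=1}^n R_i},
\]
and verify the three defining conditions of $\text{SOL}$. Feasibility $\boldsymbol{y}'\in\mathbf{\Omega}_f$ is immediate because $\boldsymbol{y}'$ equals the MSSE profile, and $A_2\boldsymbol{y}'=0$ holds trivially since $\boldsymbol{y}'$ shares the support of $\boldsymbol{y}_{\textit{\tiny MSSE}}$. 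For the equation $A_1(\theta')\boldsymbol{y}'=\lambda B\boldsymbol{y}_{\textit{\tiny MSSE}}$, the coordinates $k\neq k_0$ are satisfied because both sides vanish, and the $k_0$-th coordinate reduces exactly to the displayed choice of $\lambda$, which is strictly positive by the positivity established above. The same computation also confirms the reduction announced in the text: any nonnegative $\boldsymbol{y}'$ solving $A_1(\theta')\boldsymbol{y}'=\lambda B\boldsymbol{y}$ must, coordinate-wise for $k\neq k_0$, force each term $\big(\text{positive coefficient}\big)\cdot(y_i')^k$ to vanish, hence $(y_i')^k=0$, which is precisely $A_2\boldsymbol{y}'=0$. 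Thus $\text{SOL}$ collapses to the form (\ref{eq::case2-now}) and is nonempty.

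Finally, having shown $(\boldsymbol{y}',\lambda)\in\text{SOL}(\boldsymbol{y}_{\textit{\tiny MSSE}},\theta')$, I would invoke Theorem \ref{th::h1} to conclude that $(x_{\textit{\tiny MSSE}},\boldsymbol{y}_{\textit{\tiny MSSE}})$ is a HNE. I do not anticipate a genuine obstacle here: the argument is essentially bookkeeping over the block-concatenated matrix $A_1(\theta')$ and the concatenation $B$. The only point demanding care is confirming that the off-target equations force the correct support so that the single surviving equation at $t_{k_0}$ is solvable with $\lambda>0$; this is exactly where the sign conditions of Assumptions \ref{as::ass1} and \ref{as::ass3} are genuinely used, and it is the step I would write out most explicitly.
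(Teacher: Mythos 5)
Your proposal is correct and follows essentially the same route as the paper: exhibit the explicit witness $(\boldsymbol{y}',\lambda)$ with $\boldsymbol{y}'=\boldsymbol{y}_{\textit{\tiny MSSE}}$ supported on the common target and $\lambda$ the $R_i$-weighted average of the positive ratios $\frac{U_i^u(\theta',t_{k_0})-U_i^c(\theta',t_{k_0})}{U_l^c(t_{k_0})-U_l^u(t_{k_0})}$, so that $\text{SOL}(\boldsymbol{y}_{\textit{\tiny MSSE}},\theta')$ is nonempty, and then invoke Theorem \ref{th::h1}. Your additional check that the off-target equations force $A_2\boldsymbol{y}'=0$ (justifying the reduction to (\ref{eq::case2-now})) is a useful explicit verification of a step the paper only asserts.
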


\subsubsection{With Same Perception}

If there is no cognitive difference, followers' observations of $\theta_0$ are true, \textit{i.e.}, $\theta'=\theta_0$, and all players are involved in an identical game $\mathcal{G}$. Then the model turns into the SLMF  game in \cite{feng2019using}. 
Take $(x_{\textit{\tiny SSE}},\boldsymbol{y}_{ \textit{\tiny SSE}})$ as the SSE   of $\mathcal{G}$. The next result reveals a relationship between SSE and NE with the same perception.

\begin{corollary}\label{co::h3}
Under Assumptions \ref{as::ass4}, \ref{as::ass1} and \ref{as::ass3}, if $\text{\emph{SOL}}(\boldsymbol{y}_{\textit{\tiny SSE}},\theta_0)$ is nonempty, then $(x_{\textit{\tiny SSE}},\boldsymbol{y}_{ \textit{\tiny SSE}})$  is also a NE of $\mathcal{G}$.
\end{corollary}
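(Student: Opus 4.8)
The plan is to recognize that Corollary \ref{co::h3} is the degenerate specialization of Theorem \ref{th::h1} to the case $\theta'=\theta_0$, in which the cognitive difference vanishes and the equilibrium notions collapse onto one another. The whole argument therefore reduces to matching definitions and then invoking Theorem \ref{th::h1} directly; all the substantive work is already done there, so I would not reprove anything about the $\text{SOL}$ condition here.

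First I would verify that, when $\theta'=\theta_0$, a MSSE of $\mathcal{H}^2(\theta_0)$ coincides with a SSE of $\mathcal{G}$. Indeed, the $i$th follower's perceived best response $\text{BR}_i(x,\theta_0)=\text{argmax}_{y_i\in\Omega_i}U_i(x,y_i,\theta_0)$ equals $\text{BR}_i(x)=\text{argmax}_{y_i\in\Omega_i}U_i(x,y_i)$, since $U_i(x,y_i,\theta_0)$ is precisely the actual utility $U_i(x,y_i)$. Hence $\boldsymbol{\textbf{BR}}(x,\theta_0)=\boldsymbol{\textbf{BR}}(x)$, so the MSSE maximization over $\{x\in\Omega_l,\ \boldsymbol{y}\in\boldsymbol{\textbf{BR}}(x,\theta_0)\}$ is identical to the SSE maximization. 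Thus $(x_{\textit{\tiny SSE}},\boldsymbol{y}_{\textit{\tiny SSE}})$ is exactly a MSSE of $\mathcal{H}^2(\theta_0)$, and I may identify $\boldsymbol{y}_{\textit{\tiny SSE}}$ with the $\boldsymbol{y}_{\textit{\tiny MSSE}}$ appearing in Theorem \ref{th::h1}. With Assumptions \ref{as::ass1} and \ref{as::ass3} in force and $\text{SOL}(\boldsymbol{y}_{\textit{\tiny SSE}},\theta_0)$ nonempty by hypothesis, Theorem \ref{th::h1} then applies verbatim and yields that $(x_{\textit{\tiny SSE}},\boldsymbol{y}_{\textit{\tiny SSE}})$ is a HNE of $\mathcal{H}^2(\theta_0)$. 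Assumption \ref{as::ass4} is needed only to guarantee $\theta_0\in\Theta$, so that the fixed observation $\theta'=\theta_0$ is admissible and the hypergame is well-posed; it plays no further role.

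Finally I would unwind the HNE conditions at $\theta'=\theta_0$. By the HNE definition, $x_{\textit{\tiny SSE}}\in\text{argmax}_{x\in\Omega_l}U_l(x,\boldsymbol{y}_{\textit{\tiny SSE}})$ and $(y_{\textit{\tiny SSE}})_i\in\text{argmax}_{y_i\in\Omega_i}U_i(y_i,x_{\textit{\tiny SSE}},\theta_0)$ for every $i\in\mathbf{P}$. Since $U_i(y_i,x_{\textit{\tiny SSE}},\theta_0)=U_i(x_{\textit{\tiny SSE}},y_i)$, these are exactly the two conditions defining a NE of $\mathcal{G}$, so $(x_{\textit{\tiny SSE}},\boldsymbol{y}_{\textit{\tiny SSE}})$ is a NE of $\mathcal{G}$, as claimed. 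There is no real analytic obstacle inside the corollary itself; the one thing to be careful about is the bookkeeping that both the input equivalence (SSE $\leftrightarrow$ MSSE) and the output equivalence (HNE $\leftrightarrow$ NE) genuinely hold once the cognitive difference is removed, rather than any new estimate.
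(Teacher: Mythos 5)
Your proposal is correct and matches the paper's (implicit) argument: the paper states Corollary \ref{co::h3} without a separate proof, precisely because it is the specialization of Theorem \ref{th::h1} to $\theta'=\theta_0$, where $\boldsymbol{\textbf{BR}}(x,\theta_0)=\boldsymbol{\textbf{BR}}(x)$ makes MSSE coincide with SSE and the HNE conditions collapse to the NE conditions. Your bookkeeping of both equivalences, and the observation that Assumption \ref{as::ass4} only serves to make $\theta_0$ an admissible observation, is exactly the intended reading.
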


\section{Robustness analysis}

\begin{figure*}
\centering

\subfigure[$a=0.2$]{
\label{fi::mis_compare_not}
 \includegraphics[width=4.8cm]{./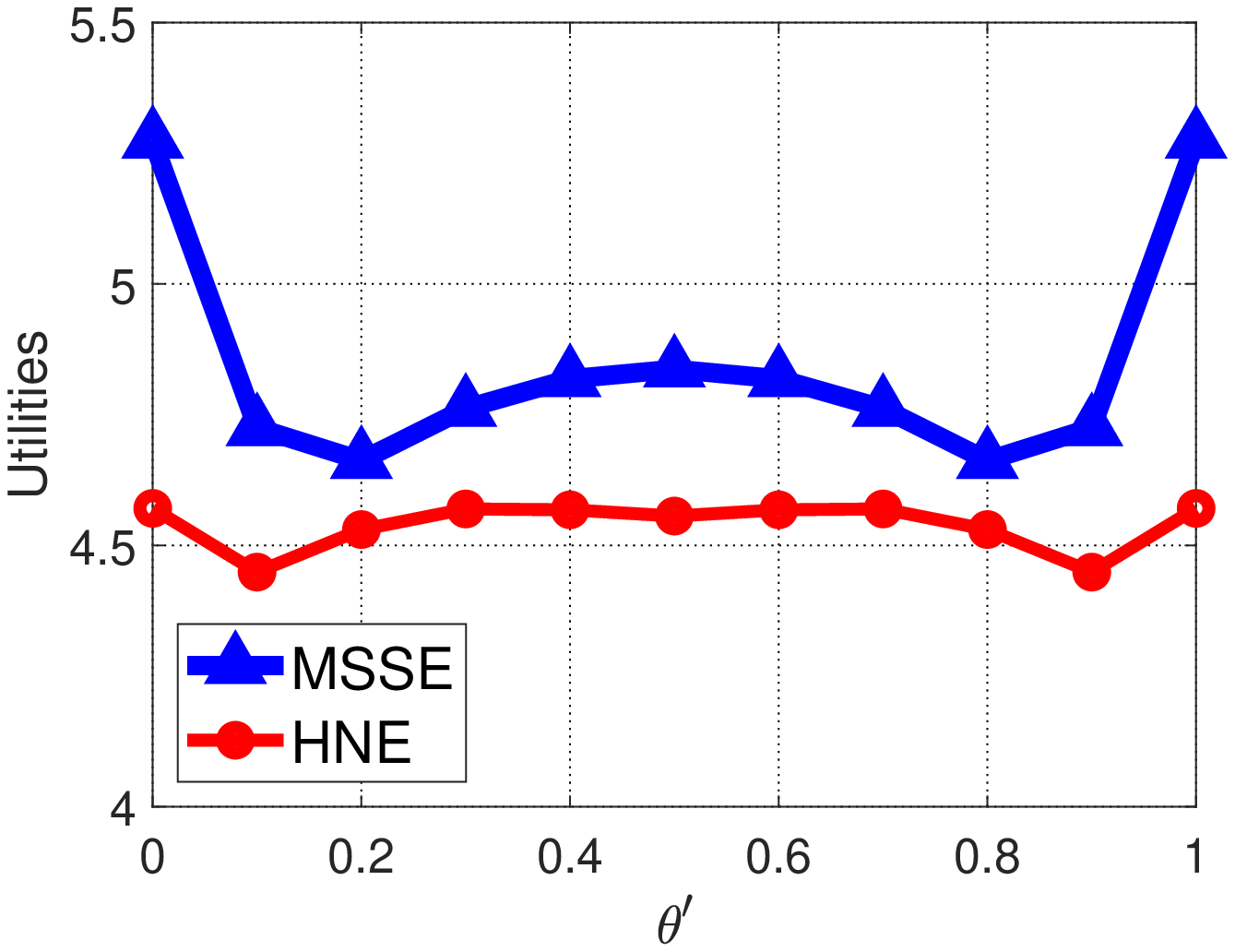}}
\subfigure[$a=0.3$]{
\label{fi::mis_compare_1}
 \includegraphics[width=4.8cm]{./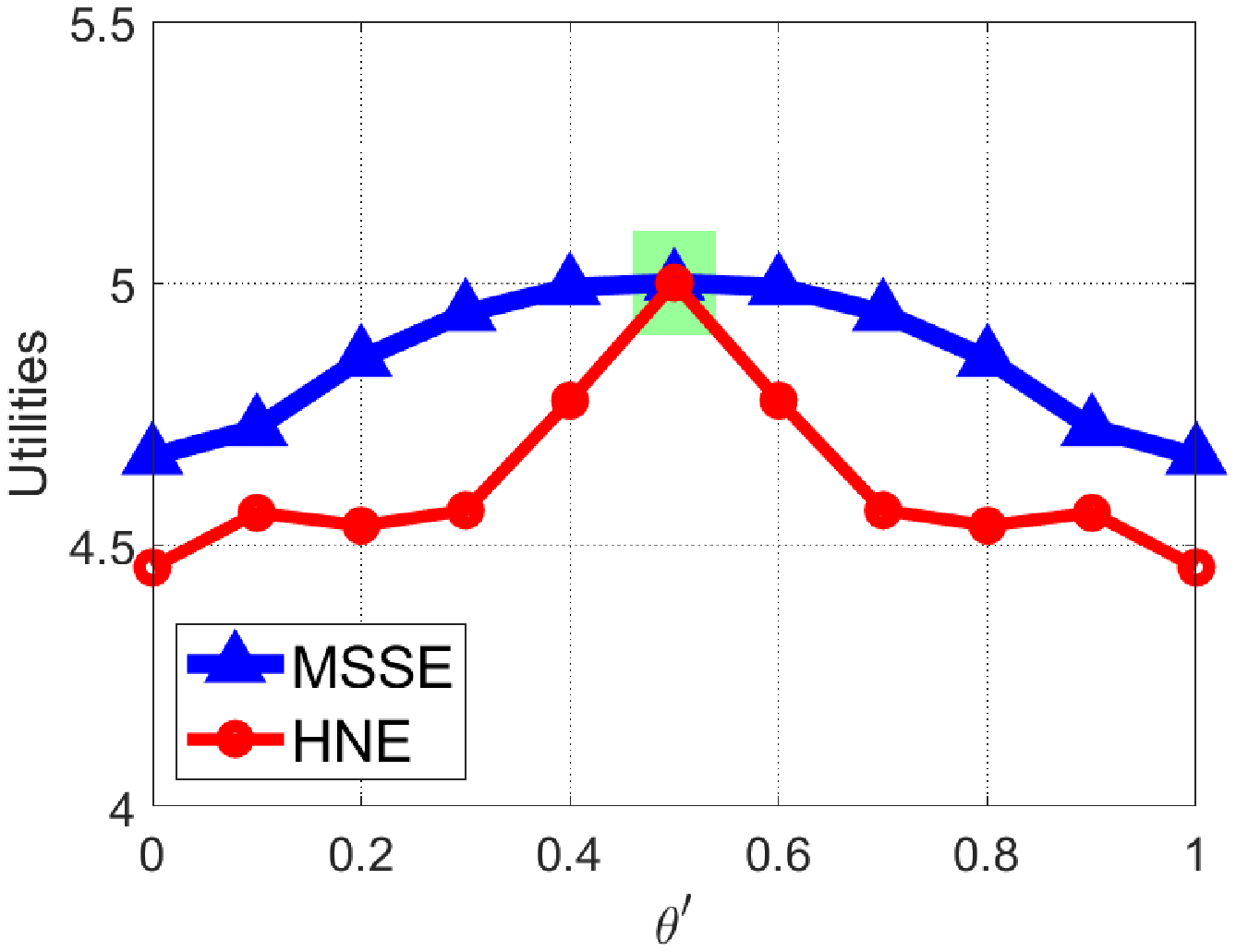}}
\subfigure[$a=0.4$]{
\label{fi::mis_compare_more}
 \includegraphics[width=4.8cm]{./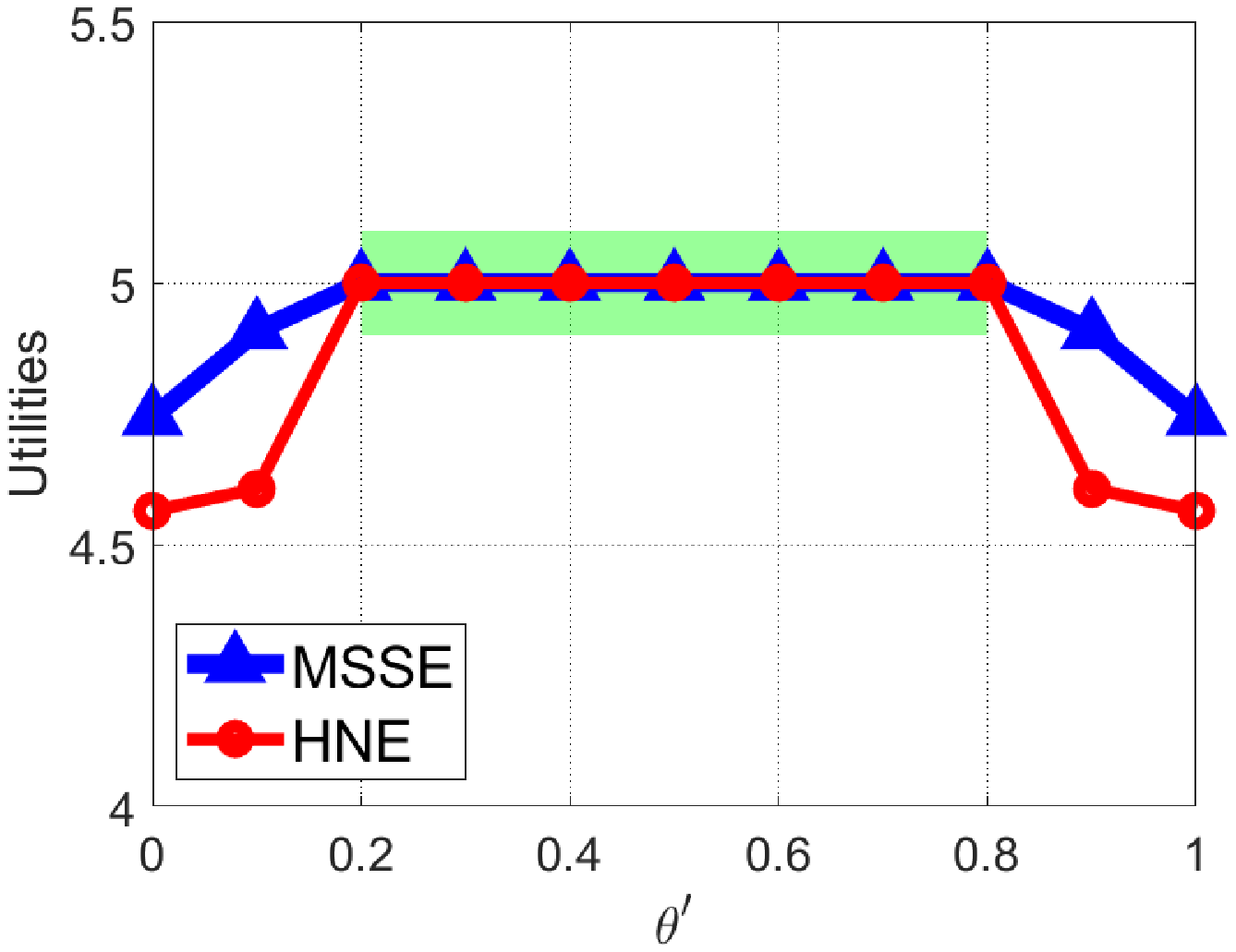}}
 \caption{To find cognitively stable MSSE with different environment settings. The blue line describes the leader's utility $U_L$ of MSSE with different $\theta'\in \Theta$, while the red line stands for the leader's utility $U_L$ of HNE with different $\theta'\in \Theta$. The light green region shows when  MSSE is HNE.
}
 \label{fig:mis_com}
 \end{figure*}

In this section, we   discuss the robustness of  MSSE and   DSSE. As a complement  to  HNE,  we focus on the misinformation's influence on players' actual utility functions, which refers to players'  capacities to keep their profits.

Conveniently, for any $x\in\Omega_l,\theta\in \Theta,i\in\mathbf{P}, k=1,\dots,K$, let 
$g_i(x,\theta, k)=x^k U_i^c(\theta,t_k)+(R_l-x^k) U_i^u(\theta,t_k). $ Correspondingly, denote $$\Gamma^1_i(x,\theta)=\mathop{\text{argmax}}\limits_{k=1,\dots,K}g_i(x,\theta, k),$$ $$\Gamma^2_i(x,\theta)=\mathop{\text{argmax}}\limits_{k=1,\dots,K,k\notin \Gamma^1_i(x,\theta)}g_i(x,\theta, k).$$ Intuitively,  $\Gamma^1_i(x,\theta)$ and $\Gamma^2_i(x,\theta)$ represent the corresponding target sets of the  two  most attractive utilities to the $i$th follower under the leader's strategy $x$ and the observation $\theta$. Moreover,  let
\begin{equation}\label{eq::gri}
\begin{aligned}
\hat{g}_i^r=g_i(x_{\textit{\tiny SSE}},\theta_0,k),k\in \Gamma^r_i(x_{\textit{\tiny SSE}},\theta_0),r\in\{1,2\},
\end{aligned}
\end{equation}
\begin{equation}\label{eq::deltei}
\begin{aligned}
\bigtriangledown^*_i=\max\limits_{k\in\Gamma^1_i(x_{\textit{\tiny SSE}},\theta_0)}\parallel\bigtriangledown_{\theta}g_i(x_{\textit{\tiny SSE}},\theta_0,k)\parallel.
\end{aligned}
\end{equation}

\subsection{For MSSE}

In this situation, the imprecise observation  mainly affects followers, and  additionally, the followers' decisions under  different observations also  reflect the game's performance of resisting misperception. For instance, in cyber-physical security problems,  external perturbation influences the followers' profits through  their observations \cite{bakker2020hypergames}. If the followers' profits do not change under the perturbation, the game has a strong  anti-jamming capacity. In the view of bounded rationality such as computational constraints, emotion, and habitual thoughts, players have an inherent observation error  \cite{8691466}. If the profits of followers remain unchanged under bounded rationality, the game is said to be robust to the inherent systemic uncertainty. Moreover, for the accidental error, it reveals the tolerance of the model for the random internal uncertainty, if the accidental error  does not change the followers' profits \cite{kraemer2007human}.

For $\theta'\in\mathbb{R}^m$, let $\boldsymbol{y}_{\textit{\tiny MSSE}}(\theta')$ be the followers' the MSSE strategy of $\mathcal{H}^2(\theta')$ and $(x_{\textit{\tiny SSE}},\boldsymbol{y}_{ \textit{\tiny SSE}})$ be the SSE   of $\mathcal{G}(\theta_0)$. We are interested in the subset $\delta_{\theta}\subseteq\Theta$ such that
\begin{equation}\label{eq::robustofMSSE}
U_i\big(x_{\textit{\tiny SSE}},(\boldsymbol{y}_{\textit{\tiny SSE}})_i,\theta_0\big)\!=\!U_i\big(x_{ \textit{\tiny SSE}},(\boldsymbol{y}_{ \textit{\tiny MSSE}})_i(\theta'),\theta_0\big),\forall \theta\in\delta_\theta, i\in\mathbf{P}\!,
\end{equation}
which is regarded as the robustness set of MSSE. The following result shows the robustness of MSSE, whose  proof  can be found in Appendix \ref{ap::th::robust-followers-2}.

\begin{mythm}\label{th::robust-followers-1}
Under Assumptions \ref{as::ass4}-\ref{as::ass3},

\noindent 1) there exists a convex subset $\delta_{\theta}\subseteq\Theta$ satisfying (\ref{eq::robustofMSSE}) with nonempty $\emph{int}(\delta_{\theta})$; 

\noindent 2) moreover,  if $U_i^c(\theta,t_k)$ and $U_i^u(\theta,t_k)$ are convex and $\varsigma$-Lipschitz continuous  in $\theta\in \Theta$ for all $i\in\mathbf{P},k=1,\dots K $,  there exists   $\delta_\theta=\left\{\theta\in\Theta:\parallel\theta-\theta_0\parallel<\Delta\theta\right\}$ satisfying (\ref{eq::robustofMSSE}) such that
$$
\begin{aligned}
\Delta\theta\!=\!\min\limits_{i\in\mathbf{P}}\frac{\hat{g}_i^1- \hat{g}^2_i}{\bigtriangledown^*_i+\varsigma R_l},
\end{aligned}
$$
where $\hat{g}_i^1$ and $\hat{g}_i^2$ are from (\ref{eq::gri}), and $\bigtriangledown^*_i$ is according to (\ref{eq::deltei}).

\end{mythm}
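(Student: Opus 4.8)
The plan is to reduce the profit-invariance condition (\ref{eq::robustofMSSE}) to a statement about \emph{which targets the misled followers attack}, and then control that target set analytically. First I would note that, since $U_i(x,y_i,\theta)=\sum_{k}y_i^k g_i(x,\theta,k)$ is linear in $y_i$ over the simplex $\Omega_i$, the $i$th follower's best response to $x_{\textit{\tiny SSE}}$ under observation $\theta'$ puts all of its mass $R_i$ on $\mathop{\text{argmax}}_k g_i(x_{\textit{\tiny SSE}},\theta',k)$. Its \emph{realized} profit, evaluated at the truth $\theta_0$, equals $R_i\hat{g}_i^1$ precisely when this support lies inside $\Gamma^1_i(x_{\textit{\tiny SSE}},\theta_0)$, because any off-top target $k$ has $g_i(x_{\textit{\tiny SSE}},\theta_0,k)\le\hat{g}_i^2<\hat{g}_i^1$ and would strictly lower the profit. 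Hence (\ref{eq::robustofMSSE}) holds on any $\delta_\theta$ on which, for each follower, the targets that are optimal under $\theta_0$ continue to dominate the off-top targets under $\theta$. (In that regime the followers' attacked-target structure is unchanged, so the leader's induced MSSE allocation coincides with $x_{\textit{\tiny SSE}}$, keeping the comparison in (\ref{eq::robustofMSSE}) self-consistent.)

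For the first claim I would fix, for each $i$, a reference top target $k_i^\star\in\Gamma^1_i(x_{\textit{\tiny SSE}},\theta_0)$ and, for every off-top target $k$, define the gap $h_{i,k}(\theta)=g_i(x_{\textit{\tiny SSE}},\theta,k_i^\star)-g_i(x_{\textit{\tiny SSE}},\theta,k)$. By the definitions of $\Gamma^1_i,\Gamma^2_i$ we have $h_{i,k}(\theta_0)\ge\hat{g}_i^1-\hat{g}_i^2>0$, and Assumption \ref{as::ass5} makes each $h_{i,k}$ continuous in $\theta$, so $h_{i,k}>0$ on an open neighborhood of $\theta_0$. Intersecting these finitely many neighborhoods over $i$ and $k$ yields an open set $N\ni\theta_0$ on which $k_i^\star$ strictly beats every off-top target, forcing $\mathop{\text{argmax}}_k g_i(x_{\textit{\tiny SSE}},\theta,k)\subseteq\Gamma^1_i(x_{\textit{\tiny SSE}},\theta_0)$. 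Taking $\delta_\theta$ to be a small ball about $\theta_0$ inside $N\cap\Theta$ gives a convex set on which (\ref{eq::robustofMSSE}) holds; since $\Theta$ is convex with nonempty interior (Assumption \ref{as::ass4}), this ball meets $\mathrm{int}(\Theta)$ and hence has nonempty interior. The degenerate case in which every target is most attractive for some $i$, where the profit is automatically $R_i\hat{g}_i^1$, is immediate.

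For the quantitative second claim I would exploit that, with $x=x_{\textit{\tiny SSE}}$ fixed, $g_i(x_{\textit{\tiny SSE}},\cdot,k)$ is the nonnegative combination $x^k U_i^c(\cdot,t_k)+(R_l-x^k)U_i^u(\cdot,t_k)$ of the convex, $\varsigma$-Lipschitz functions $U_i^c,U_i^u$, so it is itself convex and $R_l\varsigma$-Lipschitz in $\theta$. For a top target $k_1$ the first-order convexity inequality together with Cauchy--Schwarz gives the lower bound $g_i(x_{\textit{\tiny SSE}},\theta,k_1)\ge\hat{g}_i^1-\parallel\bigtriangledown_\theta g_i(x_{\textit{\tiny SSE}},\theta_0,k_1)\parallel\parallel\theta-\theta_0\parallel\ge\hat{g}_i^1-\bigtriangledown^*_i\parallel\theta-\theta_0\parallel$, while for an off-top target $k_2$ the Lipschitz estimate gives $g_i(x_{\textit{\tiny SSE}},\theta,k_2)\le\hat{g}_i^2+\varsigma R_l\parallel\theta-\theta_0\parallel$. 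The top target thus retains strict dominance as long as $\hat{g}_i^1-\bigtriangledown^*_i\parallel\theta-\theta_0\parallel>\hat{g}_i^2+\varsigma R_l\parallel\theta-\theta_0\parallel$, i.e. $\parallel\theta-\theta_0\parallel<(\hat{g}_i^1-\hat{g}_i^2)/(\bigtriangledown^*_i+\varsigma R_l)$; minimizing the right-hand side over $i\in\mathbf{P}$ produces exactly $\Delta\theta$, and $\delta_\theta=\{\theta\in\Theta:\parallel\theta-\theta_0\parallel<\Delta\theta\}$ is then the required ball by the reduction above.

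The main obstacle is the reduction rather than the estimates: I must justify that comparing against $x_{\textit{\tiny SSE}}$ in (\ref{eq::robustofMSSE}) is legitimate --- that within the robust region the attacked-target structure is frozen, so the leader's optimal MSSE allocation really is $x_{\textit{\tiny SSE}}$ --- and I must handle ties in $\Gamma^1_i(x_{\textit{\tiny SSE}},\theta_0)$ carefully, since robustness only needs \emph{some} top target to survive rather than a pre-specified one. A secondary delicate point is the deliberate asymmetry in the second part: using the sharper gradient norm $\bigtriangledown^*_i\le\varsigma R_l$ for the surviving top target but the coarser Lipschitz constant $\varsigma R_l$ for the off-top targets, which is what makes the stated $\Delta\theta$ both valid and non-trivial.
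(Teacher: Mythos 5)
Your proposal is correct and follows essentially the same route as the paper's proof: part 1) by continuity of the gaps $g_i(x_{\textit{\tiny SSE}},\cdot,\alpha_i)-g_i(x_{\textit{\tiny SSE}},\cdot,l)$ and intersection of finitely many convex neighborhoods, and part 2) by the first-order convexity bound for the surviving top target combined with the $\varsigma R_l$-Lipschitz bound for the off-top targets, yielding the same $\Delta\theta$. Your use of Cauchy--Schwarz to obtain the factor $\parallel\bigtriangledown_\theta g_i(x_{\textit{\tiny SSE}},\theta_0,k)\parallel$ is in fact the version consistent with the definition of $\bigtriangledown^*_i$ in (\ref{eq::deltei}), and your explicit flagging of the reduction to the frozen attacked-target structure matches what the paper's proof uses implicitly.
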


Conclusion 1) of Theorem \ref{th::robust-followers-1} shows that there is always a nonempty subset of the observation parameter such that the MSSE   is robust for the followers.
Additionally, 2) of Theorem \ref{th::robust-followers-1} gives a lower bound  for players  to ignore the misperception  if the game model satisfies the convexity and Lipschitz continuity.


\subsection{For DSSE}

In the deception situation, the leader deceives followers by manipulating followers’ observations, and followers are unaware of the deception. Obviously, the leader will not deceive if the implementation does not increase its own profit \cite{ijcai2017-516}, and it always needs to spend energy  for deception \cite{7577775}. Therefore,  the deceiver decides to cheat  when the rewards exceed the lower bound of the deceptive energy. Moreover, the ridiculous and outrageous  deception may cause followers' suspicions, which may   lead to the collapse of  the model \cite{ortmann2002costs}.

Let $(x_{\textit{\tiny DSSE}},\boldsymbol{y}_{\textit{\tiny DSSE}},\theta^*)$ be the DSSE   of $\mathcal{H}^2(\delta_{\theta})$ for the deceptive set $\delta_{\theta}$.
We are interested in the subset $\delta_{\theta}\subseteq\Theta$ such that
\begin{equation}\label{eq::robustDSSE}
U_l(x_{\textit{\tiny SSE}},\boldsymbol{y}_{\textit{\tiny SSE}})=U_l(x_{\textit{\tiny DSSE}},\boldsymbol{y}_{\textit{\tiny DSSE}}),\ \forall \theta\in\delta_\theta,
\end{equation}
which is regarded as the robustness set of DSSE. The following theorem reveals the robustness of DSSE, whose  proof  can be found in Appendix \ref{ap::th::robust-leader-1}.

\begin{mythm}\label{th::robust-leader-1}
Under Assumptions  \ref{as::ass4}-\ref{as::ass6},

\noindent 1) there exists a convex subset $\delta_{\theta}\in\Theta$ satisfying (\ref{eq::robustDSSE}) with nonempty $\emph{int}(\delta_{\theta})$;

\noindent 2) moreover, if $U_i^c(\theta,t_k)$ and $U_i^u(\theta,t_k)$ are convex and $\varsigma$-Lipschitz continuous  in $\theta\in \Theta$ for all $k=1,\dots K, i\in\mathbf{P}$, there exists $\delta_\theta=\left\{\theta\in\Theta:\parallel\theta-\theta_0\parallel<\Delta\theta\right\}$ satisfying (\ref{eq::robustDSSE}) such that
$$
\begin{aligned}
\Delta\theta=\min\limits_{i\in\mathbf{P}}\frac{\hat{g}_i^1- \hat{g}^2_i}{2\varsigma R_l},
\end{aligned}
$$
where $\hat{g}_i^1$ and $\hat{g}_i^2$ are from (\ref{eq::gri}), and $\bigtriangledown^*_i$ is according to (\ref{eq::deltei}).
\end{mythm}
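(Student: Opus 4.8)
The plan is to use the most-attractive-target condition of Assumption \ref{as::ass6} to show that, in a neighbourhood of $\theta_0$, deception cannot move the followers' attacks, so the leader's optimal deception cannot beat the plain SSE utility. First I would pin down the SSE. By Assumption \ref{as::ass6} there is a target $t_{k_0}$ with $U_i^c(\theta_0,t_{k_0})\geqslant U_i^u(\theta_0,t_l)$ for all $i\in\mathbf{P}$ and $l\neq k_0$. Writing $g_i(x,\theta_0,k)=R_lU_i^u(\theta_0,t_k)+x^k\big(U_i^c(\theta_0,t_k)-U_i^u(\theta_0,t_k)\big)$, which is decreasing in $x^k$ by Assumption \ref{as::ass3}, a comparison of the extreme values $R_lU_i^c(\theta_0,t_{k_0})$ and $R_lU_i^u(\theta_0,t_l)$ shows $g_i(x,\theta_0,k_0)\geqslant g_i(x,\theta_0,l)$ for \emph{every} leader strategy $x$, i.e. every follower attacks $t_{k_0}$ regardless of $x$. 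Substituting the resulting best responses into $U_l$ leaves $\big(\sum_{i}R_i\big)\big(x^{k_0}U_l^c(t_{k_0})+(R_l-x^{k_0})U_l^u(t_{k_0})\big)$, which by Assumption \ref{as::ass1} is maximised at full coverage $x_{\textit{\tiny SSE}}^{k_0}=R_l$. Hence $\hat{g}_i^1=R_lU_i^c(\theta_0,t_{k_0})$ and $\hat{g}_i^2=R_l\max_{l\neq k_0}U_i^u(\theta_0,t_l)$.

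For Conclusion 1) I would note that the gap $\hat{g}_i^1>\hat{g}_i^2$ is strict by the definition of $\Gamma^2_i$, i.e. $U_i^c(\theta_0,t_{k_0})>U_i^u(\theta_0,t_l)$ for all $l\neq k_0$. Continuity of $U_i^c,U_i^u$ in $\theta$ (from the differentiability in Assumption \ref{as::ass5}) then gives, for each $i$, an open ball about $\theta_0$ on which these strict inequalities survive; intersecting the finitely many balls with $\Theta$ and invoking convexity of $\Theta$ (Assumption \ref{as::ass4}) yields a convex $\delta_\theta\subseteq\Theta$ with nonempty interior. On $\delta_\theta$ the extreme-value comparison again forces every follower to attack $t_{k_0}$ for every $x$, so $\boldsymbol{\textbf{BR}}(x,\theta')=\boldsymbol{\textbf{BR}}(x,\theta_0)$ for all $x$ and all $\theta'\in\delta_\theta$. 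The leader's deception problem $\max_{\theta'\in\delta_\theta}\max_{x}U_l\big(x,\boldsymbol{\textbf{BR}}(x,\theta')\big)$ then coincides term-by-term with the SSE problem, giving $U_l(x_{\textit{\tiny DSSE}},\boldsymbol{y}_{\textit{\tiny DSSE}})=U_l(x_{\textit{\tiny SSE}},\boldsymbol{y}_{\textit{\tiny SSE}})$, which is (\ref{eq::robustDSSE}).

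For Conclusion 2) I would quantify the radius under the added convexity and $\varsigma$-Lipschitz hypotheses. The binding case of the invariance is full coverage $x^{k_0}=R_l$, where it reduces to $U_i^c(\theta',t_{k_0})\geqslant U_i^u(\theta',t_l)$. Applying the Lipschitz estimate to both terms yields $U_i^c(\theta',t_{k_0})-U_i^u(\theta',t_l)\geqslant\big(U_i^c(\theta_0,t_{k_0})-U_i^u(\theta_0,t_l)\big)-2\varsigma\|\theta'-\theta_0\|$, so the invariance holds whenever $\|\theta'-\theta_0\|\leqslant\big(U_i^c(\theta_0,t_{k_0})-U_i^u(\theta_0,t_l)\big)/(2\varsigma)$. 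Substituting $\hat{g}_i^1=R_lU_i^c(\theta_0,t_{k_0})$ and $\hat{g}_i^2=R_l\max_{l\neq k_0}U_i^u(\theta_0,t_l)$ rewrites this threshold as $(\hat{g}_i^1-\hat{g}_i^2)/(2\varsigma R_l)$, and taking the minimum over $i\in\mathbf{P}$ produces $\Delta\theta$.

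The main obstacle, and the reason the denominator is $2\varsigma R_l$ rather than the sharper $\bigtriangledown^*_i+\varsigma R_l$ of Theorem \ref{th::robust-followers-1}, is that here the leader may redeploy $x$ while deceiving. Robustness must therefore hold \emph{uniformly in} $x$, which forces the worst-case full-coverage comparison and a two-sided Lipschitz bound on both $t_{k_0}$ and the competing target $t_l$, instead of the one-sided convexity/gradient bound available when $x$ is frozen at $x_{\textit{\tiny SSE}}$. Recognising that Assumption \ref{as::ass6} is precisely the hypothesis making ``the attacked target is independent of $x$'' valid is the crux of the argument; the rest is continuity for part 1) and the routine Lipschitz bookkeeping for part 2).
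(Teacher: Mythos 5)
Your proof is correct and reaches the stated $\Delta\theta$, but it takes a genuinely different route from the paper's at the key step. The paper's proof establishes the separation $g_i(x_{\textit{\tiny SSE}},\theta,\alpha_i)-g_i(x_{\textit{\tiny SSE}},\theta,l)\geqslant \hat{g}_i^1-\hat{g}_i^2-2\varsigma R_l\|\theta-\theta_0\|$ only at the fixed strategy $x_{\textit{\tiny SSE}}$, and then extends the conclusion to an arbitrary $x_{\textit{\tiny DSSE}}\in\Omega_l$ by appealing to Assumption \ref{as::ass6} together with a citation (and part 1) is likewise dispatched by invoking Theorem \ref{th::robust-followers-1} plus that citation). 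You instead make the uniformity in $x$ explicit and self-contained: using the monotonicity of $g_i$ in $x^k$ from Assumption \ref{as::ass3}, you reduce the worst case over all leader strategies to the extreme-value comparison $R_lU_i^c(\theta',t_{k_0})\geqslant R_lU_i^u(\theta',t_l)$, which Assumption \ref{as::ass6} guarantees at $\theta_0$ and the two-sided Lipschitz estimate preserves for $\|\theta'-\theta_0\|<\Delta\theta$; you also derive, rather than assume, that $x_{\textit{\tiny SSE}}$ is full coverage of $t_{k_0}$, which is exactly what makes your threshold coincide numerically with the paper's $(\hat{g}_i^1-\hat{g}_i^2)/(2\varsigma R_l)$. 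Your version buys a cleaner justification of the step the paper outsources, and your closing observation correctly explains why the denominator is $2\varsigma R_l$ here versus $\bigtriangledown_i^*+\varsigma R_l$ in Theorem \ref{th::robust-followers-1}. Two shared caveats, neither fatal and both present in the paper's own argument: Assumption \ref{as::ass6} is stated with a non-strict inequality, so the strict gap $\hat{g}_i^1>\hat{g}_i^2$ (equivalently, $\Gamma_i^1(x_{\textit{\tiny SSE}},\theta_0)=\{k_0\}$) is being read into it; and the convexity hypothesis in part 2) is not actually used by either argument.
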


 \begin{figure*}
\centering
\subfigure[$n=1$]{
\label{fi::Dn1}
 \includegraphics[width=4.8cm]{./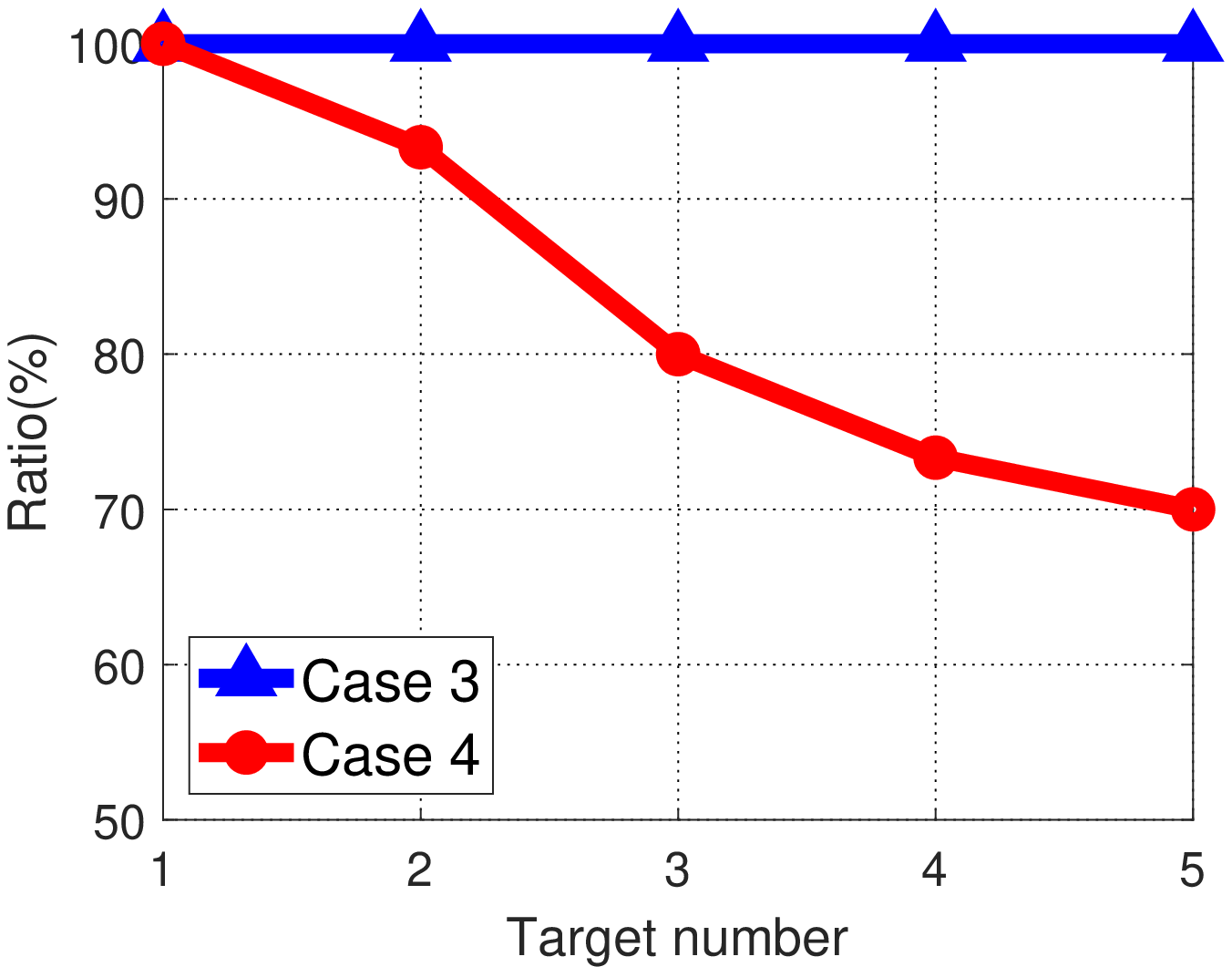}}
 \subfigure[$n=3$]
 { \label{fi::Dn3}
  \includegraphics[width=4.8cm]{./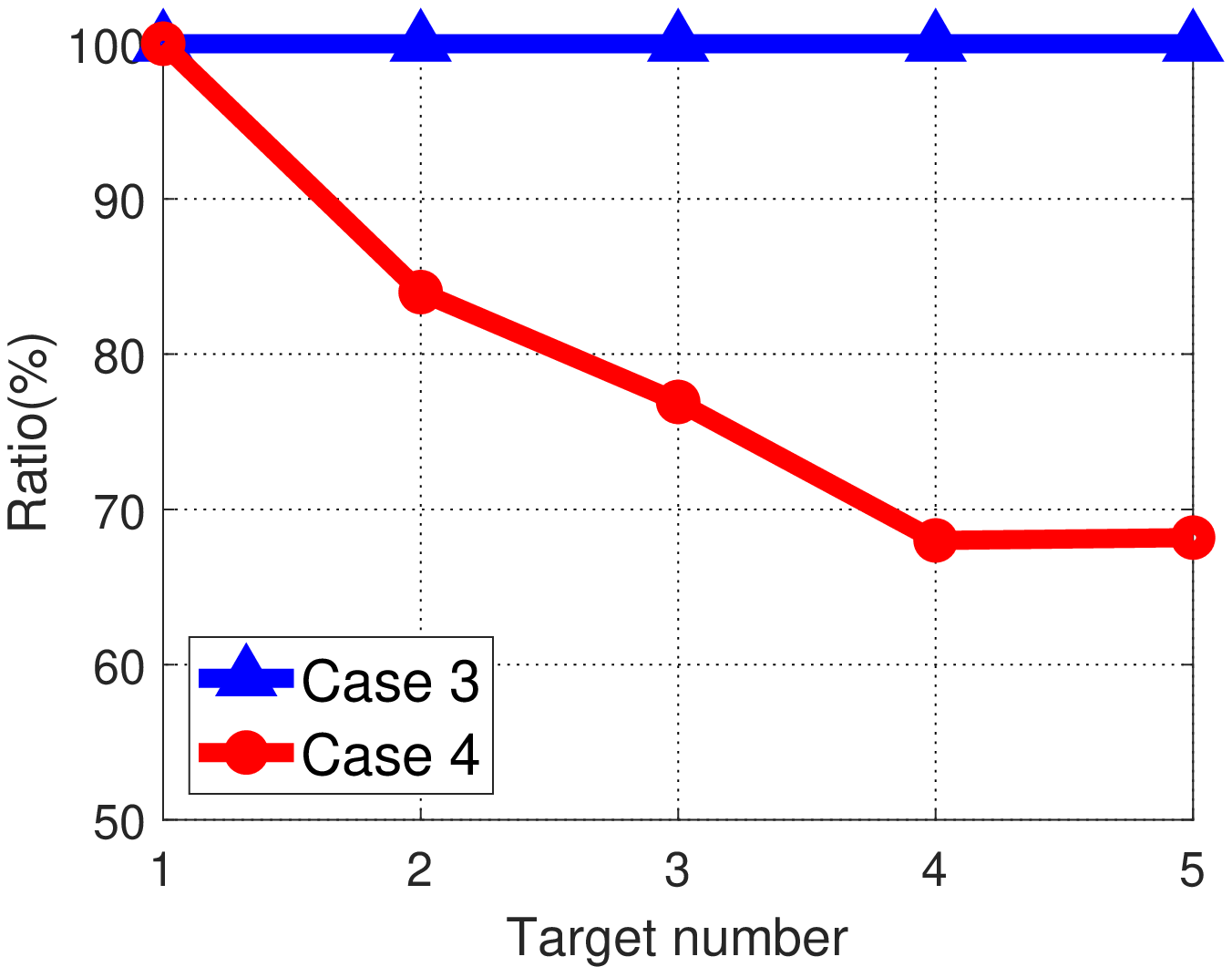}}
   \subfigure[$n=5$]
 { \label{fi::Dn5}
  \includegraphics[width=4.8cm]{./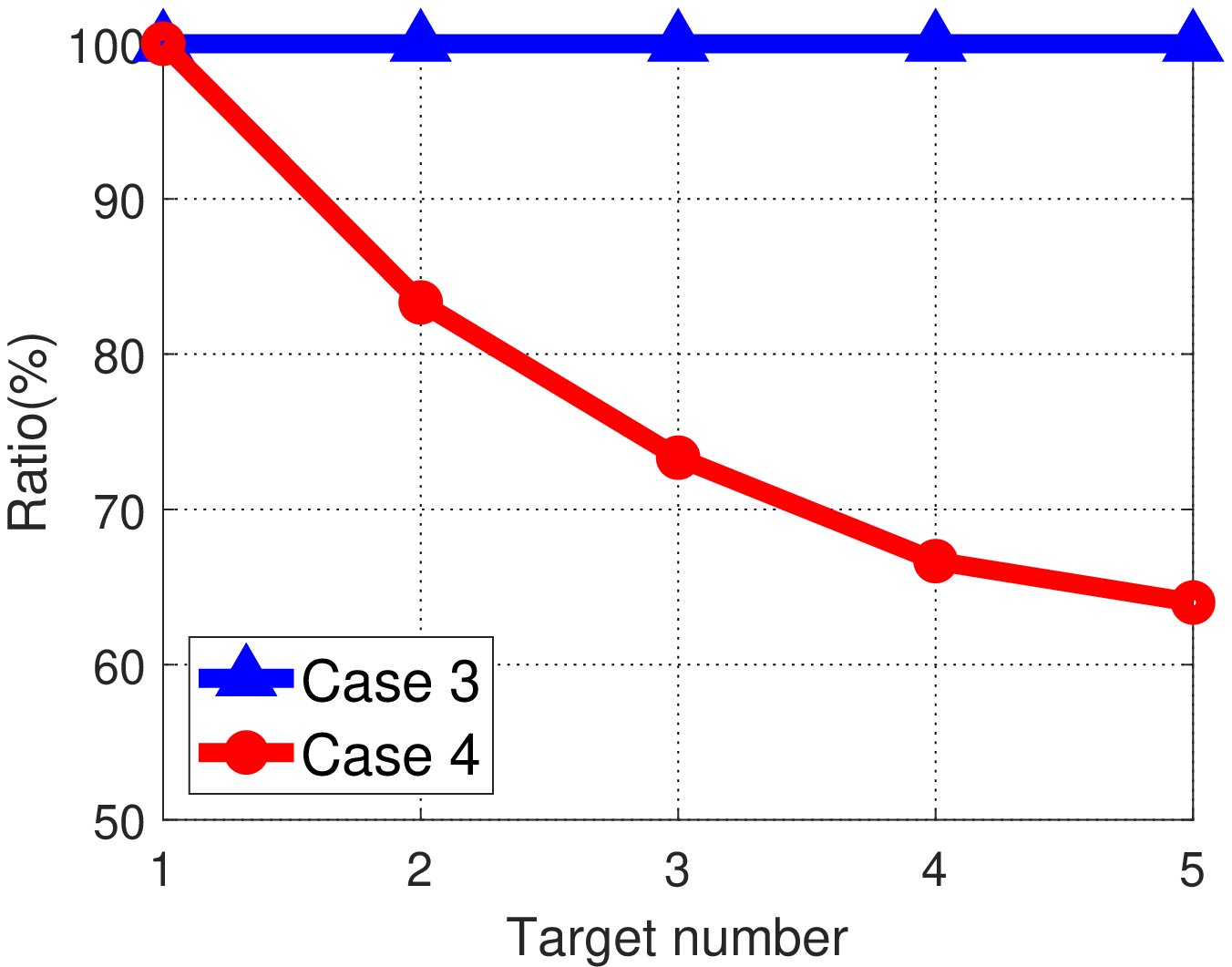}}
       \subfigure[$K=1$]
 { \label{fi::DK1}
  \includegraphics[width=4.8cm]{./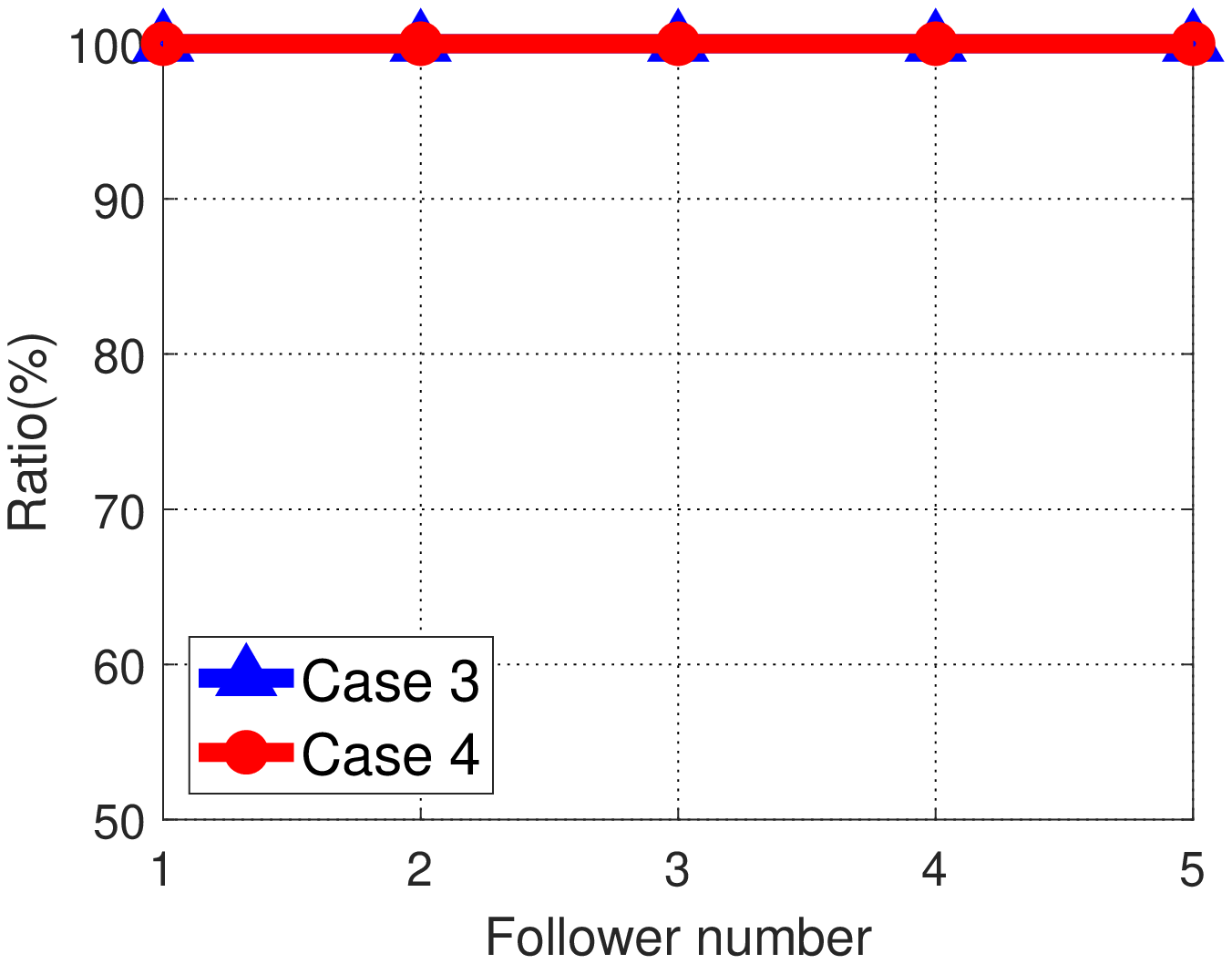}}
         \subfigure[$K=3$]
 { \label{fi::DK3}
  \includegraphics[width=4.8cm]{./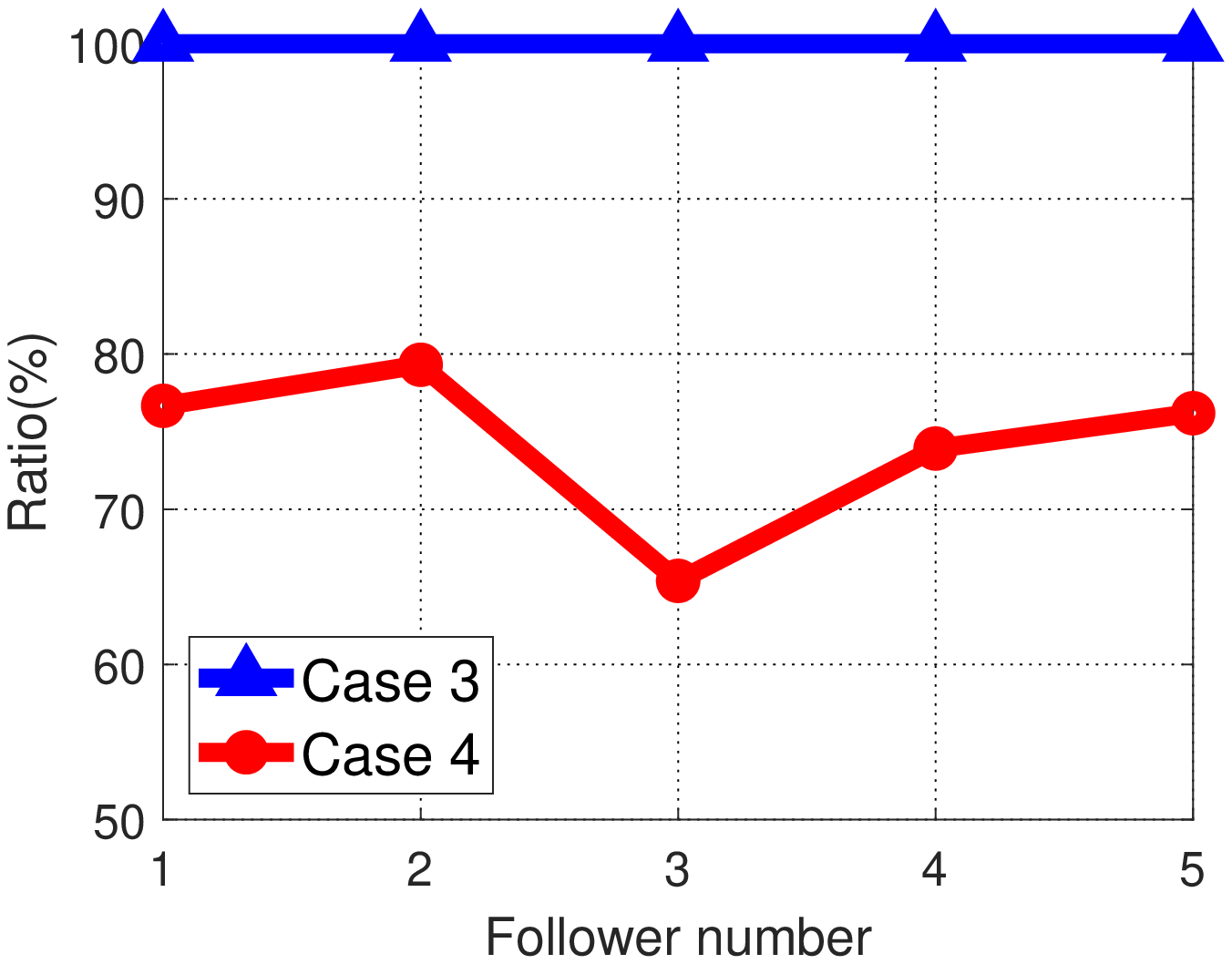}}
     \subfigure[$K=5$]
 { \label{fi::DK5}
  \includegraphics[width=4.8cm]{./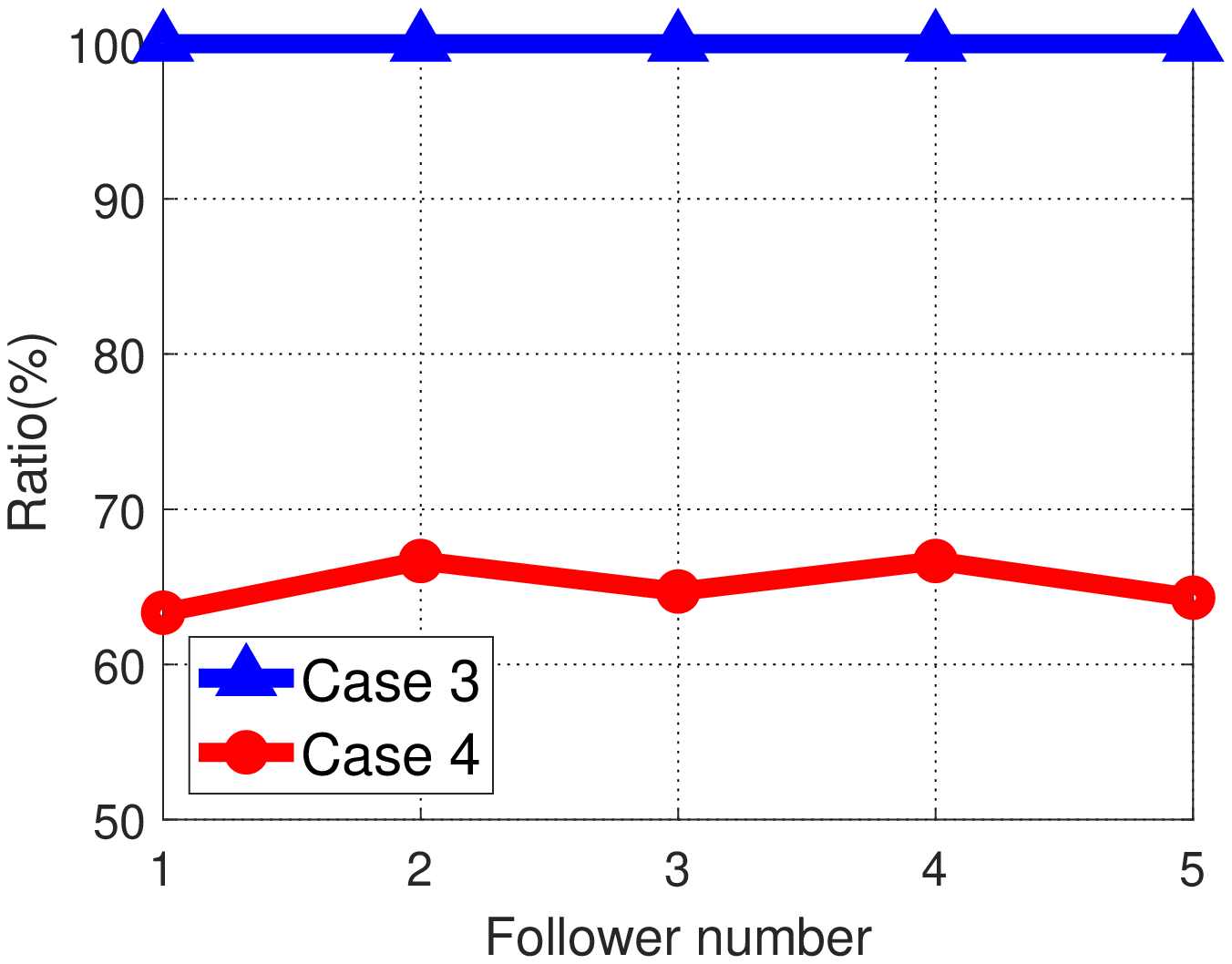}}

 \caption{Ratios of the two cases in deception.
The $x$-axis is for the target number in (a)-(c), and for the follower number in (d)-(f). $y$-axis is for ratios of \textit{Case 3} and \textit{Case 4} in 30 instances, which are depicted in blue and red, respectively.}
 \label{fig:subfig_DSSE}
 \end{figure*}


Conclusion 1) of Theorem \ref{th::robust-leader-1}  shows that there is always a nonempty subset of the observation parameter such that the leader does not  implement deception, since  tiny deception does not bring the leader more benefits.
Moreover, 2) of Theorem  \ref{th::robust-leader-1}  gives a lower bound if utility functions satisfy the convexity and Lipschitz continuity. From the perspective of energy, if the leader wants more benefits from deception,   it needs to   pay    energy  no less than the lower bound $\Delta\theta$. Therefore, it can be regarded as a tradeoff for the leader.

\begin{remark}
The robustness of the MSSE indicates that followers can ignore the misperception, while the robustness of the DSSE means the leader does not  implement deception in this region. On the other hand, the proofs of robustness under misperception  and deception are  different.  We consider Assumption \ref{as::ass6} in the robustness analysis of deception since the deceptive strategy is affected by followers' utility functions in different targets. Moreover, the proof for the misperception focuses on followers' profits under the fixed leader's strategy, while the proof to handle the deception counts in   the influence of the leader's deceive strategy on followers' actions.
\end{remark}

\section{Experiment}

In this section, we provide  numerical simulations for the stability and  robustness of MSSE and DSSE.

\subsection{Stable Condition for MSSE}

1) Inspired by the single-leader-single-follower game in  infrastructures protection problems\cite{korzhyk2011stackelberg}, with  misperception $\theta'$, we verify Theorem \ref{th::h1} by a numerical simulation. 
We consider models for  $K=10,15,\dots,50$ when $n=5,10,15$,  and other models for $n=10,15,\dots,50$ when $K=5,10,15$, respectively.  In each model, we randomly generate $100$ instances as follows. For the leader, $U_l^c(t_k)$ and $U_l^u(t_k)$ are uniformly  generated in the ranges $[5,10]$ and $[0,5]$, while for the $i$th follower,  $U_i^c(\theta',t_k)$ and $U_i^u(\theta',t_k)$ are uniformly  generated in the ranges $[0,5]$ and $[5,10]$. $R_l$ and $R_i$ are uniformly  generated in the range $[1,5]$. Moreover,  we compute MSSE    of $\mathcal{H}^2(\theta')$ by the extension of the mixed-integer linear program \cite{korzhyk2011stackelberg}:
\begin{equation}\label{eq::MILP-MSSE}
\begin{aligned}
\max\limits_{x,\boldsymbol{y},a}  &\sum\limits_{k=1}^K (\sum\limits_{i=1}^n  y^k_iR_i)(x^k U_l^c(t_k)\!+\!(R_l-x^k) U_l^u(t_k)),\\
s.t. \
&0\leqslant  a_i-R_ig_i(x,\theta',k) \leqslant (1-y_i^k)M,
\\
&\sum\limits_{k=1}^K x^k=R_{l},x^k\geqslant 0,
 \sum\limits_{k=1}^K y_i^k=1,y_i^k\in\{0,1\},
\\
& a=[a_1,\dots,a_n]^T\in \mathbb{R}^n,
\forall i\in\mathbf{P},k\!=\!1,\dots\!,K,
\end{aligned}
\end{equation}
where $M=10^9$ is a sufficiently large number.
Take the MATLAB toolbox YALMIP [48] to solve (\ref{eq::MILP-MSSE}) with the  terminal condition $\frac{\overline{U}^q-{\underline{U}}^q}{|\underline{U}^q|}<10^{-6}$, where $\overline{U}^q$ and $\underline{U}^q$ are the upper and lower bounds of the objective function in $q$th iteration. 
Set $(x_{\textit{\tiny MSSE}},\boldsymbol{y}_{\textit{\tiny MSSE}})$ as the MSSE strategy of  each  instance.



\textit{Case 1}:  $(x_{\textit{\tiny MSSE}},\boldsymbol{y}_{\textit{\tiny MSSE}})$ is a HNE when $\text{SOL}(\boldsymbol{y}_{\textit{\tiny MSSE}},\theta')$ is nonempty.

 \textit{Case 2}:  $\text{SOL}(\boldsymbol{y}_{\textit{\tiny MSSE}},\theta')$ is nonempty when $(x_{\textit{\tiny MSSE}},\boldsymbol{y}_{\textit{\tiny MSSE}})$ is a HNE.

In Fig.  \ref{fig:subfig}, the ratio of \textit{Case 1} is always $100\%$, which verifies Theorem  \ref{th::h1}. Also, the ratio of \textit{Case 2} is always larger than $85\%$. Therefore,   when $(x_{\textit{\tiny MSSE}},\boldsymbol{y}_{\textit{\tiny MSSE}})$ is a HNE,   the stable condition of Theorem  \ref{th::h1} can cover most instances.

 \begin{figure*}
\centering

   \subfigure[$U^c_l(t_1)=3,U^c_l(t_2)=6.$]
 { \label{fi::deception_compare_not}
  \includegraphics[width=4.8cm]{./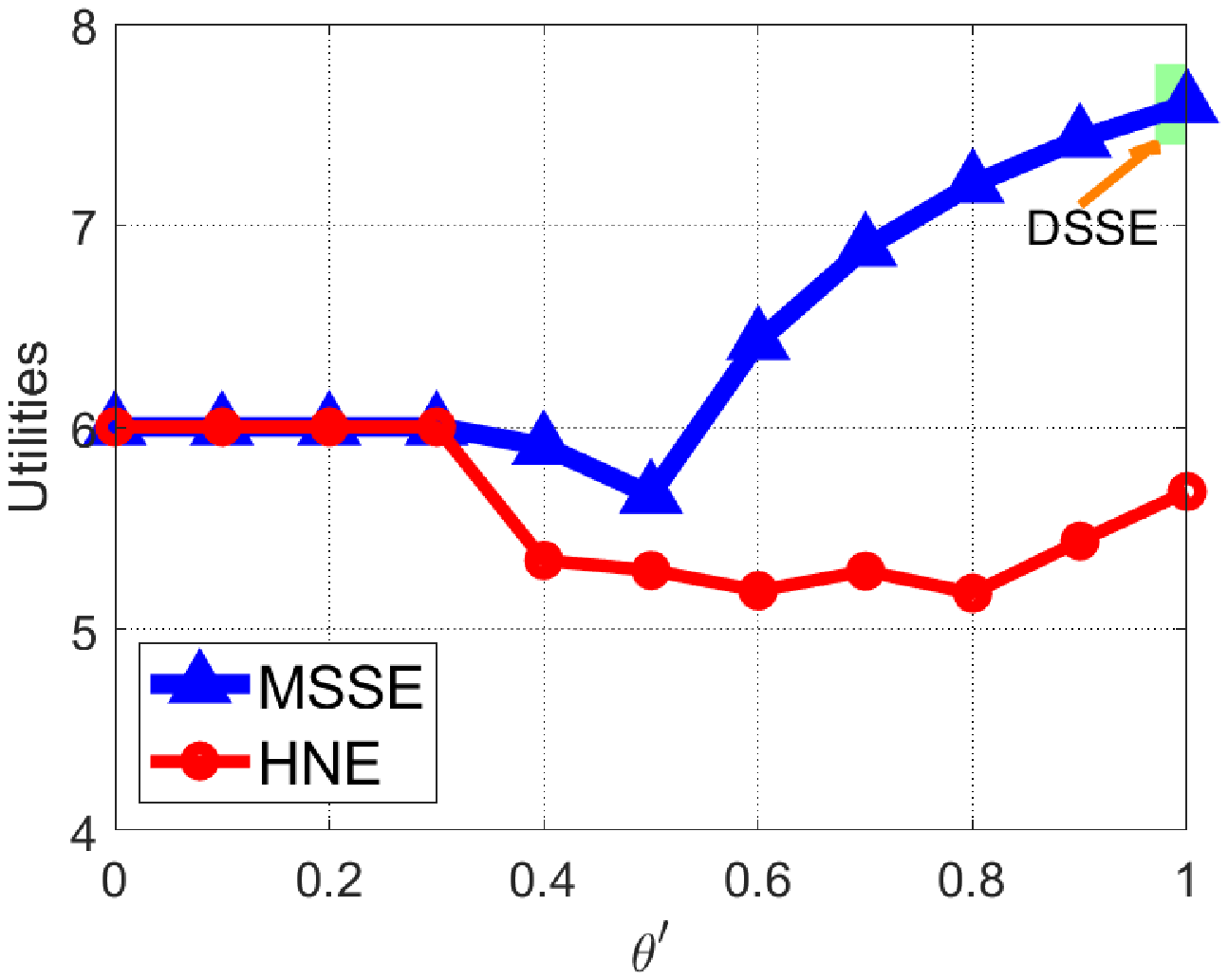}}
\subfigure[$U^c_l(t_1)=3.2,U^c_l(t_2)=2.$]{
\label{fi::deception_compare_more}
 \includegraphics[width=4.8cm]{./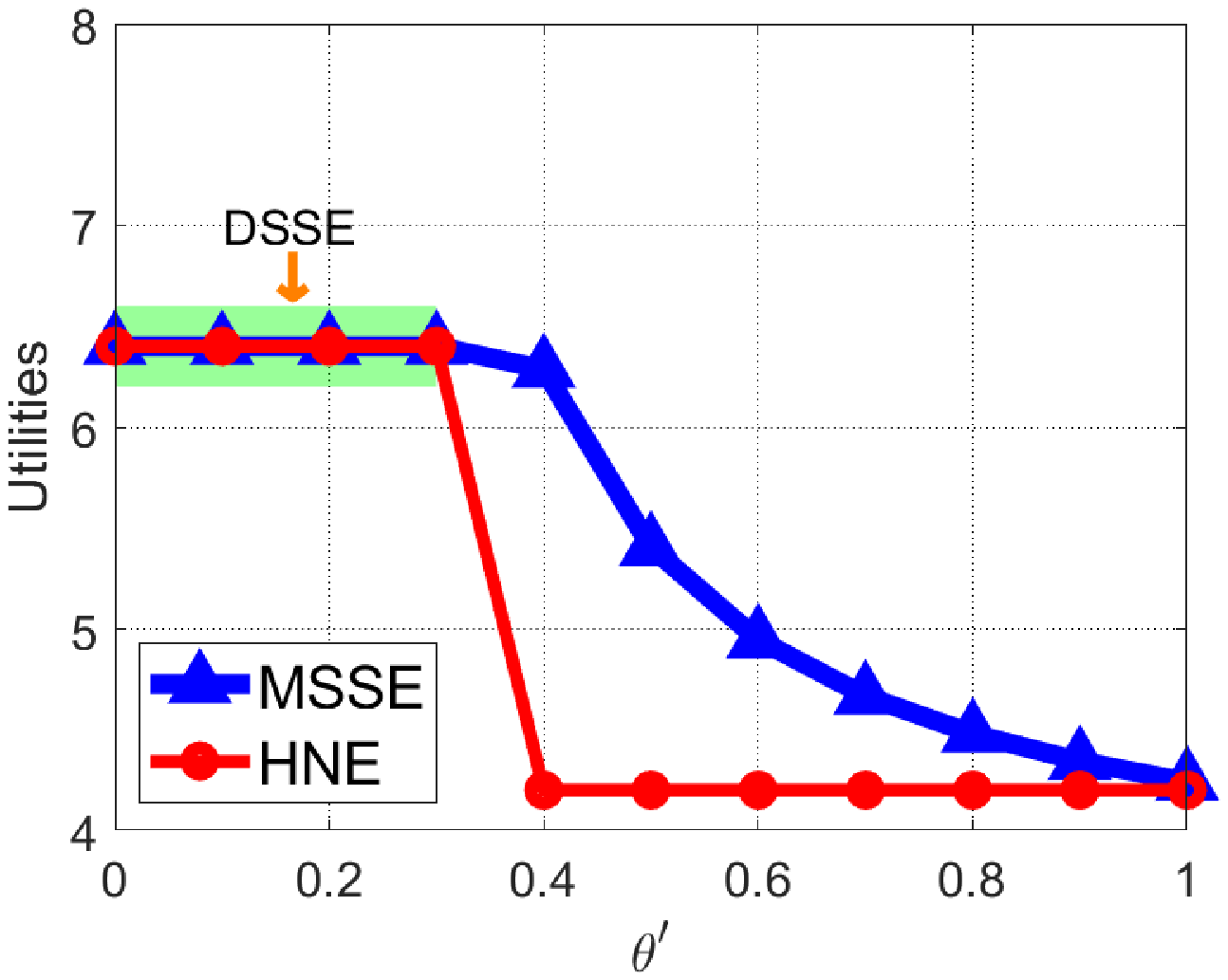}}
 \caption{To find cognitively stable DSSE with different environment settings. 
The blue line describes for the leader's utility $U_L$ of MSSE with different $\theta'\in \Theta$, while the red line describes the leader's utility $U_L$ of HNE with different $\theta'\in \Theta$. Besides, 
 the light green region exhibits $U_L$ of DSSE since the leader aims to maximize its own utility among all possible $\theta'$. 
}
 \label{fig:deception_compare}
 \end{figure*}

2) Consider a single-leader-two-followers model in MTD problems \cite{feng2017signaling}. Take $n=K=2$, $R_l=R_i=1$, $\Theta=[0,1]$,  $
U^c_2(\theta',t_1)\!=\!0.041((\theta'-0.5)^2-10+a)^2+4.305,$
$U^c_2(\theta',t_2)\!=\!0,$
$U^u_2(\theta',t_1)=-0.05((\theta'-0.5)^2+10-a)^2+5.1532$, and 
 $U^u_2(\theta',t_2)=-0.004((\theta'-0.5)^2-10+a)^2+0.82$, where $a\in \mathbb{R}$ is a parameter in attackers' migration cost. 
Set $a=0.2$, $0.3$, and $0.4$ in Fig. \ref{fi::mis_compare_not}, \ref{fi::mis_compare_1}, and \ref{fi::mis_compare_more}, respectively. 
In Fig. \ref{fi::mis_compare_not}, MSSE is always not HNE, and no cognitively stable MSSE can be found by Theorem \ref{th::h1}. 
Further, in Fig. \ref{fi::mis_compare_1}, there is only one cognitively stable MSSE when $\theta'=0.5$. 
In this case, it is usually hard for the player to reach the cognitively stable MSSE in MTD problems \cite{feng2017signaling}, since the probability for finding such a singleton is zero. However, by verifying the stable condition in Theorem \ref{th::h1}, we obtain a stable MSSE precisely and conveniently.  Fig. \ref{fi::mis_compare_more} shows a similar result, and we can improve the efficiency to find a cognitively stable MSSE  once the stable condition in Theorem \ref{th::h1} is verified.

\subsection{Stable Condition for DSSE}
1) Similar to security problems in deployed systems \cite{ijcai2019-75}, we verify Theorem \ref{th::h2} by a numerical simulation.  We consider models for  $K=1,\dots,5$ when $n=1,3,5$,  and other models for $n=1,\dots,5$ when $K=1,3,5$, respectively.  
 In each model, we randomly generate $30$ instances as follows. $U_l^u(t_k),U_i^c(t_k),R_i$, and $R_l$ are uniformly  generated in the range $[5,10]$, and $U_l^c(t_k),U_i^u(t_k)$ are uniformly  generated in the range $[0,5]$. Take $\Theta=[0,5]^{nk}\subset \mathbf{R}^{nK}$. Concretely, for any $\theta'\in\Theta$, $\theta'=[\theta'_{1,1},\dots,\theta'_{1,K},\dots,\theta'_{n,1},\dots,\theta'_{n,K}]^T$, where $\theta'_{i,k}\in [0,5]$. For the followers under the observation $\theta'$, set $U_i^c(\theta',t_k)=U_i^c(t_k)+\theta'_{i,k}$ and $U_i^u(\theta',t_k)=U_i^u(t_k)+\theta'_{i,k}$.   We compute DSSE of $\mathcal{H}^2(\Theta)$, similar to (\ref{eq::MILP-MSSE}). 
Take $(x_{\textit{\tiny DSSE}},\boldsymbol{y}_{\textit{\tiny DSSE}},\theta^*)$ as the DSSE strategy of  each  instance.

\textit{Case 3}:   $(x_{\textit{\tiny DSSE}},\boldsymbol{y}_{\textit{\tiny DSSE}})$ is a HNE when the leader is able to trick followers into attacking target $t_{K_{max}}$, where $K_{max}\in \mathop{\text{argmax}}\limits_{k=1,\dots,K} U_l^c(t_k)$.

\textit{Case 4}: the leader is able to trick followers into attacking target $t_{K_{max}}$ when the DSSE strategy is a HNE.

\noindent The above two cases are represented in   blue lines and  red lines.

In Fig. \ref{fig:subfig_DSSE}, the ratio of \textit{Case 3} is always $100\%$, which verifies Theorem  \ref{th::h2}. Also, the ratio of \textit{Case 4} is always larger than $60\%$. Hence, when the DSSE  is  a HNE,  the stable condition of Theorem  \ref{th::h2} can cover many instances.

2) Consider a single-leader-two-followers model in infrastructures protection problems \cite{ijcai2017-516}. Take $n\!=\!\!K\!=\!2$, $R_l\!\!=\!R_i\!=\!\!1$, $\Theta\!=\![0\!,1\!]$, $U^u_l(t_1)\!=\!2,U^u_l(t_2)\!=\!1,U_1^c(\theta',\!t_1)\!=\!3,$ $U_1^c(\theta',\!t_2)\!=\!1$, $U^u_1(\theta',t_1)=4,U^u_1(\theta',t_2)=2$,
$U^c_2(\theta'\!,t_1)=-2.52\theta'+1.428$, $U^c_2(\theta'\!,t_2)=0$,  $U^u_2(\theta',t_1)=-0.4\theta'+2$, and  $U^u_2(\theta',t_2)=-0.16\theta'+0.504$. Also, take $U^c_l(t_1)=3,U^c_l(t_2)=6$ in Fig. \ref{fi::deception_compare_not} and $U^c_l(t_1)=3.2,U^c_l(t_2)=2$ in Fig. \ref{fi::deception_compare_more}, where $U^c_l(t_k)$, the reward for protecting $t_k$, is different in situations with different leader's forms. 
In the environment setting of Fig. \ref{fi::deception_compare_not}, no DSSE is HNE. Neither can the previous work \cite{ijcai2017-516} find the cognitively stable DSSE,  nor can our proposed condition in Theorem \ref{th::h2}  be verified. However, the phenomenon changes in Fig. \ref{fi::deception_compare_more}, because we can find a cognitively stable DSSE, \textit{i.e.}, HNE, once the stable condition in Theorem \ref{th::h2} is satisfied. Thus, our proposed framework and conclusion in Theorem \ref{th::h2} actually provide a way to tell the differences among various environment settings when DSSE is HNE.

\subsection{Robustness of MSSE: in Counterterrorism Problems}

Inspired by the counterterrorism problems with multiple attack forms \cite{zhang2019modeling}, we consider that the American government  wants to defend against the criminals with different attack forms, including armed assaults, bombing/explosion, assassinations, facility/infrastructure attacks, hijackings, and  hostage taking.  Regard the government as a leader and the criminals with $6$ attack forms  as followers. Besides, `New York City,' `Los Angeles', 'SanFrancisco', `Washington, D.C.', and `Chicago' are ranked as the top five risky urban areas in America. Then we regard the $5$ cities as  targets such as the first target for `New York City'. Suppose that all players have \$$1$ millinon budgets, \textit{i.e.}, $R_l=1$ and $R_i=1$ for $i\in\mathbf{P}$. Take $U_l^c(t_k),U_l^u(t_k),U_i^c(t_k),U_i^u(t_k)\in[0,0.7]$ as utilities under the true observation. Also,   followers have a success probability of 0.2, considering that the United States can interdict some attack plots \cite{zhang2019modeling}. Therefore, followers have a false observation of the success rate as $p_{i,k}(\theta')=d_{i,k}\theta'+0.2$, where $\theta'\in \Theta=[-0.2,0.2]$, $\theta_0=0$ and $d_{i,k}$ is generated in the range  $[-1,1]$. Hence, $U_i^c(\theta',t_k)=p_{i,k}(\theta')U_i^c(t_k) $ and $U_i^u(\theta',t_k)=p_{i,k}(\theta')U_{i}^u(t_k)$ are the utilities perceived by the $i$th follower.

\begin{figure*}
\centering
\subfigure[Utilities of the $1$st follower]{
\label{fi::terr1}
 \includegraphics[width=4.8cm]{./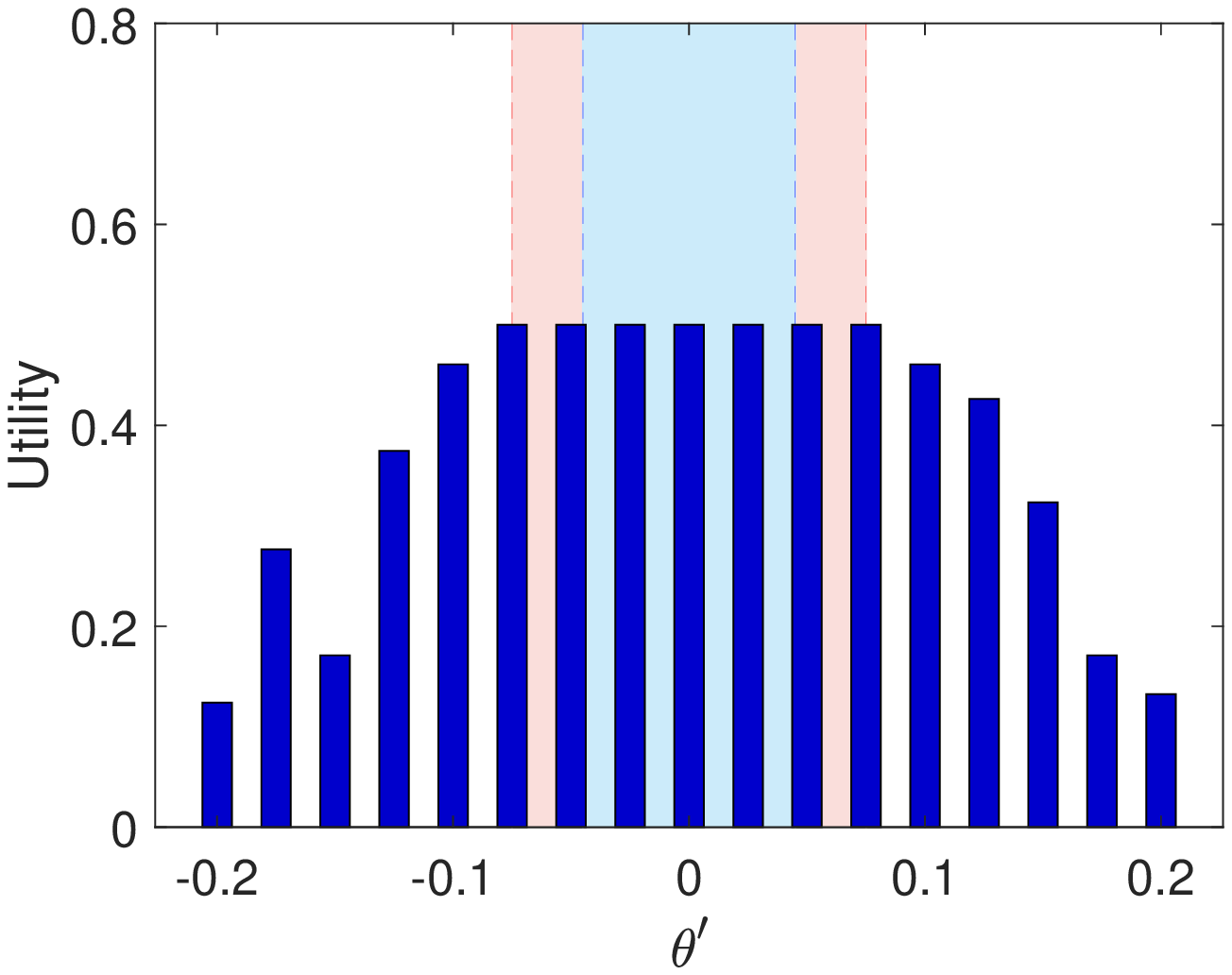}}
 \subfigure[Utilities of the $2$nd follower]
 { \label{fi::terr2}
  \includegraphics[width=4.8cm]{./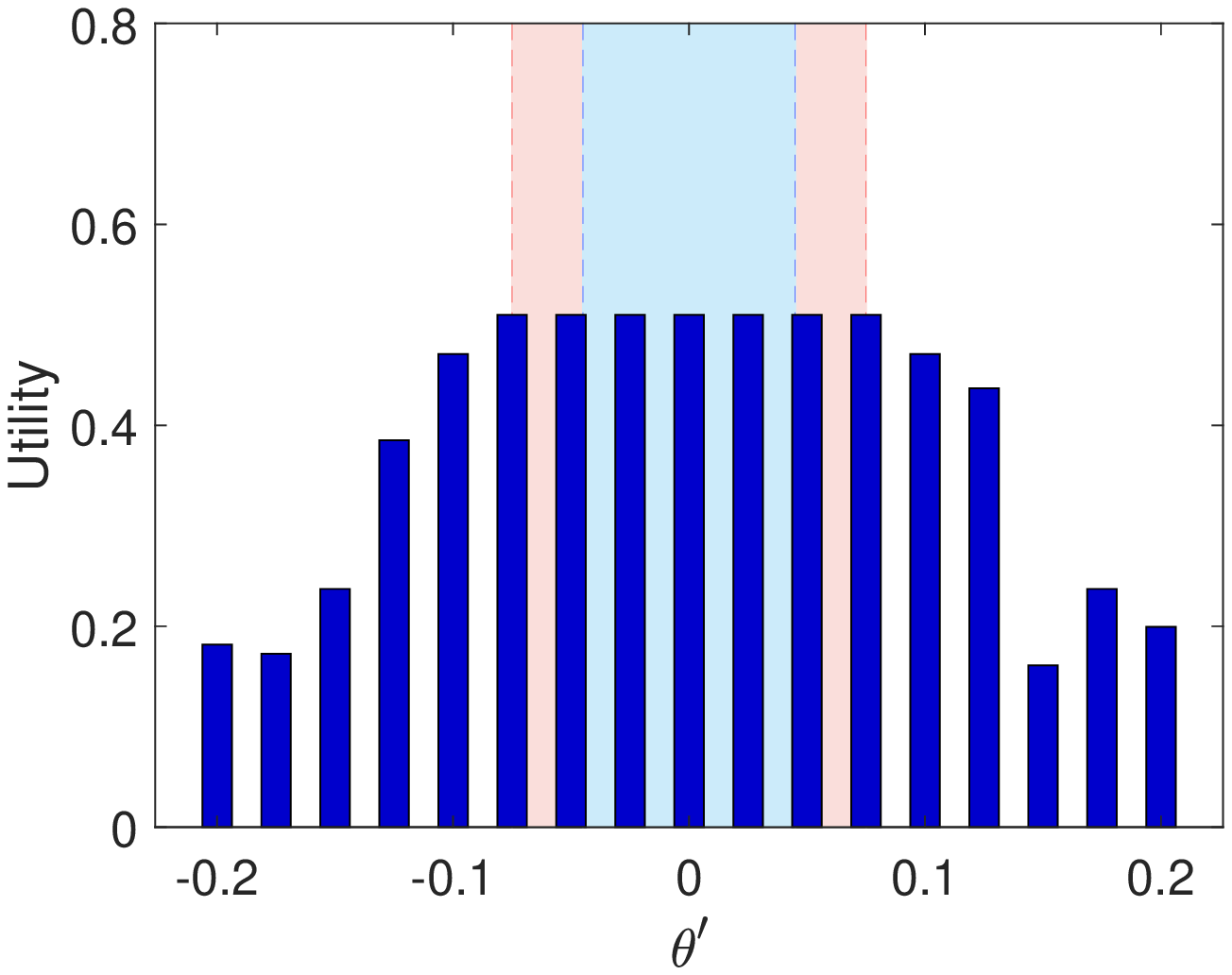}}
   \subfigure[Utilities of the $3$rd follower]
 { \label{fi::terr3}
  \includegraphics[width=4.8cm]{./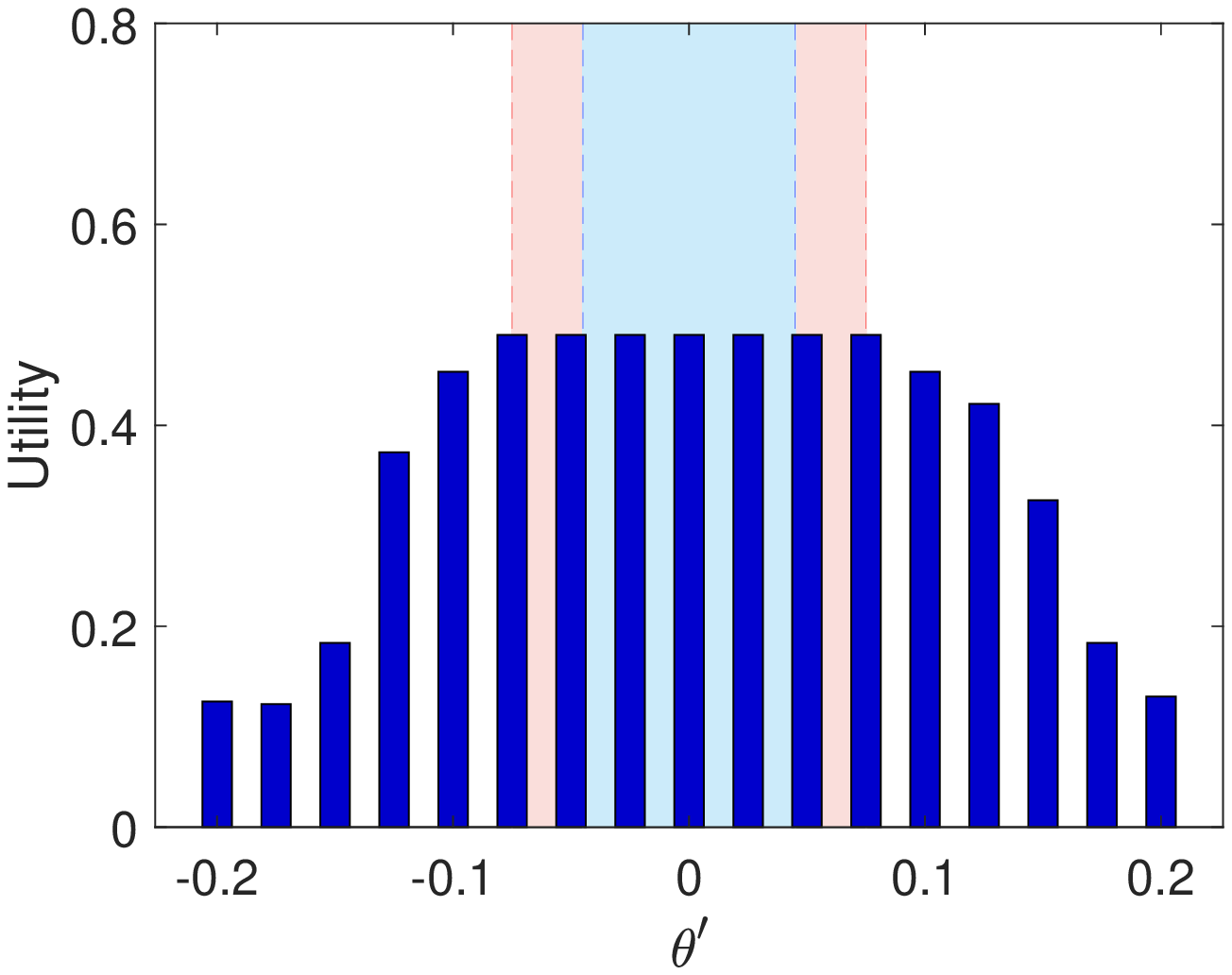}}
  \subfigure[Utilities of the $4$th follower]{
\label{fi::terr4}
 \includegraphics[width=4.8cm]{./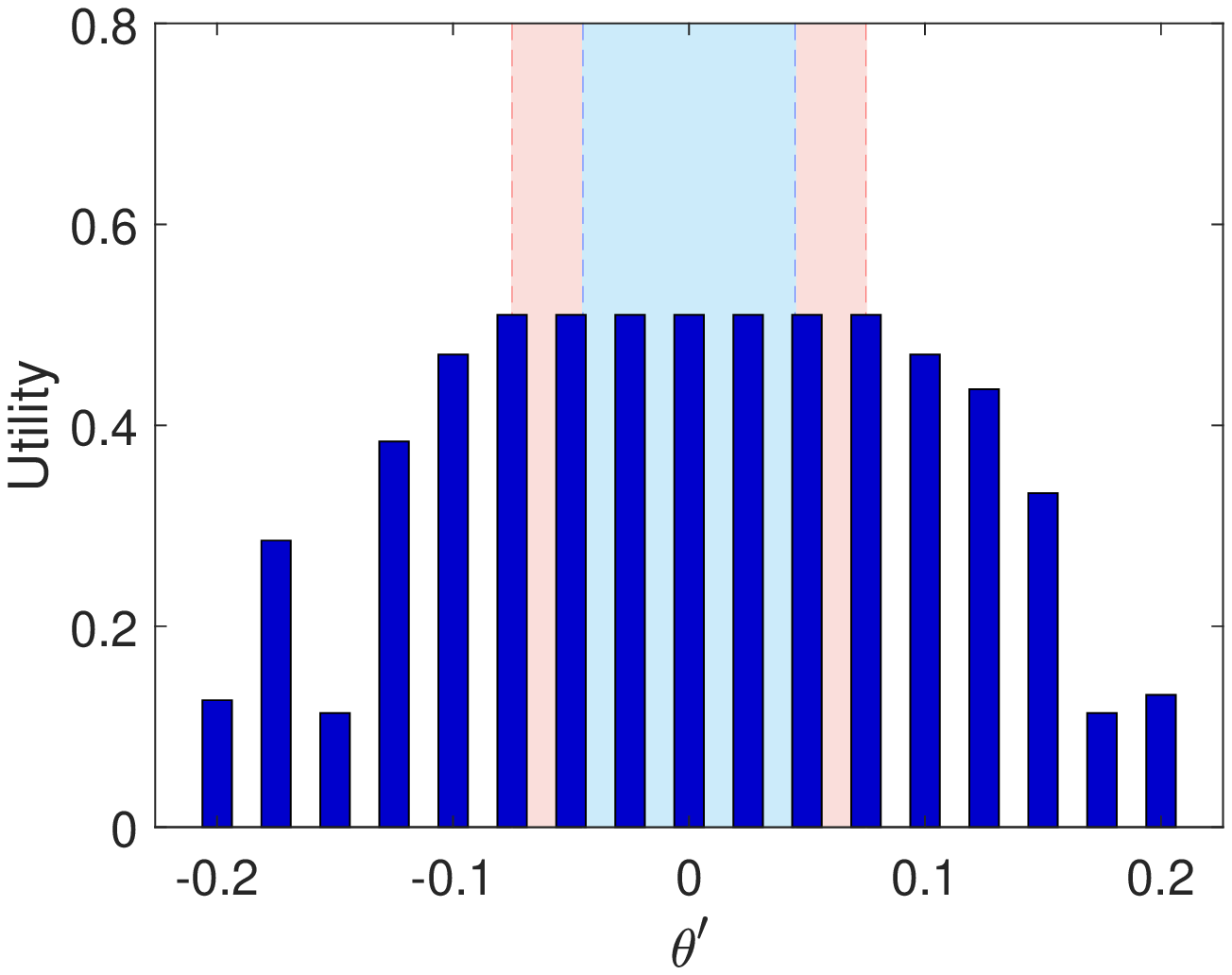}}
 \subfigure[Utilities of the $5$th follower]
 { \label{fi::terr5}
  \includegraphics[width=4.8cm]{./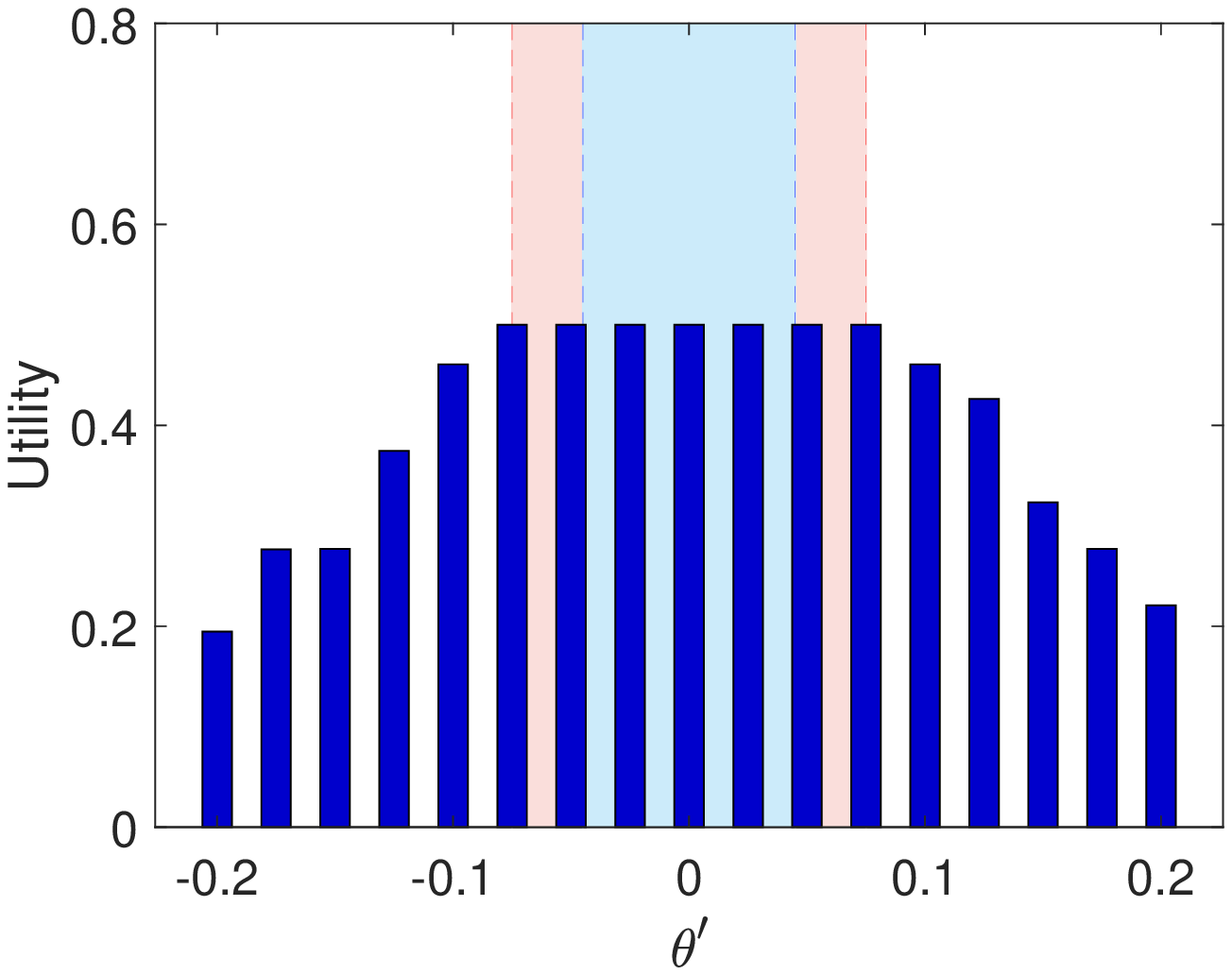}}
   \subfigure[Utilities of the $6$th follower]
 { \label{fi::terr6}
  \includegraphics[width=4.8cm]{./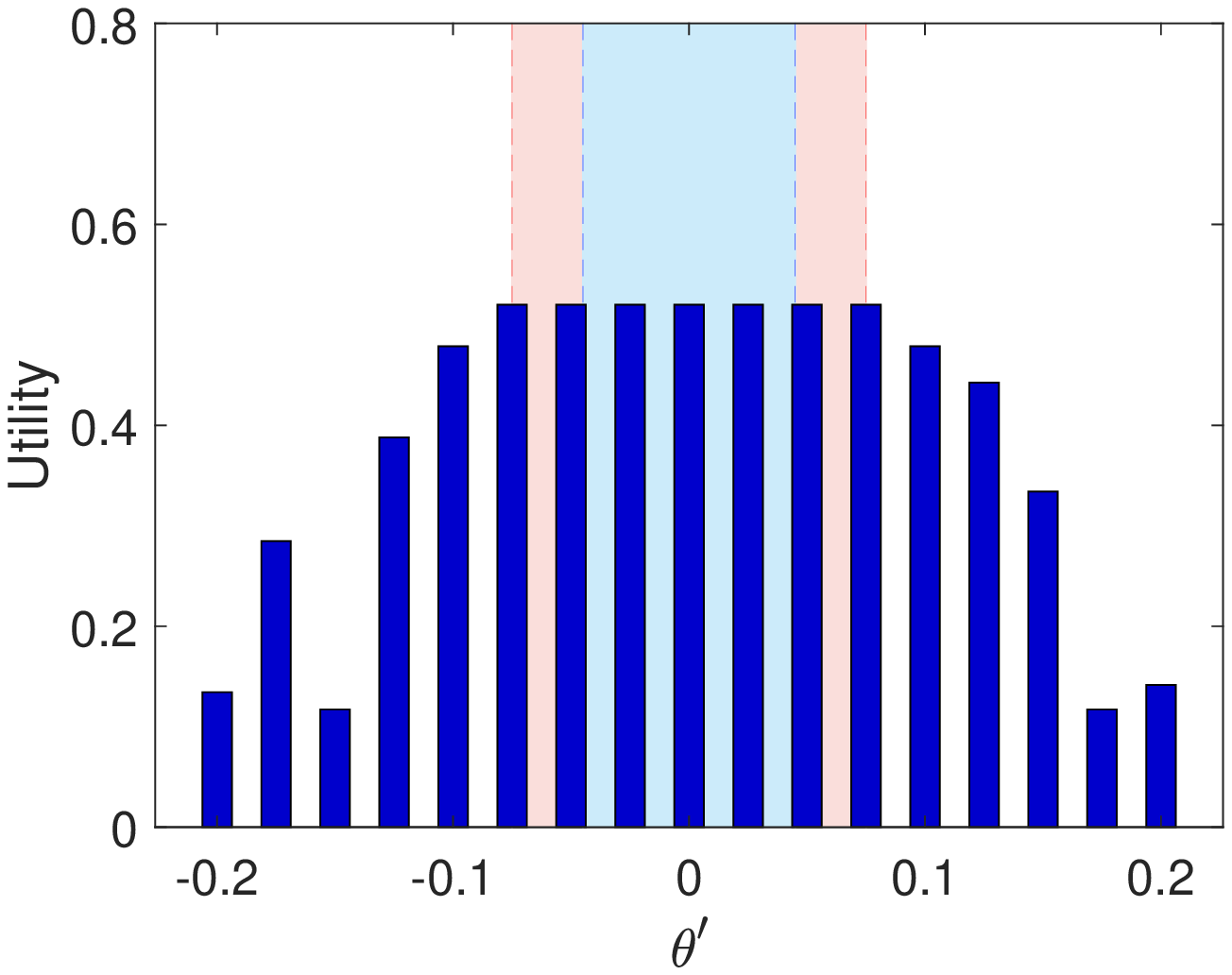}}
 \caption{Utilities of followers in the counterterrorism problem. $\theta'$ denotes the followers' false observation of the parameter, which affects success rate $p_{i,k}(\theta')$ for all $i,k$. The light blue region describes robust bounds of MSSE according to Theorem \ref{th::robust-followers-1}, and the light red region shows bounds referring to $\theta'$, where all followers' utilities are invariant in this instance.}
\label{fi::terrorutility}
 \end{figure*}

Fig. \ref{fi::terrorutility} shows the utilities of all followers,  where the $x$-axis represents the value of $\theta'$ and the $y$-axis is for the true utility of each follower under the observation $\theta'$. The blue cylinders are  followers' true utilities when they select the MSSE strategy under different $\theta'$. 
In Fig. \ref{fi::terrorutility}, the light blue region is in $|\theta'|\leqslant 0.045$ and the light red region is in $|\theta'|\leqslant 0.075$. 
 Actually, the utility under $\theta'=0$ is the real one with no misperception. Notice that all followers' utilities are unchanged when $|\theta'|\leqslant 0.075$, which is  consistent with Theorem \ref{th::robust-followers-1} since $\Delta\theta=0.045<0.075$.

\subsection{Robustness of  DSSE: in CPS}

Similar to the CPS with two players \cite{schlenker2018deceiving}, we consider one network administrator (leader) and one hacker (follower). The follower  invades the  leader with many attack methods such as `malware', `web-based attacks', `denial-of-service', `malicious insiders', `phishing and social enginering', `malicious code', `stolen devices', `ransomware', and `botnets'. Regard the $9$ attack methods as $9$ targets such as the first target for `malware'. Each player has \$$1$ million budgets, \text{i.e.}, $R_l=1$, $R_f=1$ and they allocate funds to the $9$ targets.
Denote  $U_l^c(t_k),U_l^u(t_k),U_f^c(t_k),U_f^u\in [0,2.5]$ as the values for different targets. Moreover, the network administrator makes some observable properties of a system such as TCP/IP stack appear different from what it actually is, and then the hacker probes the system. Concretely, denote $\Theta=[-1,1]$ as the deceptive set and $\theta_0=0$ as the true value. For $\theta'\in\Theta$, $U_f^u(\theta',t_k)=U_f^u(t_k)+d_{k}\theta'^2$ and $ U_f^c(\theta',t_k)=U_f^c(t_k)$ are utilities perceived by the follower, where $d_{k}$ is  generated in the range $D\subset\mathbb{R}$.


Fig. \ref{fi::cyberutility} shows the utilities of the leader with $D=(0,1),(0,2)$, and $(0,3)$, respectively.
The blue cylinders are the leader's utilities if the leader  deceives as  $\theta'$. 
The light blue regions  are in $|\theta'|\leqslant0.45$ in  \ref{fi::cyber1}, $|\theta'|\leqslant0.225$ in  \ref{fi::cyber2} and $|\theta'|\leqslant0.15$ in  \ref{fi::cyber3}. Also, the light red regions are in $|\theta'|\leqslant0.9$ , $|\theta'|\leqslant0.6$, and  $|\theta'|\leqslant0.5$, respectively.  
Besides, the blue cylinder under $\theta'=0$ is the utility if the leader does not deceive. Notice that in \ref{fi::cyber1}, the leader's utility under $|\theta'|\leqslant 0. 9$ is no larger than that under $\theta'=0$. It is  consistent with Theorem \ref{th::robust-leader-1} that the  DSSE strategy is robust for the leader since $\Delta\theta= 0.45<0.9$.  Moreover, in Fig. \ref{fi::cyber1}, if the leader wishes to benefit more from deception,  the deception strategy   needs to exceed $|\theta'|\geqslant0.9>\Delta\theta$.  Similar conclusions can be found in  Fig. \ref{fi::cyber2} and  \ref{fi::cyber3}. In fact, the robust boundary decreases as the bound of the parameter set $D$ increases, which is also consistent with Theorem  \ref{th::robust-leader-1}.
\section{conclusions}
In this paper, we have investigated the  SLMF Stackelberg security game by virtue of the second-level Stackelberg hypergame. We  have provided a   novel criterion  to evaluate both the strategic and cognitive stability of games with misinformation  based on HNE. Moreover, we have  provided two different stable conditions to connect MSSE and   DSSE  with   HNE.  Also, we have analyzed the influences of misperception and deception by the robustness of the MSSE and DSSE strategies. Finally,  we have presented numerical  experiments for the   validity  and  broad applicability of our results.

 \begin{figure*}
\centering
\subfigure[$D=(0,1).$]{
\label{fi::cyber1}
 \includegraphics[width=4.8cm]{./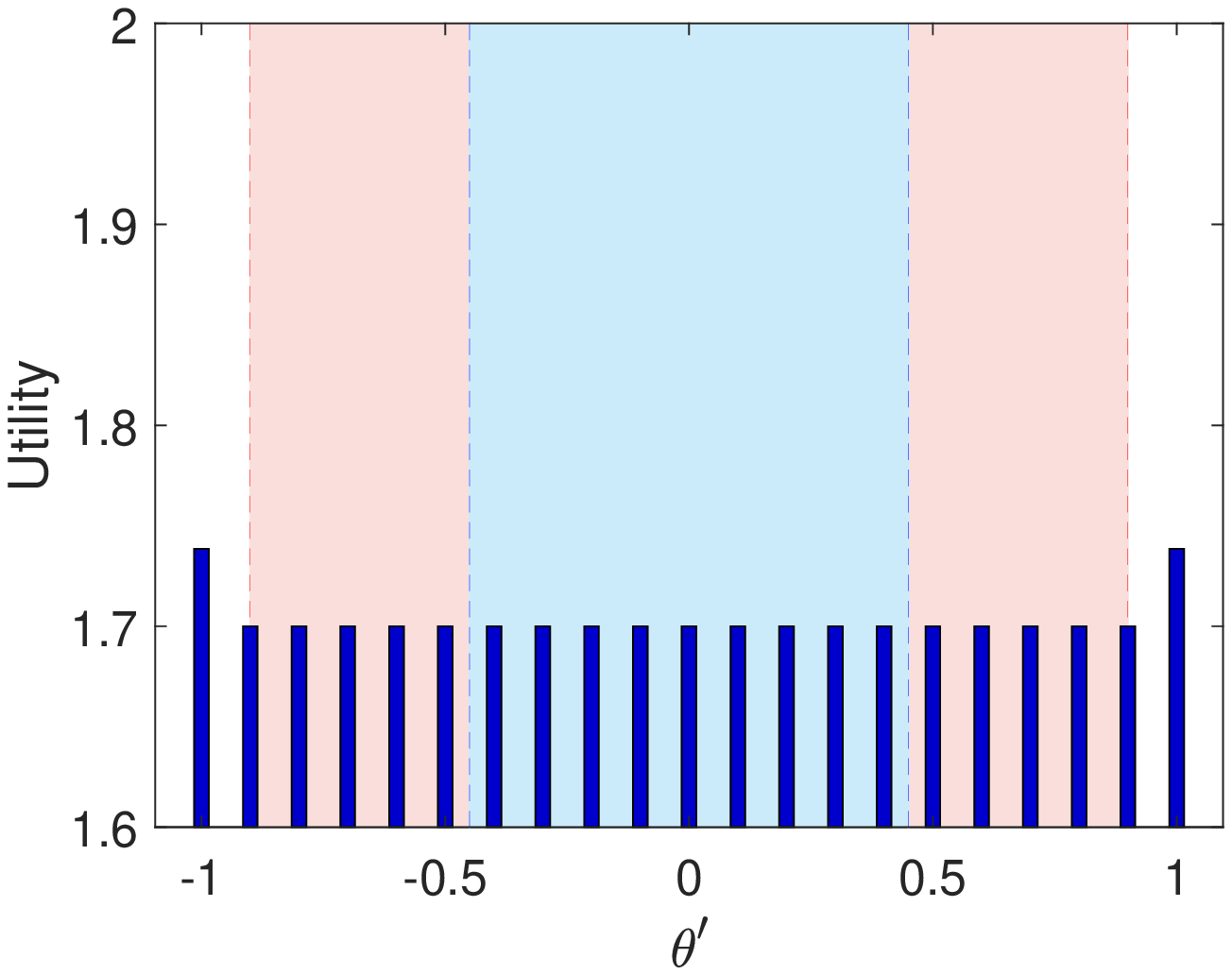}}
 \subfigure[$D=(0,2).$]
 { \label{fi::cyber2}
  \includegraphics[width=4.8cm]{./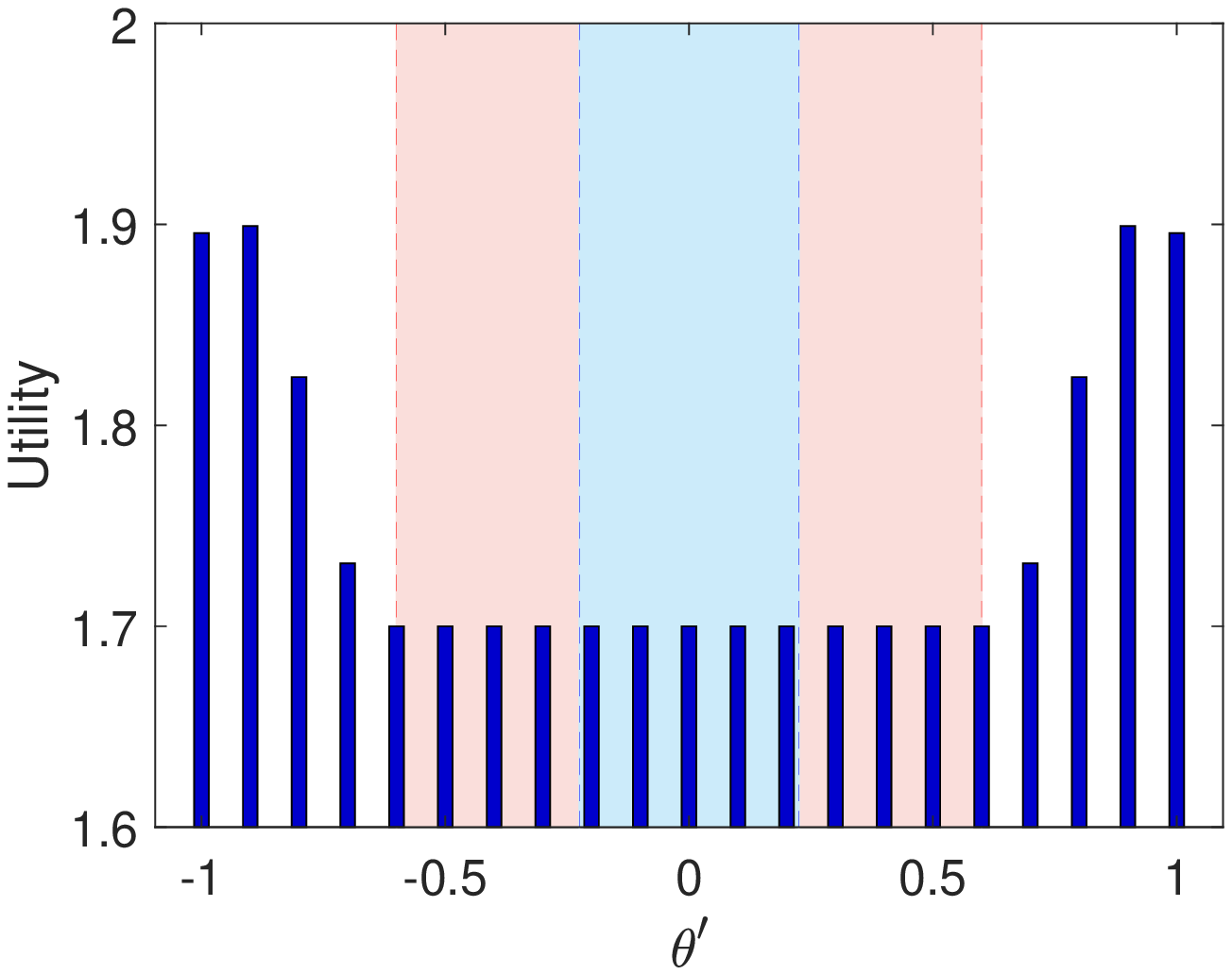}}
   \subfigure[$D=(0,3).$]
 { \label{fi::cyber3}
  \includegraphics[width=4.8cm]{./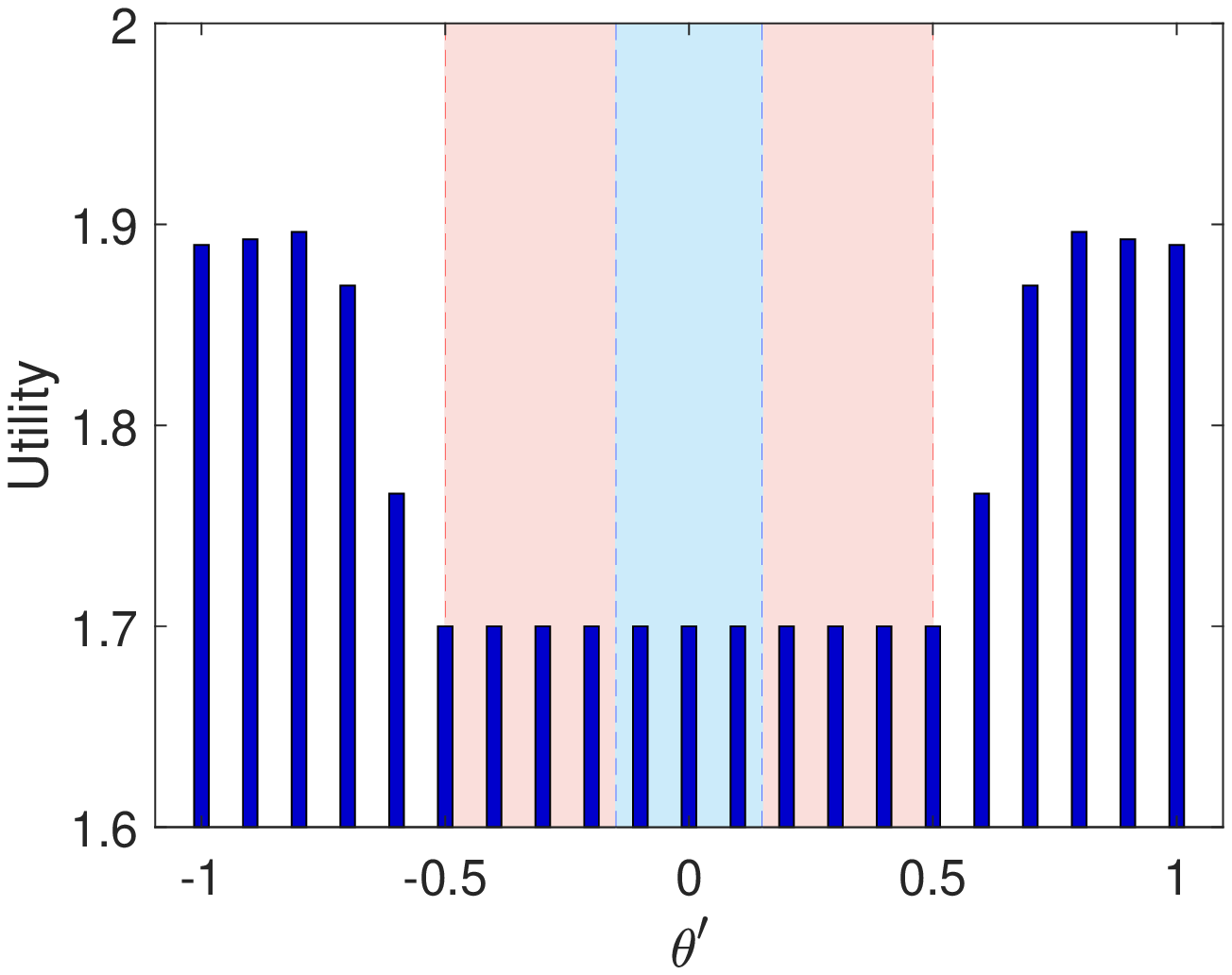}}
 \caption{Utilities of the leader in CPS. $\theta'$ leads to followers’ perceived utility $U_f^u(\theta',t_k)$ for all $k$. The light blue region describes robust bounds of DSSE according to Theorem \ref{th::robust-leader-1}, and the light red region describes bounds referring to $\theta'$, where all followers' utilities are invariant in this instance.} 
\label{fi::cyberutility}
 \end{figure*}

\appendices

\section{Proof of Theorem \ref{th::h1}}\label{ap::th::h1}

Denote $E_i(x,\theta')\!=\!\max\limits_{y_i\in \Omega_i}\! U_i(x,y_i,\theta')$,  $E(x,\theta')\!=\!\sum\limits_{i=1}^n \!E_i(x,\theta')$, and   $E^*(\theta')=\min\limits_{x\in \Omega_l} E(x,\theta')$. The leader's strategy $x\in\Omega_l$ is said to be a \textit{Minimax Strategy} if  $E(x,\theta')=E^*(\theta')$. Then the following proof consists of three steps. Step 1 shows the relationship between the leader's utility function and   followers' ones.  Step 2 reveals that $x_{\textit{\tiny MSSE}}$ is a leader's \textit{Minimax Strategy}. Step 3 shows that $(x_{\textit{\tiny MSSE}},\boldsymbol{y}_{\textit{\tiny MSSE}})$ is   HNE.

\textbf{Step 1:}
Since $({\boldsymbol{y}}',\lambda)\in \text{SOL}(\boldsymbol{y},\theta')$,  $\lambda>0$, and  $A_1(\theta'){\boldsymbol{y}}'=\lambda B \boldsymbol{y}$. Thus,
$$
\sum\limits_{i=1}^n(\boldsymbol{y}')_i^k\frac{U_i^u(\theta',t_k)-U_i^c(\theta',t_k)}{U_l^c(t_k)-U_l^u(t_k)}=\lambda \sum\limits_{i=1}^ny_i^k,\  \forall k=1,\dots,K.
$$
Then
$$
\begin{aligned}
&U_l(x,\boldsymbol{y})-U_l(x',\boldsymbol{y})\\
=& \sum\limits_{k=1}^K\sum\limits_{i=1}^ny_i^k\big(x^k-(x')^k\big)\big(U_l^c(t_k)-U_l^u(t_k)\big)\\
=&-\frac{1}{\lambda}  \sum\limits_{k=1}^K\sum\limits_{i=1}^n(\boldsymbol{y}')_i^k\big(x^k-(x')^k\big)\big(U_i^c(\theta',t_k)-U_i^u(\theta',t_k)\big).
\end{aligned}
$$
Clearly,
$$
\begin{aligned}
&\sum\limits_{i=1}^nU_i\big(x,\boldsymbol{y}'_i,\theta'\big)-\sum\limits_{i=1}^nU_i\big(x',\boldsymbol{y}'_i,\theta'\big)\\
=&\sum\limits_{k=1}^K\sum\limits_{i=1}^n(\boldsymbol{y}')_i^k\big(x^k-(x')^k\big)\big(U_i^c(\theta',t_k)-U_i^u(\theta',t_k)\big).
\end{aligned}
$$
Therefore,
$$
\begin{aligned}
U_l\!(x\!,\boldsymbol{y})\!-\!U_l\!(x'\!,\boldsymbol{y})\!=\! \frac{1}{\lambda}\big(\!\sum\limits_{i=1}^n\!U_i\!(x'\!,(\!\boldsymbol{y}')_i\!,\theta')\!\!-\!\!\sum\limits_{i=1}^n\!U_i\!(x\!,(\!\boldsymbol{y}')_i\!,\theta')\!\big).
\end{aligned}
$$
Since $\lambda >0$, $ U_l(x,\boldsymbol{y})>(=)U_l(x',\boldsymbol{y})$ for $x,x'\in \Omega_l $, and $\boldsymbol{y}\in \mathbf{\Omega}_f $
if and only if
$\sum\limits_{i=1}^nU_i\big(x,\boldsymbol{y}'_i,\theta'\big)<(=)\sum\limits_{i=1}^nU_i\big(x',\boldsymbol{y}'_i,\theta'\big)$ for  $x,x'\in \Omega_l $, $\boldsymbol{y}\in \mathbf{\Omega}_f , \theta' \in \mathbb{R}^m$, and $({\boldsymbol{y}}', \lambda) \in \text{SOL}(\boldsymbol{y},\theta')$.

\textbf{Step 2:} 
By the definition of $E^*$, $E(x_{\textit{\tiny MSSE}},\theta')\geqslant E^*(\theta')$ is always true.  
Suppose
\begin{equation}
\begin{aligned}\label{eq::h1pf-wq0}
E(x_{\textit{\tiny MSSE}},\theta')>E^*(\theta').
\end{aligned}
\end{equation}
Consider   $x^*$  as the leader's  \textit{Minimax Strategy}, where $E(x^*,\theta')=E^*(\theta')$.
For $(\boldsymbol{y}', \lambda) \in \text{SOL}(\boldsymbol{y}_{\textit{\tiny MSSE}},\theta')$,
$$
\sum\limits_{i=1}^n U_i(x^*,\boldsymbol{y}'_i,\theta')\leqslant E^*(\theta').
$$
Denote $S(\boldsymbol{y})=\{(i,k)|y_i^k\neq0,i\in\mathbf{P}, k=1,\dots,K\}$. 
For any $ (i, k_1),(i,k_2)\in S(\boldsymbol{y}_{\textit{\tiny MSSE}})$, the target $k_1$ and target $k_2$ have the same appeal to the $i$th follower.
For any $(i, k)\in S(\boldsymbol{y}_{\textit{\tiny MSSE}})$, set
\begin{equation}\label{eq::th1pf-Mix}
\begin{aligned}
M_i(x_{\textit{\tiny MSSE}})\! =\! x_{\textit{\tiny MSSE}}^k U_i^c(\theta',t_k)\!+\!(R_l\!-\!x_{\textit{\tiny MSSE}}^k) U_i^u(\theta',t_k).
\end{aligned}
\end{equation}
Additionally,  $A_2 \boldsymbol{y}'=0$ implies $(\boldsymbol{y}')_i^k=0$ if $(\boldsymbol{y}_{\textit{\tiny MSSE}})_i^k=0$. Then $S(\boldsymbol{y}')\subseteq S(\boldsymbol{y}_{\textit{\tiny MSSE}})$. Therefore, for any $(i,k)\in S(\boldsymbol{y}')$, (\ref{eq::th1pf-Mix}) also holds. Then
$$
\begin{aligned}
&U_i(x_{\textit{\tiny MSSE}},\boldsymbol{y}'_i,\theta')\\
=&\sum\limits_{k=1}^n (\boldsymbol{y}')_i^k\big(x_{\textit{\tiny MSSE}}^k U_i^c(\theta',t_k)+(R_l-x_{\textit{\tiny MSSE}}^k) U_i^u(\theta',t_k)\big)\\
=&\sum\limits_{(\boldsymbol{y}')_i^k\neq 0} M_i(x_{\textit{\tiny MSSE}}) = R_i M_i(x_{\textit{\tiny MSSE}}).
\end{aligned}
$$
Similarly, 
$
\begin{aligned}
U_i(x_{\textit{\tiny MSSE}},\boldsymbol{y}_{\textit{\tiny MSSE}},\theta')
=R_iM_i(x_{\textit{\tiny MSSE}})
\end{aligned}
$. Thus,
\begin{equation}\label{eq::th1-dsdadasdaaf}
\begin{aligned}
U_i\big(x_{\textit{\tiny MSSE}},\boldsymbol{y}'_i,\theta'\big)=U_i(x_{\textit{\tiny MSSE}},(\boldsymbol{y}_{\textit{\tiny MSSE}})_i,\theta').
\end{aligned}
\end{equation}
As a result,  $\boldsymbol{y}'$ is the   followers' best respose strategy to $x_{\textit{\tiny MSSE}}$ under the observation $\theta'$.
Then
$$
\begin{aligned}
E(x_{\textit{\tiny MSSE}},\theta')=&\sum\limits_{i=1}^n \max\limits_{y_i\in \Omega_i}U_i(x_{\textit{\tiny MSSE}},y_i,\theta')=\sum\limits_{i=1}^n U_i(x_{\textit{\tiny MSSE}},\boldsymbol{y}'_i,\theta').
\end{aligned}
$$

Consequently, 
$
\sum\limits_{i=1}^n U_i(x^*,\boldsymbol{y}'_i,\theta')<\sum\limits_{i=1}^n U_i(x_{\textit{\tiny MSSE}},\boldsymbol{y}'_i,\theta').
$
According to \cite{korzhyk2011stackelberg}, there exists $x'$ such that $\boldsymbol{y}_{\textit{\tiny MSSE}}\in \boldsymbol{\textbf{BR}}(x',\theta')$ and $
\sum\limits_{i=1}^n U_i(x',\boldsymbol{y}'_i,\theta')<\sum\limits_{i=1}^n U_i(x_{\textit{\tiny MSSE}},\boldsymbol{y}'_i,\theta')$. Recalling \textbf{Step 1},  
$
U_l(x',\boldsymbol{y}_{\textit{\tiny MSSE}})>U_l(x_{\textit{\tiny MSSE}},\boldsymbol{y}_{\textit{\tiny MSSE}}),
$ which contradicts that $x_{\textit{\tiny MSSE}}$ is the leader's MSSE strategy. Thus,  (\ref{eq::h1pf-wq0}) does not hold. As a result,
$
E(x_{\textit{\tiny MSSE}},\theta')=E^*(\theta'),
$
which indicates that $x_{\textit{\tiny MSSE}}$ is the leader's \textit{Minimax Strategy} and $\boldsymbol{y}_{\textit{\tiny MSSE}}$ is the corresponding strategies of followers.

\textbf{Step 3}: 
Note that $E(x_{\textit{\tiny MSSE}},\theta')=E^*(\theta')$ and $\boldsymbol{y}_{\textit{\tiny MSSE}}\in \boldsymbol{\textbf{BR}}(x_{\textit{\tiny MSSE}},\theta')$. By  (\ref{eq::th1-dsdadasdaaf}),  $\boldsymbol{y}'\in \boldsymbol{\textbf{BR}}(x_{\textit{\tiny MSSE}},\theta')$.
Define another associated zero-sum game $\bar{\mathcal{G}}$ with two players denoted as $\{1,2\}$ in $\bar{\mathcal{G}}$. The  strategy set of player $1$ is $\Omega_l$ and the strategy set of player $2$ is $\mathbf{\Omega}_f$. For any $x\in \Omega_l, \boldsymbol{y}\in \mathbf{\Omega}_f$,  $\bar{U}_1(x,\boldsymbol{y},\theta')=-\sum\limits_{i=1}^nU_i(x,y_i,\theta')$ and $\bar{U}_2(x,\boldsymbol{y},\theta')=\sum\limits_{i=1}^nU_i(x,y_i,\theta')$ are the  utility functions of player $1$ and player $2$, respectively. Each player aims at  maximizing its utility functions.

For any $\boldsymbol{y}\in \mathbf{\Omega}_f$, since $\boldsymbol{y}'_i\in \text{BR}_i(x_{\textit{\tiny MSSE}},\theta')$,
$\boldsymbol{y}'$ is the best response strategy to $x_{\textit{\tiny MSSE}}$ in $\bar{\mathcal{G}}$. Moreover,
$$
\begin{aligned}
E(x,\theta')=&\sum\limits_{i=1}^n\max\limits_{y_i\in \Omega_i} U_i(x,y_i,\theta')=\max\limits_{y\in \mathbf{\Omega}_f}\bar{U}_2(x,\boldsymbol{y},\theta').
\end{aligned}
$$
Since $E(x_{\textit{\tiny MSSE}},\theta')=E^*(\theta')$,
$$
x_{\textit{\tiny MSSE}}\in \mathop{\text{argmin}}_{x\in\Omega_l} \max\limits_{y\in \mathbf{\Omega}_f}\bar{U}_2(x,\boldsymbol{y},\theta').
$$
Then $x_{\textit{\tiny MSSE}}$ is the \textit{Minimax Strategy} in $\bar{\mathcal{G}}$. By Theorem 3.2 in \cite{fudenberg1991game}, $(x_{\textit{\tiny MSSE}},\boldsymbol{y}')$ is also  a NE of $\bar{\mathcal{G}}$. Then $x_{\textit{\tiny MSSE}}$ is also the best response strategy to $\boldsymbol{y}'$ in $\bar{\mathcal{G}}$. For any $x\in \Omega_l$,
$
\begin{aligned}
\bar{U}_1(x_{\textit{\tiny MSSE}},\boldsymbol{y}',\theta')\geqslant \bar{U}_1(x,\boldsymbol{y}',\theta').
\end{aligned}
$
Therefore,
$
\sum\limits_{i=1}^nU_i\big(x_{\textit{\tiny MSSE}},\boldsymbol{y}'_i,\theta'\big)\leqslant \sum\limits_{i=1}^nU_i\big(x,\boldsymbol{y}'_i,\theta'\big).
$
Then by \textbf{Step 1},
\begin{equation}\label{eq::th1::renyixing}
\begin{aligned}
U_l(x_{\textit{\tiny MSSE}},\boldsymbol{y}_{\textit{\tiny MSSE}},\theta')\geqslant U_l(x,\boldsymbol{y}_{\textit{\tiny MSSE}},\theta').
\end{aligned}
\end{equation}

Because  (\ref{eq::th1::renyixing}) holds for any $x\in \Omega_l$, $x_{\textit{\tiny MSSE}}$ is the best response strategy to $\boldsymbol{y}_{\textit{\tiny MSSE}}$ in $\mathcal{H}^2(\theta')$. Then $(x_{\textit{\tiny MSSE}},\boldsymbol{y}_{\textit{\tiny MSSE}})$ is   HNE of $\mathcal{H}^2(\theta')$. \hfill $\square$ \par

\section{Proof of Theorem \ref{th::h2}}\label{ap::th::h2}
Clearly,  there exists  $\theta^{*}\in \Theta$ such that $ (\boldsymbol{y}^*)_i^{K_{max}}=R_i$, $x^*\in \Omega_l$, and $\boldsymbol{y}^*\in \boldsymbol{\textbf{BR}}(x^*,\theta^*)$, where $(x^*,\boldsymbol{y}^*)$ is the decision result under the observation $\theta $.
Thus, $\text{SOL}(\boldsymbol{y}^*,\theta^*)$ has a solution
$$
\begin{aligned}
\left\{
  \begin{array}{l}
\lambda = \frac{\sum\limits_{i=1}^n R_i\frac{ U_i^u(\theta^*,t_{K_{max}})-U_i^c(\theta^*,t_{K_{max}})}{ U_l^c(t_{K_{max}})- U_l^u(t_{K_{max}})}}{\sum\limits_{i=1}^n R_i},\\
(\boldsymbol{y}^*)_i^{K_{max}}=R_i, \forall i\in\mathbf{P},\\
(\boldsymbol{y}^*)_i^k=0, \forall l\neq {K_{max}}, i\in\mathbf{P}.
  \end{array}
\right.
\end{aligned}
$$
By Theorem \ref{th::h1}, $x^*$ is the best response strategy to $\boldsymbol{y}^*$ and $(x^*)^{K_{max}}=R_l$.
Thus,
$$
\begin{aligned}
&U_l(x^*,\boldsymbol{y}^*)\\
=& \sum\limits_{k=1}^K\! \big(\sum\limits_{i=1}^n(\boldsymbol{y}^*)^k_i\big)\!\big((x^*)^k U_l^c(t_k)\!+\!(R_l\!-\!(x^*)^k) U_l^u(t_k)\big)\\
=& \sum\limits_{i=1}^nR_i R_lU_l^c(t_{K_{max}})=  R_lU_l^c(t_{K_{max}})(\sum\limits_{i=1}^nR_i).
\end{aligned}
$$
Since $K_{max}\in \mathop{\text{argmax}}\limits_{k\in K} U_l^c(t_k)$, for all $k=1,\dots,K$,
$
U_l^c(t_{K_{max}})\geqslant U_l^c(t_k).
$
By Assumption \ref{as::ass1},
$
U_l^c(t_{K_{max}})\geqslant U_l^u(t_k).
$
Then, for any $\theta\in \Theta, \boldsymbol{y}\in \mathbf{\Omega}_f,x\in \Omega_l$,  we have
$$
\begin{aligned}
&U_l(x,\boldsymbol{y})\\
\leqslant & \sum\limits_{k=1}^K \sum\limits_{i=1}^ny^k_i\big(x^k U_l^c(t_{K_{max}})+(R_l-x^k) U_l^c(t_{K_{max}})\big)\\
=& \sum\limits_{k=1}^K \big(x^k U_l^c(t_{K_{max}})+(R_l-x^k) U_l^c(t_{K_{max}})\big)\sum\limits_{i=1}^ny^k_i\\
=&\sum\limits_{k=1}^KR_l U_l^c(t_{K_{max}})\sum\limits_{i=1}^ny^k_i=  R_lU_l^c(\theta,t_{K_{max}})(\sum\limits_{i=1}^nR_i).
\end{aligned}
$$
Thus, $U_l(x,\boldsymbol{y}) \leqslant U_l(x^*,\boldsymbol{y}^*)$ for any $x\in \Omega_l,\boldsymbol{y}\in\mathbf{\Omega}_f$.
Therefore,  $ \theta^* \in\mathop{\text{argmax}}\limits_{\theta' \in \Theta}\max\limits_{x\in \Omega_l, \boldsymbol{y}\in \boldsymbol{\textbf{BR}}(x,\theta')}U_l(x,\boldsymbol{y} )$ is the optimal deception. Also, $(x^*,\boldsymbol{y}^*)$ is a DSSE   of $\mathcal{H}^2(\Theta)$. Besides, $x^*$ is the best response strategy to $\boldsymbol{y}^*$ since $U_l(x,\boldsymbol{y}^*)\leqslant U_l(x^*,\boldsymbol{y}^*)$ for any $x\in \Omega_l$. Thus, the conclusion follows. \hfill $\square$ 

\section{Proof of Theorem \ref{th::robust-followers-1}}\label{ap::th::robust-followers-2}

1) For any $\alpha_i\in\Gamma_i^1(x_{\textit{\tiny SSE}},\theta_0), l\notin\Gamma_i^1(x_{\textit{\tiny SSE}},\theta_0)$,
$
\begin{aligned}
g_i(x_{\textit{\tiny SSE}},\theta_0,\alpha_i)&>g_i(x_{\textit{\tiny SSE}},\theta_0,l).
\end{aligned}
$
Since $U_i^c(\theta,t_{\alpha_i})$ and $U_i^c(\theta,t_{\alpha_i})$ are differentiable in $\theta\in \Theta$ by Assumption \ref{as::ass5}, there is a convex set  $\delta_\theta^i $ such that, for all $ \theta \in \delta_{\theta}^i$,
$
\begin{aligned}
g_i(x_{\textit{\tiny SSE}},\theta,\alpha_i)&>g_i(x_{\textit{\tiny SSE}},\theta,l).
\end{aligned}
$
Let $\delta_\theta=\cap_{i=1}^{n}\delta_\theta^i$. Then $\text{int}(\delta_\theta)$ is nonempty.
For any $\theta' \in \delta_\theta $, the $i$th follower attacks the target in $\Gamma_i$, which leads to  the same profit as $\boldsymbol{y}_{\textit{\tiny SSE}}$. Thus,
$U_i\big(x_{\textit{\tiny SSE}},\!(\boldsymbol{y}_{\textit{\tiny SSE}})_i,\!\theta_0\big)\!=\!U_i\big(x_{\textit{\tiny SSE}},\!(\boldsymbol{y}_{\textit{\tiny MSSE}})_i(\theta'),\!\theta_0\big).$

2) For any $\alpha_i\in\Gamma_i^1(x_{\textit{\tiny SSE}},\theta_0),\beta_i\in\Gamma_i^2(x_{\textit{\tiny SSE}},\theta_0),l\notin\Gamma_i^1(x_{\textit{\tiny SSE}},\theta_0)$, we have $g_i(x_{\textit{\tiny SSE}},\theta_0,\alpha_i)>g_i(x_{\textit{\tiny SSE}},\theta_0,l)$ and $g_i(x_{\textit{\tiny SSE}},\theta_0,\beta_i)\geqslant g_i(x_{\textit{\tiny SSE}},\theta_0,l)$.
Since $U_i^c(\theta,t_k)$ and $U_i^u(\theta,t_k)$ are $\varsigma$-Lipschitz continuous  in $\theta\in \Theta$, for any $\theta,\theta'\in\Theta$,
$$
\begin{aligned}
|U_i^c(\theta,t_k)-U_i^c(\theta',t_k)|&\leqslant\varsigma \parallel\theta-\theta'\parallel,\\
|U_i^u(\theta,t_k)-U_i^u(\theta',t_k)|&\leqslant\varsigma \parallel\theta-\theta'\parallel.
\end{aligned}
$$
Thus,
$$
\begin{aligned}
&|g_i(x_{\textit{\tiny SSE}},\theta,k)-g_i(x_{\textit{\tiny SSE}},\theta',k)|\\
=&|x_{\textit{\tiny SSE}}^k\big(U_i^c(\theta,t_k)-U_i^c(\theta',t_k)\big)\\
&+(R_l-x_{\textit{\tiny SSE}}^k)|U_i^u(\theta,t_k)-U_i^u(\theta',t_k)|\\
\leqslant&x_{\textit{\tiny SSE}}^k\varsigma \parallel\theta-\theta'\parallel+(R_l-x_{\textit{\tiny SSE}}^k)\varsigma \parallel\theta-\theta'\parallel\\
=&\varsigma R_l \parallel\theta-\theta'\parallel.
\end{aligned}
$$
Therefore, for any $k$, $g_i(x_{\textit{\tiny SSE}},\theta,k)$ is $\varsigma R_l$-Lipschitz continuous in $\theta\in \Theta$. Then
\begin{equation}\label{eq::subsset}
\begin{aligned}
g_i(x_{\textit{\tiny SSE}},\theta,l)\leqslant& g_i(x_{\textit{\tiny SSE}},\theta_0,l)+\varsigma R_l \parallel\theta-\theta'\parallel\\
\leqslant& g_i(x_{\textit{\tiny SSE}},\theta_0,\beta_i)+\varsigma R_l \parallel\theta-\theta'\parallel.
\end{aligned}
\end{equation}

Also, since $U_i^c(\theta,t_k)$ and $U_i^u(\theta,t_k)$ are convex and differentiable in $\theta$, $g_i(x_{\textit{\tiny SSE}},\theta,k)$ is convex in $\theta\in \Theta$. Thus,
$$
\begin{aligned}
g_i(x_{\textit{\tiny SSE}},\theta,k)\!-\!g_i(x_{\textit{\tiny SSE}},\theta_0,k)\!\geqslant\!\bigtriangledown_{\theta} g_i(x_{\textit{\tiny SSE}},\theta_0,k)^T\!(\theta\!-\!\theta_0)\!.
\end{aligned}
$$
Take $\bigtriangledown_{i}^{\alpha_i}=\bigtriangledown_{\theta} g_i(x_{\textit{\tiny SSE}},\theta_0,\alpha_i)$. If $(\bigtriangledown_{i}^{\alpha_i})^T\bigtriangledown_{i}^{\alpha_i}\neq0$, then, with taking $ q_\theta=\frac{(\bigtriangledown_{i}^{\alpha_i})^T(\theta-\theta_0)}{(\bigtriangledown_{i}^{\alpha_i})^T\bigtriangledown_{i}^{\alpha_i}} $, $|q_\theta|\leqslant\parallel\theta-\theta_0\parallel $. Thus,
$$
\begin{aligned}
&g_i(x_{\textit{\tiny SSE}},\theta,\alpha_i)-g_i(x_{\textit{\tiny SSE}},\theta_0,\alpha_i)\\
\geqslant &\bigtriangledown_{\theta} g_i(x_{\textit{\tiny SSE}},\theta_0,\alpha_i)^T(\theta-\theta_0)\\
=&-q_\theta (\bigtriangledown_{i}^{\alpha_i})^T\bigtriangledown_{i}^{\alpha_i}\\
\geqslant & -\parallel\theta-\theta_0\parallel(\bigtriangledown_{i}^{\alpha_i})^T\bigtriangledown_{i}^{\alpha_i}.
\end{aligned}
$$
Obviously, if $(\bigtriangledown_{i}^{\alpha_i})^T\bigtriangledown_{i}^{\alpha_i}=0$,
$$
\begin{aligned}
g_i(x_{\textit{\tiny SSE}},\theta,\alpha_i)\!-\!g_i(x_{\textit{\tiny SSE}},\theta_0,\alpha_i)
\!\geqslant \!0\!
=\!  -\!\parallel\theta\!-\!\theta_0\parallel(\bigtriangledown_{i}^{\alpha_i})^T\bigtriangledown_{i}^{\alpha_i}\!.
\end{aligned}$$
Recalling (\ref{eq::subsset}),
$$
\begin{aligned}
&g_i(x_{\textit{\tiny SSE}},\theta,\alpha_i)-g_i(x_{\textit{\tiny SSE}},\theta,l)\\
\geqslant &g_i(x_{\textit{\tiny SSE}},\!\theta_0,\!\alpha_i)\!-\! g_i(x_{\textit{\tiny SSE}},\!\theta_0,\!\beta_i)\!-\!\big((\bigtriangledown_{i}^{\alpha_i})^T\!\bigtriangledown_{i}^{\alpha_i}\!+\!\varsigma R_l\big)\! \parallel\!\theta\!-\!\theta'\!\parallel\!.
\end{aligned}
$$
Since $
\parallel\theta-\theta'\parallel<\Delta\theta
=\min\limits_{i\in\mathbf{P}}\frac{\hat{g}_i^1- \hat{g}^2_i}{\bigtriangledown^*_i+\varsigma R_l},
$
$$
\begin{aligned}
\parallel\theta-\theta'\parallel<
\frac{g_i(x_{\textit{\tiny SSE}},\theta_0,\alpha_i)-g_i(x_{\textit{\tiny SSE}},\theta_0,\beta_i)}{(\bigtriangledown_{i}^{\alpha_i})^T\bigtriangledown_{i}^{\alpha_i}+\varsigma R_l}.
\end{aligned}
$$
For any $\alpha_i \in \Gamma_i^1(x_{\textit{\tiny SSE}},\theta_0),l\notin \Gamma_i^1(x_{\textit{\tiny SSE}},\theta_0)$,
$$
g_i(x_{\textit{\tiny SSE}},\theta,\alpha_i)>g_i(x_{\textit{\tiny SSE}},\theta,l).
$$
For $i\in\mathbf{P}$,
$
\begin{aligned}
U_i\big(x_{\textit{\tiny SSE}},\!(\boldsymbol{y}_{\textit{\tiny SSE}})_i,\!\theta_0\big)\!=\!U_i\big(x_{\textit{\tiny SSE}},\!(\boldsymbol{y}_{\textit{\tiny MSSE}})_i(\theta'),\!\theta_0\big).
\end{aligned}
$
\hfill $\square$

\section{Proof of Theorem \ref{th::robust-leader-1}}\label{ap::th::robust-leader-1}
1) By Theorem \ref{th::robust-followers-1}, the leader does not change its strategy under $\delta_\theta$ according to \cite{ijcai2019-75}. Thus, the leader's profit does not change, and $U_l(x_{\textit{\tiny SSE}},\boldsymbol{y}_{\textit{\tiny SSE}})=U_l(x_{\textit{\tiny DSSE}},\boldsymbol{y}_{\textit{\tiny DSSE}}).$

2) By Assumption \ref{as::ass6}, $\Gamma_i^1(x_{\textit{\tiny SSE}},\theta_0)$ has  the unique element. Take $\alpha_i\in\Gamma_i^1(x_{\textit{\tiny SSE}},\theta_0),\beta_i\in\Gamma_i^2(x_{\textit{\tiny SSE}},\theta_0),l\notin\Gamma_i^1(x_{\textit{\tiny SSE}},\theta_0)$. As shown in the proof of Theorem \ref{th::robust-followers-1}, $g_i(x_{\textit{\tiny SSE}},\theta,l)$ is $\varsigma R_l$-Lipschitz continuous in $\theta\in \Theta$. Then
$$
\begin{aligned}
g_i(x_{\textit{\tiny SSE}},\theta,l)\leqslant& g_i(x_{\textit{\tiny SSE}},\theta_0,l)+\varsigma R_l \parallel\theta-\theta'\parallel.
\end{aligned}
$$
Since $\beta_i\in\Gamma_i^2(x_{\textit{\tiny SSE}},\theta_0)$,
$
\begin{aligned}
g_i(x_{\textit{\tiny SSE}},\theta_0,\beta_i)&\geqslant g_i(x_{\textit{\tiny SSE}},\theta_0,l).
\end{aligned}
$
Then 
$
g_i(x_{\textit{\tiny SSE}},\theta,l)
\leqslant g_i(x_{\textit{\tiny SSE}},\theta_0,\beta_i)+\varsigma R_l \parallel\theta-\theta'\parallel.
$ 
Also,
$$
\begin{aligned}
g_i(x_{\textit{\tiny SSE}},\theta,\alpha_i)\geqslant& g_i(x_{\textit{\tiny SSE}},\theta_0,\alpha_i)-\varsigma R_l \parallel\theta-\theta'\parallel.
\end{aligned}
$$
Therefore,
$$
\begin{aligned}
&g_i(x_{\textit{\tiny SSE}},\theta,\alpha_i)-g_i(x_{\textit{\tiny SSE}},\theta,l)\\
\geqslant &g_i(x_{\textit{\tiny SSE}},\theta_0,\alpha_i)- g_i(x_{\textit{\tiny SSE}},\theta_0,\beta_i)-2\varsigma R_l\parallel\theta-\theta'\parallel.
\end{aligned}
$$
For any $\theta$ with $\parallel\theta-\theta_0\parallel<\Delta\theta$, since
$
\begin{aligned}
\Delta\theta=\min\limits_{i\in\mathbf{P}}\frac{\hat{g}_i^1- \hat{g}^2_i}{2\varsigma R_l},
\end{aligned}
$
$$
\begin{aligned}
\parallel\theta-\theta_0\parallel<
\frac{g_i(x_{\textit{\tiny SSE}},\theta_0,\alpha_i)-g_i(x_{\textit{\tiny SSE}},\theta_0,\beta_i)}{2\varsigma R_l}.
\end{aligned}
$$
Therefore, for $i\in \mathbf{P}$,
\begin{equation}\label{eq::robustness-theta-deltatheta-2}
\begin{aligned}
g_i(x_{\textit{\tiny SSE}},\theta,l)<g_i(x_{\textit{\tiny SSE}},\theta,\alpha_i).
\end{aligned}
\end{equation}
According to \cite{ijcai2019-75} and Assumption \ref{as::ass6},  (\ref{eq::robustness-theta-deltatheta-2}) holds for any $x_{\textit{\tiny DSSE}}\in\Omega_l$
. Thus, the leader does not change its strategy under $\delta_\theta$, and $U_l(x_{\textit{\tiny SSE}},\boldsymbol{y}_{\textit{\tiny SSE}})=U_l(x_{\textit{\tiny DSSE}},\boldsymbol{y}_{\textit{\tiny DSSE}}).$ \hfill $\square$

\ifCLASSOPTIONcaptionsoff
  \newpage
\fi



%
\bibliographystyle{IEEEtran}

\bibliography{autosam}

%

\end{document}